\newtheorem{lem}{Lemma}
\newtheorem{theorem}{Theorem}
\newcommand{\Id}[1]{\boldsymbol{\mathbb{I #1}}}
\begin{document}

\preprint{APS/123-QED}

\title{Shadow poles in the alternative parametrization of $R$-matrix theory}% Force line breaks with \\

%\thanks{A footnote to the article title}%

\author{Pablo Ducru}
\email{p\_ducru@mit.edu ; pablo.ducru@polytechnique.org}
\altaffiliation[]{Also from \'Ecole Polytechnique, France. \& Schwarzman Scholars, Tsinghua University, China.}
\affiliation{%
Massachusetts Institute of Technology\\
Department of Nuclear Science \& Engineering\\
77 Massachusetts Avenue, Cambridge, MA, 02139 U.S.A.\\
}%

\author{Vladimir Sobes}%
\email{sobesv@utk.edu}
\affiliation{%
University of Tennessee\\
Department of Nuclear Engineering\\
1412 Circle Drive, Knoxville, TN, 37996, U.S.A.
}%

\author{Gerald Hale}
\email{ghale@lanl.gov}
\author{Mark Paris}%
\email{mparis@lanl.gov}
\affiliation{%
Los Alamos National Laboratory\\
Theoretical Division (T-2)\\
MS B283, Los Alamos, NM, 87454, U.S.A.\\
}%

\author{Benoit Forget}
\email{bforget@mit.edu}
\affiliation{%
Massachusetts Institute of Technology\\
Department of Nuclear Science \& Engineering\\
77 Massachusetts Avenue, Cambridge, MA, 02139 U.S.A.\\
}%

%\collaboration{MUSO Collaboration}%\noaffiliation

%\homepage{http://www.Second.institution.edu/~Charlie.Author}
%
% \author{Delta Author}
% \affiliation{%
%  Authors' institution and/or address\\
%  This line break forced with \textbackslash\textbackslash
% }%

%\collaboration{CLEO Collaboration}%\noaffiliation

\date{\today}% It is always \today, today,
%  but any date may be explicitly specified

\begin{abstract}

%PABLO: THE GIST\\
We discover new, hitherto unknown, shadow poles in Brune's alternative parametrization of R-matrix theory [C. R. Brune, Phys. Rev. C 66, 044611 (2002)].
Where these poles are, and how many, depends on how one continues R-matrix operators to complex wavenumbers (specially the shift $S$ and penetration $P$ functions). 
This has little consequence for the exact R-matrix formalism (past the last energy threshold), as we show one can still always fully reconstruct the scattering matrix with only the previously known alternative parameters (poles and corresponding resonance widths), for which there were as many poles as the number of levels $N_\lambda$. 
However, we generalize the alternative parametrization to the Reich-Moore formalism, and show that the choice of continuation is now critical as it changes the alternative parameters values (poles and residue widths are now complex). 
In order to establish nuclear libraries with alternative parameters, the nuclear community will thus have to decide what convention to adopt. We argue in favor of analytical continuation (against the legacy Lane and Thomas approach) in a follow-up article [P. Ducru, Phys. Rev. C, submitted (2020)]. 
We observe the first evidence of shadow poles in the alternative parametrization of R-matrix theory in isotope xenon $^{\mathrm{134}}\mathrm{Xe}$ spin-parity group $J^\pi = 1/2^{(-)}$, and show how they indeed depend on the choice of continuation to complex wavenumbers.

\end{abstract}

%\pacs{Valid PACS appear here}% PACS, the Physics and Astronomy
% Classification Scheme.
%\keywords{Suggested keywords}%Use showkeys class option if keyword
%display desired
\maketitle

%\tableofcontents

%%%%%%%%%%%%%%%%%%%%%%%%%%%%%%%%%%%%%%%%%%%%%%%%%%%%%%%%%%%%%%%%%%%%%%%%%%%%%%%%
%*******************************************************************************
%%%%%%%%%%%%%%%%%%%%%%%%%%%%%%%%%%%%%%%%%%%%%%%%%%%%%%%%%%%%%%%%%%%%%%%%%%%%%%%%
\section{\label{sec:Introduction}Introduction}
%%%%%%%%%%%%%%%%%%%%%%%%%%%%%%%%%%%%%%%%%%%%%%%%%%%%%%%%%%%%%%%%%%%%%%%%%%%%%%%%
%*******************************************************************************
%%%%%%%%%%%%%%%%%%%%%%%%%%%%%%%%%%%%%%%%%%%%%%%%%%%%%%%%%%%%%%%%%%%%%%%%%%%%%%%%

When two nuclear bodies collide at a given energy -- say a neutron and an uranium-235 nucleus ($\mathrm{n} + \mathrm{^{235} _{\; \; 92} U}$), a $\gamma$ particle (photon) and a beryllium atom ($\mathrm{\gamma} + \mathrm{^{9} _4 Be}$), or an alpha particle ($\mathrm{^{4} _2 He}$) and a gold atom ($\mathrm{\alpha} + {^{197} _{\; \,  79} \mathrm{Au}} $) -- the outcomes of this interaction are expressed as nuclear cross sections.
These cross sections are a fundamental component of our nuclear physics knowledge, documented in standard nuclear data libraries (ENDF \cite{ENDFBVIII8th2018brown}, JEFF\cite{JEFF_2020_plompenJointEvaluatedFission2020}, JENDL\cite{JENDL_shibataJENDL4NewLibrary2011}, BROND\cite{BROND_2016}, CENDL\cite{CENDLProjectChinese2017}, TENDL\cite{TENDLkoningModernNuclearData2012, koningTENDLCompleteNuclear2019}). 
To constitute nuclear data libraries, an evaluation process fits experimental measurements of reaction rates with a parametric model of nuclear interaction cross sections called R-matrix theory, using evaluation codes such as EDA \cite{EDA_2008, EDA_2015}, SAMMY \cite{SAMMY_2008}, or AZURE \cite{azumaAZUREMatrixCode2010}.
R-matrix theory models nuclear interactions as two incoming bodies yielding two outgoing bodies through the action of a total Hamiltonian. The latter is assumed to be the addition of a short-range, interior Hamiltonian that is null beyond channel radius $a_c$, and a long-range, exterior Hamiltonian that we know, say Coulomb potential or free moving.
This partitioning, along with an orthogonality assumption of channels at the channel boundary, is what we could call the \textit{R-matrix scattering model}, described by Kapur and Peierls in their seminal article \cite{Kapur_and_Peierls_1938}, unified by Bloch in \cite{Bloch_1957}, and reviewed by Lane and Thomas in \cite{Lane_and_Thomas_1958}.
The outcomes of the interaction depend on the energy at which the interaction occurs, and R-matrix theory parametrizes, for calculability reasons, this energy dependence. It can do so in several ways: the one that has come to prevail in the nuclear physics community is the Wigner-Eisenbud parametrization \cite{Wigner_and_Eisenbud_1947, Bloch_1957, Lane_and_Thomas_1958}.

There are good reasons for this: the Wigner-Eisenbud parameters are unconstrained real parameters --- i.e. though physically and statistically correlated, any set of real parameters is mathematically acceptable (though not necessarily present in nature) ---
that parametrize the interior interaction Hamiltonian (usually an intractable many-body nuclear problem) and separate it from the exterior one (usually a well-known free-body or Coulomb Hamiltonian with analytic Harmonic expansions).
Thus, Wigner and Eisenbud constructed a parametrization of the scattering matrix for calculability purposes: introducing simple real parameters that help de-correlate what happens in the inner interaction region from the asymptotic outer region.
Despite all their advantages, the Wigner-Eisenbud parameters present a drawback for nuclear data evaluators: they require the introduction, for every channel $c$, of an arbitrary real ``boundary condition'' parameter, $B_c$. If this arbitrary parameter is set to different values, the same experimental nuclear data will yield different Wigner-Eisenbud resonance parameters. This poses both a physics interpretability problem, and a standardization problem when edifying the standard nuclear data libraries.

In order to circumvent the need for arbitrary boundary parameters $B_c$, Brune introduced an alternative parametrization of R-matrix theory in \cite{Brune_2002}. The alternative parameters are real (like the Wigner-Eisenbud ones) and are independent of the arbitrary boundary condition parameters $B_c$. However, they do entangle the interior region (function of the total energy $E$) with the outer region (function of the incoming wavenumber $k_c$ and outgoing wavenumber $k_{c'}$), so that the alternative parameters depend on how the wavenumbers are related to the energy of the system, $k_c(E)$, and this mapping has branch-points and different sheets corresponding to all the choices of sign in the square roots $\pm \sqrt{E - E_{T_c}}$ of mapping (\ref{eq:rho_c massive}). 
Using monotonic properties of the shift function $S(E)$ on the physical sheet $\big\{ E, + \big\}$ of energy-wavenumber (\ref{eq:rho_c massive}) mapping (recently proved in \cite{Brune_Mark_monotonic_properties_of_shift_2018}), Brune showed a one-to-one correspondence between the number $N_\lambda$ of resonances (or levels) and the number of alternative resonance energies (or poles) \cite{Brune_2002}. This would make the conversion of nuclear data libraries from Wigner-Eisenbud to alternative parameters very convenient.

Section \ref{sec:R-matrix equations} summarizes the Wigner-Eisenbud R-matrix parametrization, reports on the branch-point nature of the energy-wavenumber mapping (\ref{eq:rho_c massive}) -- and its simple relativistic generalisation (\ref{eq:rho_c(E) mapping}) -- and, for the first time, establishes in theorem \ref{theo::Mittag-Leffler of L_c theorem} the Mittag-Leffler expansion of the reduced logarithmic derivative of the outgoing wave operator $L_c(\rho_c)$.
These results are used in section \ref{sec:R_S Brune transform} to show there exists more alternative poles than previously thought: these \textit{shadow poles} reside below the reaction threshold energies $E_{T_c}$.
We also show that these alternative shadow poles depend on the definition that is chosen to continue the R-matrix operators to complex wavenumbers. If the legacy Lane \& Thomas definition (\ref{eq:: Def S = Re[L], P = Im[L]}) is chosen, then we call them \textit{alternative branch poles} and establish their properties in theorem \ref{theo::branch_brune_poles}, amongst which that the shadow poles reside on the nonphysical sheet $\big\{ E, - \big\}$ sub-threshold.
If, instead, the analytic continuation definition (\ref{eq:: Def S analytic}) is chosen, then we call them \textit{alternative analytic poles}, and we establish their properties in theorem \ref{theo::analytic_Brune_poles}, in particular we show alternative analytic poles are in general complex, of which there exists at least $ N_\lambda$ real ones. Moreover, and similarly to the Wigner-Eisenbud parameters, alternative analytic poles only depend on the total energy $E$ and thus no longer present the branches of mapping (\ref{eq:rho_c(E) mapping}).
In theorem \ref{theo::Choice of Brune poles}, we also show that, under a proper generalization of the alternative level matrix, one can choose any subset of $N_S$ alternative poles (for both definitions and real or complex) and still fully reconstruct the scattering matrix (and thus the cross section), as long as $N_S \geq N_\lambda$. 

In nuclear libraries, many isotopes are evaluated with the Reich-Moore formalism instead of the full R-matrix one. 
In section \ref{sec:Generalized Brune parameters for Reich-Moore approximation}, we generalize the alternative parametrization of R-matrix theory to the Reich-Moore formalism, including the newly discovered alternative shadow poles. 
The first evidence of alternative shadow poles is observed in isotope xenon $^{\mathrm{134}}\mathrm{Xe}$ spin-parity group $J^{\pi}=1/2^{(-)}$, and reported in section \ref{sec:Evidence of shadow Brune poles in xenon 134}. 
We also demonstrate how in practice (for Reich-Moore isotopes or when thresholds are present) all alternative parameters depend on the choice of continuation of R-matrix operators to complex wavenumbers. This means that in order to convert nuclear data libraries to alternative parameters, the nuclear physics community must first agree on how to continue the R-matrix operators to complex wavenumbers. We argue in favor of analytic continuation in a follow-up article \cite{Ducru_Scattering_Matrix_of_Complex_Wavenumbers_2019}.

%%%%%%%%%%%%%%%%%%%%%%%%%%%%%%%%%%%%%%%%%%%%%%%%%%%%%%%%%%%%%%%%%%%%%%%%%%%%%%%%
%*******************************************************************************
%%%%%%%%%%%%%%%%%%%%%%%%%%%%%%%%%%%%%%%%%%%%%%%%%%%%%%%%%%%%%%%%%%%%%%%%%%%%%%%%
\section{\label{sec:R-matrix equations}R-matrix Wigner-Eisenbud parametrization}
%%%%%%%%%%%%%%%%%%%%%%%%%%%%%%%%%%%%%%%%%%%%%%%%%%%%%%%%%%%%%%%%%%%%%%%%%%%%%%%%
%*******************************************************************************
%%%%%%%%%%%%%%%%%%%%%%%%%%%%%%%%%%%%%%%%%%%%%%%%%%%%%%%%%%%%%%%%%%%%%%%%%%%%%%%%

We here recall some fundamental definitions and equations of the Wigner-Eisenbud R-matrix parameters \cite{Wigner_and_Eisenbud_1947, Bloch_1957, Lane_and_Thomas_1958}.
As described by Bloch and Lane \& Thomas, for each channel $c$, R-matrix theory treats the two-body-in/two-body-out many-body system as a reduced one-body system.
All the study is then performed in the reduced system and we consider the wave-number of each channel $k_c$, which we can render dimensionless using the channel radius $a_c$ and defining $\boldsymbol{\rho}=\boldsymbol{\mathrm{diag}}\left(\rho_c\right)$ with $\rho_c = k_c a_c$.

%%%%%%%%%%%%%%%%%%%%%%%%%%%%%%%%%%%%%%%%%%%%%%%%%%%%%%%%%%%%%%%%%%%%%%%%%%%%%%%%
%===============================================================================
\subsection{\label{subsec:Energy dependence and wavenumber mapping} Energy dependence and wavenumber mapping}
%===============================================================================
%%%%%%%%%%%%%%%%%%%%%%%%%%%%%%%%%%%%%%%%%%%%%%%%%%%%%%%%%%%%%%%%%%%%%%%%%%%%%%%%

All of the channel wavenumbers link back to one unique total system energy $E$, eigenvalue of the total Hamiltonian.
Conservation of energy entails that this energy $E$ must be the total energy of any given channel $c$ (c.f. equation (5.12), p.557 of \cite{Theory_of_Nuclear_Reactions_I_resonances_Humblet_and_Rosenfeld_1961}):
\begin{equation}
\begin{IEEEeqnarraybox}[][c]{rcl}
E  & \ = \ & E_c = E_{c'} = \hdots \; \; , \; \forall \; c
\IEEEstrut\end{IEEEeqnarraybox}
\label{eq:conservation of energy E = E_c = E_c'}
\end{equation}
Each channel's total energy $E_c$ is then linked to the wavenumber $k_c$ of the channel by its corresponding relation (\ref{eq:rho_c(E) mapping}), say (\ref{eq:rho_c EDA}) and (\ref{eq:E_c as a function of s_c}).

In the semi-classical model described in Lane \& Thomas \cite{Lane_and_Thomas_1958}, we can separate on the one hand massive particles, for which the wavenumber $k_c$ is related to the center-of-mass energy $E_c$ of relative motion of channel $c$ particle pair with masses $m_{c,1}$ and $ m_{c,2}$ as
\begin{equation}
\begin{IEEEeqnarraybox}[][c]{rcl}
k_c  & \ = \ & \sqrt{\frac{2m_{c,1} m_{c,2}}{\left(m_{c,1}+m_{c,2}\right) \mathrm{\hbar}^2} \left(E_c - E_{T_c}\right)}
\IEEEstrut\end{IEEEeqnarraybox}
\label{eq:rho_c massive}
\end{equation}
where $E_{T_c}$ denotes a threshold energy beyond which the channel $c$ is closed, as energy conservation cannot be respected ($E_{T_c} = 0 $ for reactions without threshold).
On the other hand, for a photon particle interacting with a massive body of mass $m_{c,1}$ the center-of-mass wavenumber $k_c$ is linked to the total center-of-mass energy $E_c$ of channel $c$ according to:
\begin{equation}
k_c = \frac{\left( E_c - E_{T_c} \right)}{2 \mathrm{\hbar}\mathrm{c}}\left[ 1 + \frac{m_{c,1} \mathrm{c}^2}{\left( E - E_{T_c} \right) + m_{c,1}\mathrm{c}^2}\right]
\label{eq:rho_c photon}
\end{equation}

Alternatively, in a more unified approach, one can perform a relativistic correction and smooth these differences away by means of the special relativity Mandelstam variable $s_c = (p_{c,1} + p_{c,2})$, also known as the square of the center-of-mass energy, where $p_{c,1}$ and $p_{c,2}$ are the Minkowsky metric four-momenta of the two bodies composing channel $c$, with respective masses $m_{c,1}$ and $ m_{c,2}$ (null for photons). The channel wavenumber $k_c$ can then be expressed as:
\begin{equation}
k_c =  \sqrt{\frac{\left[ s_c - (m_{c,1} + m_{c,2})^2\mathrm{c}^2\right]\left[ s_c - (m_{c,1} - m_{c,2})^2\mathrm{c}^2\right]}{4 \mathrm{\hbar}^2s_c}}
\label{eq:rho_c EDA}
\end{equation}
and the Mandelstam variable $s_c$ can be linked to the center-of-mass energy of the channel $E_c$ through
\begin{equation}
E_c = \frac{ s_c - (m_{c,1} + m_{c,2})^2\mathrm{c}^2}{2 (m_{c,1} + m_{c,2})}
\label{eq:E_c as a function of s_c}
\end{equation}
Interestingly, this is identical to the non-relativistic expression for the center-of-mass energy in terms of the lab energy in whichever channel the total mass $(m_{c,1} + m_{c,2})$ is chosen to be the reference for $E$ (but not in any other).
This special relativistic correction to the non-relativistic R-matrix theory is the approach taken by the EDA code in use at the Los Alamos National Laboratory \cite{EDA_2008, EDA_2015}.

Regardless of the approach taken to link the channel energy $E_c$ to the channel wavenumber $k_c$, conservation of energy (\ref{eq:conservation of energy E = E_c = E_c'}) entails there exists a complex mapping linking the total center-of-mass energy $E$ to the wavenumbers $k_c$, or their associated dimensionless variable $\rho_c= k_c r_c$:
\begin{equation}
\rho_c(E)  \quad \longleftrightarrow \quad E
\label{eq:rho_c(E) mapping}
\end{equation}

\begin{figure}[ht!!] % replace 't' with 'b' to force it to be on the bottom
  \centering
  \includegraphics[width=0.5\textwidth]{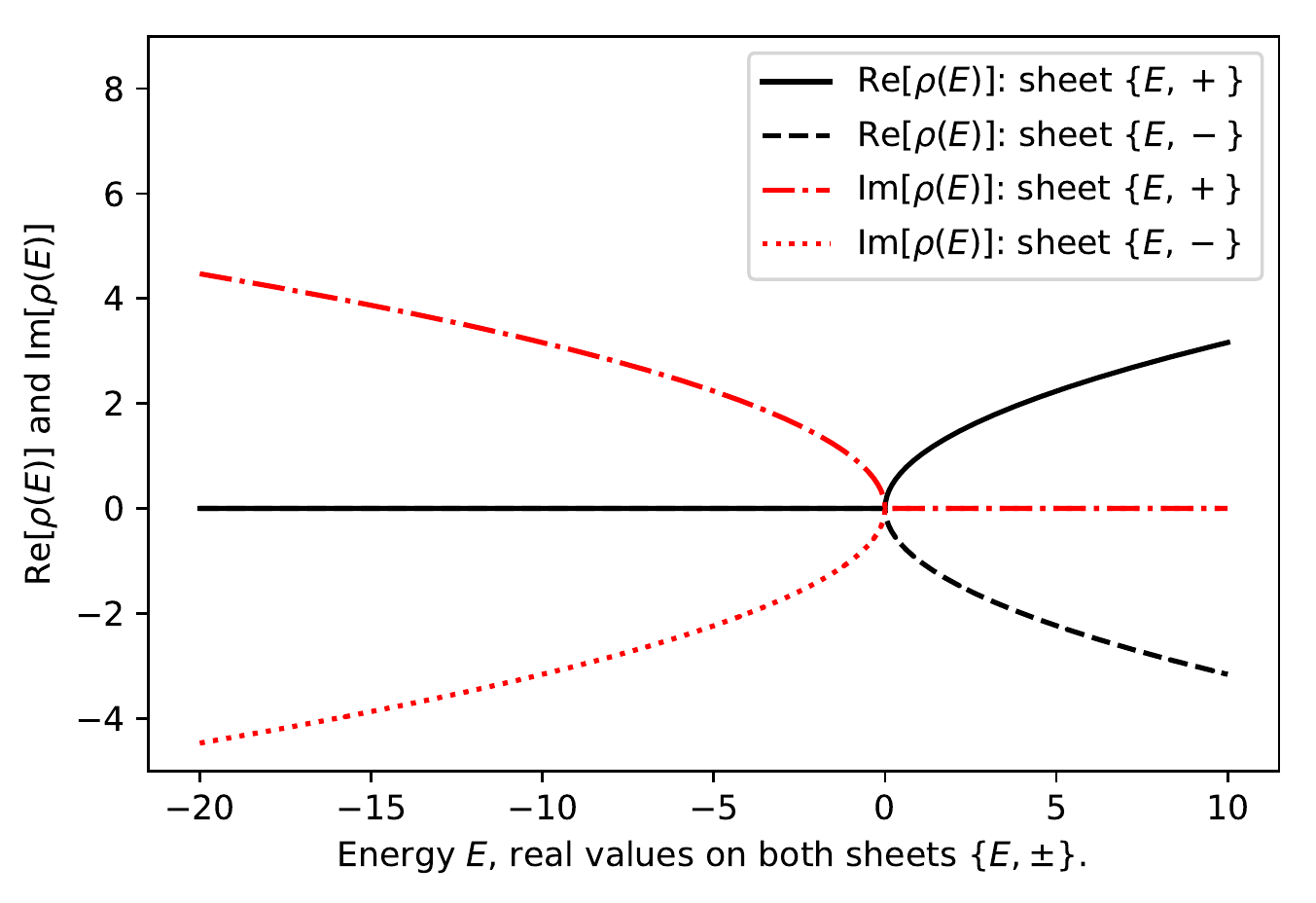}
  \caption{\small{Riemann surface of energy-wavenumber mapping $\rho(E)$ for massive particles in the semi-classical limit (\ref{eq:rho_c massive}). The square root $\rho_c(E) = \pm \rho_0\sqrt{E - E_{T_c}} $ gives rise to two sheets: $\left\{ E,+\right\}$ and $\left\{ E,-\right\}$. Units such that $\rho_0 = 1$. Threshold set at zero: $E_{T_c} = 0$.}}
  \label{fig:mapping rho - E}
\end{figure}

Critical properties throughout this article will stem from the analytic continuation of R-matrix operators.
As the outgoing $O_c$ and incoming $I_c$ wave functions are defined according to $\rho_c$ (c.f. section \ref{subsec:External_region_waves} below), the natural variable to perform analytic continuation is thus $\rho_c$, which is equivalent to extending the wavenumbers into the complex plane $k_c \in \mathbb{C}$.
We can see that the mapping (\ref{eq:rho_c(E) mapping}) from complex $k_c$ to complex energies is non-trivial, specially since the wavenumbers are themselves all interconnected.
This creates a multi-sheeted Riemann surface, with branch-points at each threshold $E_{T_c}$, well documented by Eden \& Taylor \cite{Eden_and_Taylor} (also c.f. section 8 of \cite{Theory_of_Nuclear_Reactions_I_resonances_Humblet_and_Rosenfeld_1961}). 
More precisely, when calculating $\rho_c$ from $E$ one has to chose which sign to assign to $\pm \sqrt{E - E_{T_c}}$ in the semi-classical mapping (\ref{eq:rho_c massive}) of massive particles (i.e. not photons), or to the more general mapping (\ref{eq:rho_c EDA}). Figure \ref{fig:mapping rho - E} shows mapping (\ref{eq:rho_c massive}) with zero threshold $E_{T_c} = 0$: one can see that $\rho_c(E)$ has two branches $\rho_c(E) = \pm \rho_0 \sqrt{E - E_{T_c}}$ and is purely real above threshold (zero imaginary part), and purely imaginary below threshold (zero real part).
Each channel $c$ thus introduces two choices, and hence there are $2^{N_c}$ sheets to the Riemann surface mapping (\ref{eq:conservation of energy E = E_c = E_c'}) onto (\ref{eq:rho_c(E) mapping}), with the branch points close or equal to the threshold energies $E_{T_c}$. As we will see, the choice of the sheet will have an impact when finding different R-matrix and alternative parameters.

%%%%%%%%%%%%%%%%%%%%%%%%%%%%%%%%%%%%%%%%%%%%%%%%%%%%%%%%%%%%%%%%%%%%%%%%%%%%%%%%
%===============================================================================
\subsection{\label{subsec:External_region_waves} External region wave functions}
%===============================================================================
%%%%%%%%%%%%%%%%%%%%%%%%%%%%%%%%%%%%%%%%%%%%%%%%%%%%%%%%%%%%%%%%%%%%%%%%%%%%%%%%

\begin{table*}
\caption{\label{tab::L_values_neutral} Reduced logarithmic derivative $L_\ell(\rho) \triangleq \frac{\rho}{O_\ell} \frac{\partial O_\ell}{\partial r}(\rho)$ of outgoing wavefunction $O_\ell(\rho)$, and $L_\ell^0(\rho) \triangleq L_\ell(\rho) - B_\ell $ using $B_\ell = - \ell$, irreducible forms and Mittag-Leffler pole expansions for neutral particles, for angular momenta $0 \leq \ell \leq 4$. }
\begin{ruledtabular}
\begin{tabular}{c|c|c|c|c}
\ \ & $L_\ell(\rho)$ from recurrence (\ref{eq::L_ell recurrence formula}) & $\begin{array}{c}
     L_\ell^0(\rho) \triangleq L_\ell(\rho) - B_\ell  \\
     \text{using } B_\ell = - \ell \text{ in (\ref{eq::L_ell recurrence formula})} 
\end{array}$ & $\begin{array}{c}
    L_\ell(\rho) \text{ from theorem \ref{theo::Mittag-Leffler of L_c theorem},}  \\
    \text{poles } \big\{ \omega_n\big\} \text{ from table \ref{tab::roots of the outgoing wave functions}} 
\end{array} $ & $\begin{array}{c}
     \text{Outgoing wavefunction } \\
     O_{\ell}(\rho) \text{ from  (\ref{eq::Outgoing wavefuntion O_ell expression for neutral particles})}
\end{array} $ \tabularnewline
\hline
$\ell$  &  $ L_\ell(\rho) = \frac{\rho^2 }{\ell - L_{\ell-1}(\rho)} - \ell $ & $ L_\ell^0(\rho) =  \frac{\rho^2}{2\ell -1 - L_{\ell-1}^0(\rho) }$   & $L_\ell(\rho) = -\ell + \mathrm{i} \rho + \sum_{n\geq 1} \frac{\rho}{\rho - \omega_n}$ &$ O_\ell(\rho)  =  \mathrm{e}^{\mathrm{i}\left(\rho + \frac{1}{2}\ell \pi \right)}\frac{\prod_{n \geq 1}\left(\rho-\omega_n\right)}{\rho^\ell}$  \tabularnewline
\hline \hline
0  &  $\mathrm{i}\rho$  & $\mathrm{i}\rho$ & $\left\{ \emptyset \right\}$ &  $\mathrm{e}^{\mathrm{i}\rho}  $\tabularnewline
1  &  $ \frac{-1 + \mathrm{i}\rho + \rho^2}{1-\mathrm{i}\rho}$ & $ \frac{\rho^2}{1-\mathrm{i}\rho}$  &  $\omega_{1}^{\ell = 2} = -\mathrm{i} $ & $\mathrm{e}^{\mathrm{i}\rho}\left(\frac{1}{\rho} - \mathrm{i}\right)  $ \tabularnewline
2   &  $ \frac{- 6 + 6\mathrm{i}\rho + 3 \rho^2 - \mathrm{i}\rho^3 }{3 - 3\mathrm{i}\rho - \rho^2} $ & $ \frac{ \rho^2 - \mathrm{i}\rho^3 }{3 - 3\mathrm{i}\rho - \rho^2} $ & $\omega_{1,2}^{\ell = 2} \approx \pm 0.86602 - 1.5\mathrm{i} $ &  $\mathrm{e}^{\mathrm{i}\rho}\left(\frac{3}{\rho^2} -\frac{3\mathrm{i}}{\rho} - 1\right)  $ \tabularnewline
3  &  $\frac{- 45 + 45 \mathrm{i} \rho + 21\rho^2 - 6\mathrm{i}\rho^3 -\rho^4}{15 - 15\mathrm{i}\rho - 6\rho^2 + \mathrm{i}\rho^3}$ & $\frac{3\rho^2 - 3\mathrm{i}\rho^3 -\rho^4}{15 - 15\mathrm{i}\rho - 6\rho^2 + \mathrm{i}\rho^3}$ & $\begin{array}{rl}
     \omega_1^{\ell = 3} & \approx - 2.32219 \mathrm{i}  \\
     \omega_{2,3}^{\ell = 3}  & \approx \pm 1.75438 - 1.83891 \mathrm{i}  
\end{array} $ &  $\mathrm{e}^{\mathrm{i}\rho}\left(\frac{15}{\rho^3} -\frac{15\mathrm{i}}{\rho^2} -\frac{6}{\rho} + \mathrm{i}\right)  $  \tabularnewline
4  &  $\frac{ - 420 + 420\mathrm{i}\rho + 195\rho^2 -55\mathrm{i}\rho^3 - 10\rho^4 + \mathrm{i}\rho^5}{105 - 105\mathrm{i}\rho -45\rho^2 + 10\mathrm{i}\rho^3 + \rho^4}$ & $\frac{ 15\rho^2 -15\mathrm{i}\rho^3 - 6\rho^4 + \mathrm{i}\rho^5}{105 - 105\mathrm{i}\rho -45\rho^2 + 10\mathrm{i}\rho^3 + \rho^4}$  &  $ \begin{array}{rl}
   \omega_{1,2}^{\ell = 4} & \approx \pm 2.65742 - 2.10379\mathrm{i}   \\
    \omega_{3,4}^{\ell = 4} & \approx \pm 0.867234 - 2.89621 \mathrm{i} 
       \end{array}$ & $\mathrm{e}^{\mathrm{i}\rho}\left(\frac{105}{\rho^4} - \frac{105\mathrm{i}}{\rho^3} -\frac{45}{\rho^2} + \frac{10 \mathrm{i}}{\rho} + 1 \right)  $ \tabularnewline
\end{tabular}
\end{ruledtabular}
\end{table*}

In the R-matrix model, the external region is subject to either a Coulomb interaction or a free particle movement. In either case, the solutions form a two-dimensional vector space, a basis of which is composed of the incoming and outgoing wave functions: $\boldsymbol{O}(\boldsymbol{k} ) \triangleq \boldsymbol{\mathrm{diag}}\left(O_c(k_c)\right)$, $\boldsymbol{I}(\boldsymbol{k} ) \triangleq \boldsymbol{\mathrm{diag}}\left(I_c(k_c)\right)$.
These are Whittaker or confluent hypergeometric function whose analytic continuation is discussed in section II.2.b and the appendix of \cite{Lane_and_Thomas_1958}, and for whose elemental properties and calculation we refer to chapter 14 of \cite{Abramowitz_and_Stegun} and chapter 33 of \cite{NIST_DLMF}, as well as Powell \cite{Powell_1949}, Thompson \cite{Thompson_1986}, and Michel \cite{Michel_2006}.

Note that the incoming and outgoing wave functions are only dependent on the wavenumber of the given channel $k_c$, this is a fundamental hypothesis of the R-matrix model.
For clarity of writing, we will not explicitly write the $k_c$ dependence of these operators unless it is of importance for the argument.

Importantly, the Wronksian of the system is constant: $\forall c, \; w_c \triangleq O_c^{(1)}I_c - I_c^{(1)}O_c = 2\mathrm{i} $, or with identity matrix $\Id{}$
\begin{equation}
\begin{IEEEeqnarraybox}[][c]{rcl}
\boldsymbol{w} & \ \triangleq \ & \boldsymbol{O}^{(1)}\boldsymbol{I} - \boldsymbol{I}^{(1)}\boldsymbol{O}\\
& \ = \ & 2\mathrm{i} \Id{}
\IEEEstrut\end{IEEEeqnarraybox}
\label{eq:wronksian expression}
\end{equation}

Of central importance to R-matrix theory is the \textit{Bloch operator}, $\mathcal{L}$, which Claude Bloch introduced as the \textit{op\'erateur de conditions aux limites} in equation (35) of \cite{Bloch_1957}, and that projects the system radially onto the channel boundaries for each channel, at the channel radius $r_c = a_c$. The Bloch operator $\mathcal{L}$ is then added to the Hamiltonian to form an invertible and diagonalizable (but not Hermitian) operator in the internal region (c.f. equation (34) of \cite{Bloch_1957}). One can then diagonalize or invert this operator using different complete discrete generative eigenbases of the Hilbert space to construct different parameterizations of the solutions of the Schrodinger equation for the R-matrix scattering model (Kapur-Peierls, Wigner-Eisenbud, etc.). This is the essence of R-matrix theory, as best described by Claude Bloch in \cite{Bloch_1957}.

When using the Wigner-Eisenbud basis, this projection on the channel boundaries at $r_c = a_c$, gives rise to the as yet unnamed quantity $\boldsymbol{L}^0$, introduced in equation (1.6a), section VII.1. p.289 of \cite{Lane_and_Thomas_1958}, and which can be recognized in equation (57) of \cite{Bloch_1957}, defined for each channel as:
\begin{equation}
\begin{IEEEeqnarraybox}[][c]{rcl}
L_c^0(\rho_c) & \ \triangleq \ & L_c(\rho_c) - B_c
\IEEEstrut\end{IEEEeqnarraybox}
\label{eq:Bloch L expression}
\end{equation}
where $\rho_c = k_c a_c$ has been projected on the channel surface, $B_c$ is the arbitrary outgoing-wave boundary condition parameter, and $L_c(\rho_c)$ is the dimensionless reduced logarithmic derivative of the outgoing-wave function at the channel surface:
\begin{equation}
\begin{IEEEeqnarraybox}[][c]{rcl}
L_c(\rho_c) & \ \triangleq \ & \frac{\rho_c}{O_c} \frac{\partial O_c}{\partial \rho_c}
\IEEEstrut\end{IEEEeqnarraybox}
\label{eq:L expression}
\end{equation}
or, equivalently, in matrix notation, and where $\left[ \; \cdot \; \right]^{(1)}$ designates the derivative with respect to $\rho_c$:
\begin{equation}
\begin{IEEEeqnarraybox}[][c]{rcl}
\boldsymbol{L} = \boldsymbol{\mathrm{diag}}\left(L_c\right) = \boldsymbol{\rho}\boldsymbol{O}^{-1}\boldsymbol{O}^{(1)}  \IEEEstrut\end{IEEEeqnarraybox}
\label{eq:L expression matrix}
\end{equation}
so that the $\boldsymbol{L^0}$ matrix function is written: $\boldsymbol{L^0} \triangleq \boldsymbol{L} - \boldsymbol{B}$.

Using the Powell recurrence formulae \cite{Powell_1949}, R.G. Thomas established the following scheme to calculate the outgoing-wave reduced logarithmic derivatives $L_c$ for different angular momenta $\ell$ values in the Coulomb case (c.f. p.350, appendix of \cite{Lane_and_Thomas_1958}, eqs.(A.12) and (A.13))
\begin{equation}
\begin{IEEEeqnarraybox}[][c]{rcl}
\IEEEstrut
L_\ell & \ = \ & \frac{a_\ell}{b_\ell - L_{\ell-1}} - b_\ell
\end{IEEEeqnarraybox}
\label{eq::L_ell recurrence formula}
\end{equation}
with
\begin{equation}
a_\ell \triangleq \rho^2 + \left(\frac{\rho \eta}{\ell}\right)^2 \quad , \quad b_\ell \triangleq \ell + \left(\frac{\rho \eta}{\ell}\right)
\end{equation}

In general, both $O_c(\rho)$ and $L_{\ell}(\rho)$ are meromorphic functions of $\rho$ with \textit{a priori} an infinity of poles, and for whose computation we refer to \cite{Powell_1949, Thompson_1986, Michel_2006}.
In theorem \ref{theo::Mittag-Leffler of L_c theorem}, we here establish the Mittag-Leffler expansion of $L_c(\rho)$.

\begin{theorem}
\label{theo::Mittag-Leffler of L_c theorem}
\textsc{Outgoing-wave reduced logarithmic derivative $L_c(\rho)$ Mittag-Leffler Expansion}. \\
The outgoing-wave reduced logarithmic derivative $L_c(\rho)$, defined in (\ref{eq:L expression}), admits the following Mittag-Leffler pole expansion:
\begin{equation}
\begin{IEEEeqnarraybox}[][c]{rcl}
    \frac{L_c(\rho)}{\rho} & = & \frac{- \ell}{\rho} + \mathrm{i} + \sum_{n \geq 1}  \frac{1}{\rho-\omega_n}
\IEEEstrut\end{IEEEeqnarraybox}
\label{eq::Explicit Mittag-Leffler expansion of L_c}
\end{equation}
where $\left\{\omega_n \right\}$ are the roots of the $O_c(\rho)$ outgoing wavefunctions: $\forall n , \; O_c(\omega_n) = 0$.
For neutral particles, there are a finite number of such roots, reported in table \ref{tab::roots of the outgoing wave functions}.
\end{theorem}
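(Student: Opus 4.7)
The plan is to reduce the claimed Mittag--Leffler expansion to a one-line logarithmic-derivative computation, once a product form of $O_c(\rho)$ is available; definition (\ref{eq:L expression}) shows directly that $L_c(\rho)/\rho$ is the logarithmic derivative of $O_c$ with respect to $\rho$.

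For neutral particles, my starting point would be the closed-form expression of the outgoing spherical Hankel function, which (up to the phase $\mathrm{e}^{\mathrm{i}\ell\pi/2}$ fixed by the Wronskian normalization (\ref{eq:wronksian expression})) factorizes as
\begin{equation*}
O_\ell(\rho) \; = \; \mathrm{e}^{\mathrm{i}(\rho + \ell\pi/2)}\,\frac{\prod_{n=1}^{\ell}(\rho-\omega_n)}{\rho^{\ell}},
\end{equation*}
as reported in the last column of Table \ref{tab::L_values_neutral}. This product form is established either by induction on $\ell$ from the Powell recurrence (\ref{eq::L_ell recurrence formula}), or by factoring the finite Hankel-polynomial series of \cite{Abramowitz_and_Stegun, NIST_DLMF}; the finiteness claim then follows from the degree of the numerator, giving exactly $N_\ell = \ell$ roots.

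Taking the logarithmic derivative and dividing by $\rho$ yields, in one line,
\begin{equation*}
\frac{L_\ell(\rho)}{\rho} \; = \; \frac{1}{O_\ell}\frac{\mathrm{d}O_\ell}{\mathrm{d}\rho} \; = \; \mathrm{i} \;-\; \frac{\ell}{\rho} \;+\; \sum_{n=1}^{\ell}\frac{1}{\rho-\omega_n},
\end{equation*}
which is precisely (\ref{eq::Explicit Mittag-Leffler expansion of L_c}). I would then cross-check term by term against the rational-function column of Table \ref{tab::L_values_neutral}, verifying via partial-fraction decomposition that each $L_\ell$ indeed has simple poles at the listed $\omega_n$ with residues $\omega_n$, and the correct asymptotics $L_\ell(\rho)\to -\ell$ as $\rho\to 0$ and $L_\ell(\rho)\sim \mathrm{i}\rho$ as $|\rho|\to\infty$.

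The only non-mechanical step, and the main potential obstacle, is extending the argument to charged particles, where $O_c(\rho)$ is a transcendental Whittaker function with \emph{a priori} infinitely many zeros. There the plan is to invoke the genuine Mittag--Leffler theorem on the meromorphic function $L_c(\rho) + \ell - \mathrm{i}\rho$: its principal parts at the simple zeros $\omega_n$ of $O_c$ are $\omega_n/(\rho-\omega_n)$ (obtained from $O_c(\rho)\sim O_c'(\omega_n)(\rho-\omega_n)$), and these reassemble as $\sum_n \rho/(\rho-\omega_n) - N$. The subtlety lies in controlling the growth of $L_c$ at infinity through the Whittaker asymptotics referenced in section II.2.b of \cite{Lane_and_Thomas_1958}, so as to confirm that the entire part of the expansion is exactly $-\ell + \mathrm{i}\rho$ and that no convergence-correcting polynomial terms are required in the sum.
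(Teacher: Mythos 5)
Your proposal is correct in substance but inverts the logical order the paper uses, and for the neutral case it is genuinely a different (and more elementary) argument. The paper proves the theorem by applying the Mittag--Leffler theorem directly to $L_c(\rho)/\rho$ in both the neutral and Coulomb cases at once: it posits simple poles at the zeros $\left\{\omega_n\right\}$ with unit residues, a linear growth bound $\left| L_\ell(\rho)\right| < M|\rho|$ on expanding circles, writes the expansion (\ref{eq::Mittag-Leffler expansion of L_c}) with convergence terms $1/\omega_n$, and then fixes the entire part from $L_\ell(0)=-\ell$ (via the recurrence (\ref{eq::L_ell recurrence formula})) and from the limit $L_c(\rho)/\rho \to \mathrm{i}$ at infinity; the finite product formula (\ref{eq::Outgoing wavefuntion O_ell expression for neutral particles}) is then \emph{derived} as a corollary by integrating the expansion. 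You run this backwards for neutral particles: you take the factorized half-integer-order Hankel function as the starting point and obtain (\ref{eq::Explicit Mittag-Leffler expansion of L_c}) by one logarithmic differentiation. That is legitimate --- the polynomial-times-$\mathrm{e}^{\mathrm{i}\rho}$ form of $O_\ell$ is classical and independently available from \cite{Abramowitz_and_Stegun, NIST_DLMF} or by induction on (\ref{eq::L_ell recurrence formula}) --- and it buys a completely rigorous, convergence-free proof of the finite case plus the root count $N_\ell = \ell$ for free, at the price of not covering the Coulomb case. For charged particles you fall back on essentially the paper's own route, and you correctly identify its one soft spot: the sum of principal parts $\omega_n/(\rho-\omega_n) = \rho/(\rho-\omega_n) - 1$ over infinitely many zeros only reassembles into the stated form if the convergence-correcting terms can be absorbed into the constant, i.e.\ if $\sum_n 1/\omega_n$ converges (which it does only conditionally, via the specular pairing $\omega_n \leftrightarrow -\omega_n^*$ of (\ref{eq:Conjugacy for O and I})) and if the growth of $L_c$ on large circles is genuinely linear. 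The paper asserts these bounds rather than proving them, so your flagged ``obstacle'' is not a defect relative to the published argument --- but be aware that your charged-particle paragraph is a proof plan, not yet a proof, to exactly the same degree.
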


\begin{proof}
From definition (\ref{eq:L expression}), $L_c$ is the reduced logarithmic derivative of the outgoing wavefunction $L_c(\rho) \triangleq  \rho \frac{O_c^{(1)}(\rho)}{O_c(\rho)}$. In both the Coulomb and the neutral particle case, the outgoing wavefunction $O_c(\rho)$ is a confluent hypergeometric function with simple roots $\left\{\omega_n \right\}$. Moreover, their logarithmic derivative $ \frac{O_c^{(1)}(\rho)}{O_c(\rho) }$ is bound at infinity. Thus, the following hypotheses stand:
\begin{itemize}
    \item $L_{\ell}(\rho)$ has simple poles $\left\{ \omega_n \right\}$, zeros of the $O_c(\rho)$,
    \item $\frac{L_{\ell}(\rho)}{\rho}$ has residues of $1$ at the $\left\{ \omega_n \right\}$ poles,
    \item $\exists M\in \mathbb{R}$ such as $\left| L_{\ell}(\rho) \right| < M |z|  $ on circles $\mathcal{C}_D$ as $D \longrightarrow \infty$ 
\end{itemize}
By removing the pole of $ \frac{O_c^{(1)}(\rho)}{O_c(\rho) }$ at zero, these hypotheses ensure Mittag-Leffler expansion (\ref{eq::Mittag-Leffler expansion of L_c}) is verified:
\begin{equation}
\begin{IEEEeqnarraybox}[][c]{rcl}
\frac{L_c(\rho)}{\rho} & = & \frac{L_c(0)}{\rho} + L_c^{(1)}(0) + \sum_{n \geq 1} \left[ \frac{1}{\rho-\omega_n} +  \frac{1}{\omega_n} \right]
\IEEEstrut\end{IEEEeqnarraybox}
\label{eq::Mittag-Leffler expansion of L_c}
\end{equation}
R.G. Thomas' recurrence formula (\ref{eq::L_ell recurrence formula}) implies that $L_c(\rho_c)$ satisfies $L_\ell(0) = - \ell$, for both neutral and charged particles.
Moreover, evaluating $ \frac{O_c^{(1)}(\rho)}{O_c(\rho) }$ at the limit of infinity yields:
\begin{equation}
     L_c^{(1)}(0) + \sum_{k\geq 1}\frac{1}{\omega_k} = \underset{\rho \to \infty }{\mathrm{Lim}}\left( \frac{L_c(\rho)}{\rho }\right) = \underset{\rho \to \infty }{\mathrm{Lim}}\left( \frac{O_c^{(1)}(\rho)}{O_c(\rho) }\right) = \mathrm{i}
\end{equation}
so that the Mittag-Leffler expansion (\ref{eq::Mittag-Leffler expansion of L_c}) takes the desired form of (\ref{eq::Explicit Mittag-Leffler expansion of L_c}).

\end{proof}

\begin{table*}
\caption{\label{tab::roots of the outgoing wave functions} Roots $\big\{\omega_n\big\}$ of the outgoing wave function $O_{\ell}(\rho)$, algebraic solutions for neutral particles up to $\ell \leq 4$.}
\begin{ruledtabular}
\begin{tabular}{ll}
% $\ell$ &  \quad \quad  $\big\{\omega_n\big\}$ \tabularnewline
% \hline \hline
$\ell = 0$ : s-wave  &  \quad \quad $\big\{\omega_0^{\ell = 0}\big\} = \left\{ \emptyset \right\}$  \tabularnewline
\hline 
$\ell = 1$ : p-wave  &  \quad\quad $ \big\{\omega_1^{\ell = 1}\big\} = \big\{  -\mathrm{i}\big\}$  \tabularnewline
\hline 
$\ell = 2$ : d-wave  &   \quad \quad $ \big\{\omega_1^{\ell = 2} , \omega_2^{\ell = 2} \big\} = \Big\{ \frac{1}{2}\left(-\sqrt{3} - 3 \mathrm{i}\right)  \quad , \quad \frac{1}{2}\left(\sqrt{3} - 3 \mathrm{i}\right) \Big\}$
\tabularnewline
\hline 
$\ell = 3$ : f-wave  &  \quad \quad $ \left\{  \omega_1^{\ell = 3} ,  \omega_2^{\ell = 3} ,
 \omega_3^{\ell = 3} \right\}$
\tabularnewline
\multicolumn{2}{c}{$ \begin{array}{l}
\omega_1^{\ell = 3} \triangleq -2\mathrm{i}-\frac{1}{2}\left(\sqrt{3} - \mathrm{i}\right)\sqrt[3]{\frac{1}{2}\left(1+\sqrt{5}\right)} - \frac{\sqrt{3} + \mathrm{i}}{2^{2/3}\sqrt[3]{1 + \sqrt{5}}}  \\
\omega_2^{\ell = 3} \triangleq \mathrm{i}\left(-2 + \sqrt[3]{\frac{2}{1+\sqrt{5}}} - \sqrt[3]{\frac{1}{2}\left(1+\sqrt{5} \right)} \right) \\
\omega_3^{\ell = 3} \triangleq -2\mathrm{i}+\frac{1}{2}\left(\sqrt{3} + \mathrm{i}\right)\sqrt[3]{\frac{1}{2}\left(1+\sqrt{5}\right)} + \frac{\sqrt{3} - \mathrm{i}}{2^{2/3}\sqrt[3]{1 + \sqrt{5}}}
\end{array} $}
\tabularnewline
\hline 
$ \ell = 4$ : g-wave  & \quad \quad $ \left\{  \omega_1^{\ell = 4} ,  \omega_2^{\ell = 4} ,
 \omega_3^{\ell = 4 } ,  \omega_4^{\ell = 4 } \right\}$
\tabularnewline
\\
\multicolumn{2}{c}{$ \begin{array}{l}
\omega_1^{\ell = 4} \triangleq  -\frac{5 \mathrm{i}}{2} - \frac{1}{2}\sqrt{5 + \frac{15^{2/3}}{\sqrt[3]{\frac{1}{2}\left(5 + \mathrm{i}\sqrt{35}\right)}} + \sqrt[3]{\frac{15}{2}\left(5 + \mathrm{i}\sqrt{35} \right)}}  - \frac{1}{2}\sqrt{10-\frac{15^{2/3}}{\sqrt[3]{\frac{1}{2}\left(5 + \mathrm{i}\sqrt{35} \right)}} - \sqrt[3]{\frac{15}{2}\left(5 + \mathrm{i}\sqrt{35}\right)} - \frac{10\mathrm{i}}{\sqrt{5 + \frac{15^{2/3}}{\sqrt[3]{\frac{1}{2}\left(5 + \mathrm{i}\sqrt{35}\right)}} + \sqrt[3]{\frac{15}{2}\left(5 + \mathrm{i}\sqrt{35}\right)} }}}   \\
\omega_2^{\ell = 4} \triangleq  -\frac{5 \mathrm{i}}{2} - \frac{1}{2}\sqrt{5 + \frac{15^{2/3}}{\sqrt[3]{\frac{1}{2}\left(5 + \mathrm{i}\sqrt{35}\right)}} + \sqrt[3]{\frac{15}{2}\left(5 + \mathrm{i}\sqrt{35} \right)}} + \frac{1}{2}\sqrt{10-\frac{15^{2/3}}{\sqrt[3]{\frac{1}{2}\left(5 + \mathrm{i}\sqrt{35} \right)}} - \sqrt[3]{\frac{15}{2}\left(5 + \mathrm{i}\sqrt{35}\right)} - \frac{10\mathrm{i}}{\sqrt{5 + \frac{15^{2/3}}{\sqrt[3]{\frac{1}{2}\left(5 + \mathrm{i}\sqrt{35}\right)}} + \sqrt[3]{\frac{15}{2}\left(5 + \mathrm{i}\sqrt{35}\right)} }}}  \\
\omega_3^{\ell = 4} \triangleq  -\frac{5 \mathrm{i}}{2} + \frac{1}{2}\sqrt{5 + \frac{15^{2/3}}{\sqrt[3]{\frac{1}{2}\left(5 + \mathrm{i}\sqrt{35}\right)}} + \sqrt[3]{\frac{15}{2}\left(5 + \mathrm{i}\sqrt{35} \right)}} - \frac{1}{2}\sqrt{10-\frac{15^{2/3}}{\sqrt[3]{\frac{1}{2}\left(5 + \mathrm{i}\sqrt{35} \right)}} - \sqrt[3]{\frac{15}{2}\left(5 + \mathrm{i}\sqrt{35}\right)} + \frac{10\mathrm{i}}{\sqrt{5 + \frac{15^{2/3}}{\sqrt[3]{\frac{1}{2}\left(5 + \mathrm{i}\sqrt{35}\right)}} + \sqrt[3]{\frac{15}{2}\left(5 + \mathrm{i}\sqrt{35}\right)} }}} \\
\omega_4^{\ell = 4} \triangleq  -\frac{5 \mathrm{i}}{2} + \frac{1}{2}\sqrt{5 + \frac{15^{2/3}}{\sqrt[3]{\frac{1}{2}\left(5 + \mathrm{i}\sqrt{35}\right)}} + \sqrt[3]{\frac{15}{2}\left(5 + \mathrm{i}\sqrt{35} \right)}} + \frac{1}{2}\sqrt{10-\frac{15^{2/3}}{\sqrt[3]{\frac{1}{2}\left(5 + \mathrm{i}\sqrt{35} \right)}} - \sqrt[3]{\frac{15}{2}\left(5 + \mathrm{i}\sqrt{35}\right)} + \frac{10\mathrm{i}}{\sqrt{5 + \frac{15^{2/3}}{\sqrt[3]{\frac{1}{2}\left(5 + \mathrm{i}\sqrt{35}\right)}} + \sqrt[3]{\frac{15}{2}\left(5 + \mathrm{i}\sqrt{35}\right)} }}}
\end{array} $}
\end{tabular}
\end{ruledtabular}
\end{table*}

Theorem \ref{theo::Mittag-Leffler of L_c theorem} establishes, for the first time, the Mittag-Leffler expansion of $L^0_c(\rho_c)$ as a function of the roots $\left\{\omega_n \right\}$ of the outgoing wavefunctions $O_c(\rho)$, which are Hankel functions in the neutral particle case, and Whittaker functions in the more general case of charged particles (c.f. equations (2.14b) and (2.17) section III.2.b. p.269 of \cite{Lane_and_Thomas_1958}).
Extensive literature covers these functions \cite{Abramowitz_and_Stegun, NIST_DLMF}. In the neutral particles case of Hankel functions \cite{Zeros_of_Hankel_Functions_and_Poles_1963, Complex_zeros_of_cylinder_functions_1966, Zeros_of_Hankel_Functions_1982, Reduced_Logarithmic_Dervative_of_Hankel_functions_1983, Calculating_zeros_of_Hankel_functions_USSR_1985, Partial_fraction_expansion_Bessel_2005} the search for their zeros established that the reduced logarithmic derivative of the outgoing wave function is a rational function of $k_c$ of degree $\ell$.
In the general case, there are indeed $\ell$ zeros to the Hankel function for $\left| \Re[\rho] \right| < \ell$, but for $\left| \Re[\rho] \right| > \ell$ there exists an infinity of zeros, on or close to the real axis (c.f. FIG.1\&2 of \cite{Complex_zeros_of_cylinder_functions_1966}).
However, in our particular case of physical (i. e. integer) angular momenta $\ell \in \mathbb{Z}$, the order of the Hankel function happens to be a half-integer: $H_{\ell+1/2}$. Crucially, Hankel functions of half integer order constitute a very special case: they have only a finite number of zeros in the finite complex plane, where all but $\ell$ of them have migrated to infinity.
This behavior is reported in \cite{Zeros_of_Hankel_Functions_1982}, where one can observe how the zeros of $H_{\nu}$ as $\nu$ varies between two consecutive integer values.
Here, we report in table \ref{tab::roots of the outgoing wave functions} all the algebraically solvable cases of up to $\ell=4$, past which $\left\{\omega_n\right\}$ are not guaranteed to be solvable by radicals (c.f. Abel-Ruffini theorem and Galois theory).

Another perspective over this property is that in the neutral particle case, $\eta = 0$ and $L_{\ell=0}(\rho) = \mathrm{i}\rho$, so that recurrence relation (\ref{eq::L_ell recurrence formula}) entails $L_c(\rho_c)$ -- and thus the $\boldsymbol{L^0}$ function -- is a rational fraction in $\rho_c$, whose irreducible expressions are reported in table \ref{tab::L_values_neutral} along with their partial fraction decomposition, established in theorem \ref{theo::Mittag-Leffler of L_c theorem}, and whose poles are documented in table \ref{tab::roots of the outgoing wave functions}.
Moreover, since definition (\ref{eq:L expression}) entails $\frac{\partial O_c}{\partial \rho}(\rho) = \frac{L_c}{\rho}(\rho) O_c(\rho)$, a direct integration of (\ref{eq::Mittag-Leffler expansion of L_c}) yields (with the correct multiplicative constant):
\begin{equation}
\begin{IEEEeqnarraybox}[][c]{rcl}
O_\ell(\rho) & = & \mathrm{e}^{\mathrm{i}\left(\rho + \frac{1}{2}\ell \pi \right)}\frac{\prod_{n \geq 1}\left(\rho-\omega_n\right)}{\rho^\ell} 
\IEEEstrut\end{IEEEeqnarraybox}
\label{eq::Outgoing wavefuntion O_ell expression for neutral particles}
\end{equation}
This expression converges for neutral particles as the number of poles is finite, so using Vieta's formulas with the denominator of $L_\ell(\rho)$ enables to construct the developed forms reported in table \ref{tab::L_values_neutral}.

Similar results do not hold for the charged particules case of Whittaker functions, where there always exists an infinity of zeros to the outgoing wavefunction \cite{Zeros_of_Whittaker_function_1999, New_Asymptotics_Whittaker_zeros_2001}, and where a Coulomb phase shift would be present for any Weierstrass expansion in infinite product of type (\ref{eq::Outgoing wavefuntion O_ell expression for neutral particles}).

%%%%%%%%%%%%%%%%%%%%%%%%%%%%%%%%%%%%%%%%%%%%%%%%%%%%%%%%%%%%%%%%%%%%%%%%%%%%%%%%
%===============================================================================
\subsection{\label{subsec:Internal region parameters} Internal region parameters}
%===============================================================================
%%%%%%%%%%%%%%%%%%%%%%%%%%%%%%%%%%%%%%%%%%%%%%%%%%%%%%%%%%%%%%%%%%%%%%%%%%%%%%%%

Projections upon the orthonormal basis formed by the eigenvectors of the Hamiltonian completed by the Bloch operator $\mathcal{L}$ allow for the parametrization of the interaction Hamiltonian in the internal region by means of the Wigner-Eisenbud \textit{resonance parameters} \cite{Bloch_1957}, composed of both the real \textit{resonance energies} $E_\lambda \in \mathbb{R}$, and the real \textit{resonance widths} $\gamma_{\lambda,c} \in \mathbb{R}$.
From the latter, and using Brune's notation $\boldsymbol{e} \triangleq \boldsymbol{\mathrm{diag}}\left( E_\lambda \right)$ and $\boldsymbol{\gamma}\triangleq \boldsymbol{\mathrm{mat}}\left(\gamma_{\lambda,c}\right)_{\lambda,c}$, the \textit{Channel R matrix}, $\boldsymbol{R}$, is defined as
\begin{equation}
\begin{IEEEeqnarraybox}[][c]{rcl}
R_{c,c'} & \; \triangleq \; &  \sum_{\lambda=1}^{N_\lambda}\frac{\gamma_{\lambda,c}\gamma_{\lambda,c'}}{E_\lambda - E} \quad  \mathrm{i.e.}  \quad \boldsymbol{R}\; = \; \boldsymbol{\gamma}^\mathsf{T} \left(\boldsymbol{e} - E\Id{}\right)^{-1}\boldsymbol{\gamma}
\IEEEstrut\end{IEEEeqnarraybox}
\label{eq:R expression}
\end{equation}
and the \textit{Level A matrix}, $\boldsymbol{A}$, is defined through its inverse:
\begin{equation}
\begin{IEEEeqnarraybox}[][c]{rcl}
\boldsymbol{A^{-1}} & \ \triangleq \ & \boldsymbol{e} - E\Id{} - \boldsymbol{\gamma}\left( \boldsymbol{L} - \boldsymbol{B} \right)\boldsymbol{\gamma}^\mathsf{T}
\IEEEstrut\end{IEEEeqnarraybox}
\label{eq:inv_A expression}
\end{equation}
where $\boldsymbol{B} = \boldsymbol{\mathrm{diag}}\left( B_c \right)$ is the arbitrary outgoing-wave boundary condition, which is arbitrary, constant (non-dependent on the wavenumber), and for which Bloch demonstrated that if it is real (i.e. $B_c \in \mathbb{R}$), then the Wigner-Eisenbud resonance parameters are also real \cite{Bloch_1957}.
From this, one can view the Wigner-Eisenbud parameters as the set of channel radii $a_c$, boundary conditions $B_c$, resonance widths $\gamma_{\lambda, c}$, resonance energies $E_\lambda$ and thresholds $E_{T_c}$. This set of parameters $\left\{a_c , B_c , \gamma_{\lambda,c}, E_\lambda, E_{T_c} \right\}$ fully determines the energy (or wavenumber) dependence of the scattering matrix $\boldsymbol{U}$ through equation (\ref{eq:U expression}).

%%%%%%%%%%%%%%%%%%%%%%%%%%%%%%%%%%%%%%%%%%%%%%%%%%%%%%%%%%%%%%%%%%%%%%%%%%%%%%%%
%===============================================================================
\subsection{\label{subsec:Scattering matrix} Scattering matrix and R-matrix parameters}
%===============================================================================
%%%%%%%%%%%%%%%%%%%%%%%%%%%%%%%%%%%%%%%%%%%%%%%%%%%%%%%%%%%%%%%%%%%%%%%%%%%%%%%%

As explained by Claude Bloch, the genius of R-matrix theory stems from it combining the internal region with the external region to simply express the resulting scattering matrix $\boldsymbol{U}$ (also called \textit{collision matrix}, and often noted $\boldsymbol{S}$, though we here stick to the Lane \& Thomas scripture $\boldsymbol{U}$ for the scattering matrix) as:
\begin{equation}
\begin{IEEEeqnarraybox}[][c]{rcl}
\boldsymbol{U} & \ = \ & \boldsymbol{O}^{-1}\boldsymbol{I} + \boldsymbol{w} \boldsymbol{\rho}^{1/2} \boldsymbol{O}^{-1}\left[ \boldsymbol{R}^{-1} + \boldsymbol{B} -  \boldsymbol{L} \right]^{-1}  \boldsymbol{O}^{-1} \boldsymbol{\rho}^{1/2} \\
& \ = \ & \boldsymbol{O}^{-1}\boldsymbol{I} + 2\mathrm{i} \boldsymbol{\rho}^{1/2} \boldsymbol{O}^{-1} \boldsymbol{\gamma}^\mathsf{T} \boldsymbol{A} \boldsymbol{\gamma} \boldsymbol{O}^{-1} \boldsymbol{\rho}^{1/2} \\
& \ = \ & \boldsymbol{O}^{-1}\boldsymbol{I} + 2\mathrm{i} \boldsymbol{\rho}^{1/2} \boldsymbol{O}^{-1} \boldsymbol{R}_{L} \boldsymbol{O}^{-1} \boldsymbol{\rho}^{1/2}
\IEEEstrut\end{IEEEeqnarraybox}
\label{eq:U expression}
\end{equation}
The equivalence between these channel and level matrix expressions stems from the identity $ \left[ \Id{} -  \boldsymbol{R}\boldsymbol{L^0} \right]^{-1} \boldsymbol{R}  = \boldsymbol{\gamma}^\mathsf{T} \boldsymbol{A} \boldsymbol{\gamma}$ which defines the \textit{Kapur-Peierls operator}, $\boldsymbol{R}_{L}$:
\begin{equation}
\begin{IEEEeqnarraybox}[][c]{rcl}
\boldsymbol{R}_{L}  & \ \triangleq  \ &  \left[ \Id{} -  \boldsymbol{R}\boldsymbol{L^0} \right]^{-1} \boldsymbol{R}  =   \boldsymbol{\gamma}^\mathsf{T} \boldsymbol{A} \boldsymbol{\gamma}
    \label{eq:Kapur-Peierls Operator and Channel-Level equivalence}
\IEEEstrut\end{IEEEeqnarraybox}
\end{equation}
% The Kapur-Peierls operator $\boldsymbol{R}_{L}$ will play a central role in this article and is thoroughly discussed in section \ref{sec:R_L Siegert and Humblet}.
Identity (\ref{eq:Kapur-Peierls Operator and Channel-Level equivalence}) can be proved by means of the \textit{Woodbury identity}:
\begin{equation}
\begin{IEEEeqnarraybox}[][c]{rcl}
\left[ \boldsymbol{A} + \boldsymbol{B}\boldsymbol{D}^{-1}\boldsymbol{C}\right]^{-1}   & \ = \ & \boldsymbol{A}^{-1} - \boldsymbol{A}^{-1}\boldsymbol{B}\left[ \boldsymbol{D} + \boldsymbol{C}\boldsymbol{A}^{-1}\boldsymbol{B} \right]^{-1}\boldsymbol{C}\boldsymbol{A}^{-1}
\IEEEstrut\end{IEEEeqnarraybox}
\label{eq:Woodbury identity}
\end{equation}
Indeed, the application of the Woodbury identity (\ref{eq:Woodbury identity}) to equality (\ref{eq:Kapur-Peierls Operator and Channel-Level equivalence}), with $\boldsymbol{A}_{\mathrm{Wood}} = \boldsymbol{R}^{-1}$, $\boldsymbol{B}_{\mathrm{Wood}} = \boldsymbol{L^0}$, and $\boldsymbol{C}_{\mathrm{Wood}} = \boldsymbol{D}_{\mathrm{Wood}} = \Id{}$ yields
\begin{equation*}
\begin{IEEEeqnarraybox}[][c]{r}
\IEEEstrut
 \left[ \Id{} -  \boldsymbol{R}\boldsymbol{L^0} \right]^{-1} \boldsymbol{R}   =  \boldsymbol{R} + \boldsymbol{R} \boldsymbol{L^0} \left[\Id{} - \boldsymbol{R}\boldsymbol{L^0} \right]^{-1}\boldsymbol{R} \\
=  \boldsymbol{\gamma}^\mathsf{T} \left[  \left(\boldsymbol{e} - E\Id{}\right)^{-1} +  \left(\boldsymbol{e} - E\Id{}\right)^{-1}\boldsymbol{\gamma} \boldsymbol{L^0} \times \right. \\
\left. \left[\Id{} - \boldsymbol{\gamma}^\mathsf{T} \left(\boldsymbol{e} - E\Id{}\right)^{-1}\boldsymbol{\gamma} \boldsymbol{L^0}  \right]^{-1}\boldsymbol{\gamma}^\mathsf{T} \left(\boldsymbol{e} - E\Id{}\right)^{-1} \right] \boldsymbol{\gamma}
\end{IEEEeqnarraybox}
\label{eq:R_L Woodbury channel-level equivalence 1}
\end{equation*}
and then reversely applying the Woodbury identity with $\boldsymbol{A}_{\mathrm{Wood}} = \left(\boldsymbol{e} - E\Id{}\right)$, $\boldsymbol{B}_{\mathrm{Wood}} = -\boldsymbol{\gamma} \boldsymbol{L^0}$, $\boldsymbol{C}_{\mathrm{Wood}} = \boldsymbol{\gamma}^\mathsf{T}$, and $ \boldsymbol{D}_{\mathrm{Wood}} = \Id{}$ one now recognizes
\begin{equation*}
\begin{IEEEeqnarraybox}[][c]{rCl}
\IEEEstrut
 \left[ \Id{} -  \boldsymbol{R}\boldsymbol{L^0} \right]^{-1} \boldsymbol{R}  % & = & \boldsymbol{\gamma}^\mathsf{T}   \left[ \boldsymbol{A}_\mathrm{Wood} + \boldsymbol{B}_\mathrm{Wood}\boldsymbol{D}_\mathrm{Wood}^{-1}\boldsymbol{C}_\mathrm{Wood}\right]^{-1} \boldsymbol{\gamma} \\
& = & \boldsymbol{\gamma}^\mathsf{T}   \left[ \left(\boldsymbol{e} - E\Id{}\right) - \boldsymbol{\gamma} \boldsymbol{L^0} \boldsymbol{\gamma}^\mathsf{T}\right]^{-1} \boldsymbol{\gamma}  \\
& = & \boldsymbol{\gamma}^\mathsf{T} \boldsymbol{A} \boldsymbol{\gamma}
\end{IEEEeqnarraybox}
\label{eq:R_L Woodbury channel-level equivalence 2}
\end{equation*}

Considering the multi-sheeted Riemann surface stemming from the analytic continuation of mapping (\ref{eq:rho_c(E) mapping}),
a truly remarkable and seldom noted property of the Wigner-Eisenbud formalism is that it completely de-entangles the branch points and the multi-sheeted structure --- entirely present in the outgoing $\boldsymbol{O}$ and incoming $\boldsymbol{I}$ wave functions in the scattering matrix expression (\ref{eq:U expression}) --- from the resonance parameters --- which are the poles and residues of the channel matrix $\boldsymbol{R}$ as of equation (\ref{eq:R expression}), and these poles and residues live on a simple complex energy $E$ sheet, with no branch points, and furthermore are all real.
This de-entanglement of the branch-point structure gives the $\boldsymbol{R}$ matrix all its uniqueness in R-matrix theory.
For instance, it does not translate to the level matrix $\boldsymbol{A}$, whose analytic continuation entails a multi-sheeted Riemann surface due to the introduction of the $\boldsymbol{L^0}(\boldsymbol{\rho}(E)))$ matrix function in its definition (\ref{eq:inv_A expression}). The same is true for the alternative parameters, as will be discussed throughout this article.

%%%%%%%%%%%%%%%%%%%%%%%%%%%%%%%%%%%%%%%%%%%%%%%%%%%%%%%%%%%%%%%%%%%%%%%%%%%%%%%%
%===============================================================================
\subsection{\label{subsec:Cross section and scattering matrix} Cross section and scattering matrix}
%===============================================================================
%%%%%%%%%%%%%%%%%%%%%%%%%%%%%%%%%%%%%%%%%%%%%%%%%%%%%%%%%%%%%%%%%%%%%%%%%%%%%%%%

General scattering theory expresses the incoming channel $c$ and outgoing channel $c'$ angle-integrated partial cross section $\sigma_{c,c'}(E)$ at energy $E$ as a function of the scattering matrix $U_{c,c'}(E)$ according to eq.(3.2d) VIII.3. p.293 of \cite{Lane_and_Thomas_1958}:
\begin{equation}
\begin{IEEEeqnarraybox}[][c]{rcl}
\sigma_{c,c'}(E) & \ = \ & \pi g_{J^\pi_c} \left| \frac{ \delta_{c,c'}\mathrm{e}^{2\mathrm{i}\big( \sigma_{\ell_c}(\eta_c) - \sigma_{0}(\eta_c) \big) } - U_{c,c'}(E)}{k_c(E)}\right|^2
\IEEEstrut\end{IEEEeqnarraybox}
\label{eq:partial sigma_cc'from scattering matrix}
\end{equation}
where $ g_{J^\pi_c} \ \triangleq \ \frac{2 J + 1 }{\left(2 I_1 + 1 \right)\left(2 I_2 + 1 \right) }$ is the \textit{spin statistical factor} defined eq.(3.2c) VIII.3. p.293, and where the \textit{Coulomb phase shift}, $\sigma_{\ell_c}(\eta_c) $, is defined by Ian Thompson in eq.(33.2.10) of \cite{NIST_DLMF} for angular momentum $\ell_c$ and dimensionless Coulomb field parameter $\eta_c = \frac{Z_1 Z_2 e^2 M_\alpha a_c}{\hbar^2 \rho_c}$.

%%%%%%%%%%%%%%%%%%%%%%%%%%%%%%%%%%%%%%%%%%%%%%%%%%%%%%%%%%%%%%%%%%%%%%%%%%%%%%%%
\subsection{\label{sec:Invariance to B_c}Invariance to arbiraty boundary parameter $B_c$}

Having recalled essential results from R-matrix theory and the Wigner-Eisenbud parameters $\left\{a_c , B_c , \gamma_{\lambda,c}, E_\lambda, E_{T_c} \right\}$, we here focus on the fact that the fundamental physical operator describing the scattering event is the scattering matrix $\boldsymbol{U}$, and while the threshold energies $E_{T_c}$ are intrinsic physical properties of the system, all the other Wigner-Eisenbud parameters $a_c$, $B_c$, $\gamma_{\lambda,c}$, and $E_\lambda$ are interrelated and depend on arbitrary values of the channel radius $a_c$, or the boundary condition $B_c$.
Though the channel radius $a_c$ can arguably have some physical interpretation, this is not the case of the boundary condition $B_c$.

The dependence of the Wigner-Eisenbud parameters to the boundary condition $B_c$ can be made explicit by fixing the channel radius $a_c$ and performing a change of boundary condition $\boldsymbol{B} \to \boldsymbol{B'}$. This must entail a change in resonance parameters $E_\lambda \rightarrow E_\lambda'$ and $\gamma_{\lambda,c} \rightarrow \gamma_{\lambda,c}'$ which leaves the scattering matrix $\boldsymbol{U}$ unchanged.

As described by Barker in \cite{Boundary_condition_Barker_1972}, such change of variables can be performed by noticing that $ \boldsymbol{e} - \boldsymbol{\gamma}\left( \boldsymbol{B}' - \boldsymbol{B} \right)\boldsymbol{\gamma}^\mathsf{T}$ is a real symmetric matrix when both $\boldsymbol{B}$ and $\boldsymbol{B'}$ are real.
The spectral theorem thus assures there exists a real orthogonal matrix $\boldsymbol{K}$ and a real diagonal matrix $\boldsymbol{D}$ such that
\begin{equation}
\boldsymbol{e} - \boldsymbol{\gamma}\left( \boldsymbol{B}' - \boldsymbol{B} \right)\boldsymbol{\gamma}^\mathsf{T} = \boldsymbol{K}^\mathsf{T}\boldsymbol{D}\boldsymbol{K}
\end{equation}
The new parameters are then defined as
\begin{equation}
\boldsymbol{e}' \triangleq  \boldsymbol{D} \quad \quad , \quad \quad \boldsymbol{\gamma}' \triangleq \boldsymbol{K}\boldsymbol{\gamma}
\label{eq: Wigner-Eisenbud parameters transformations under change of Bc}
\end{equation}
This change of variables satisfies:
\begin{equation}
{\boldsymbol{\gamma}'}^\mathsf{T} \boldsymbol{A}_{B'} \boldsymbol{\gamma}' =  \boldsymbol{\gamma}^\mathsf{T} \boldsymbol{A}_{B} \boldsymbol{\gamma}
\label{eq:: gAg invariance for B'}
\end{equation}
and thus leaves the scattering matrix unaltered through equation (\ref{eq:U expression}). Here $\boldsymbol{A}_{B'}$ designates the level matrix from parameters $\boldsymbol{e}'$, $\boldsymbol{\gamma}'$ and $\boldsymbol{B}'$.
Equivalently, using the Woodbury identity (\ref{eq:Woodbury identity}) shows that this change of variables verifies (c.f. eq.(4) of \cite{Boundary_condition_Barker_1972} or eq. (3.27) of \cite{Descouvemont_2010}):
\begin{equation}
\boldsymbol{R}_B^{-1} + \boldsymbol{B} \ = \ \boldsymbol{R}_{B'}^{-1} + \boldsymbol{B'}
\label{eq:: R_B invariance for B'}
\end{equation}
If the change of variable is infinitesimal, this invariance property translates into the following equivalent differential equations on the Wigner-Eisenbud $\boldsymbol{R}_B$ matrix,
\begin{equation}
\frac{\partial \boldsymbol{R}_B^{-1}}{\partial \boldsymbol{B}} + \Id{} \ = \ \boldsymbol{0} \quad \text{i.e.} \quad  \frac{\partial \boldsymbol{R}_B}{\partial \boldsymbol{B}} - \boldsymbol{R}_B^2 \ = \ \boldsymbol{0}
\label{eq:: R_B invariance for infinitesimal B'}
\end{equation}
(c.f. eq (2.5b) section IV.2. p.274 of \cite{Lane_and_Thomas_1958}) where we made use of the following property to prove the equivalence:
\begin{equation}
\frac{\partial \boldsymbol{M^{-1}}}{\partial z}(z) = - \boldsymbol{M^{-1}}(z)  \left( \frac{\partial \boldsymbol{M}}{\partial z}(z) \right) \boldsymbol{M^{-1}}(z)
\label{eq::inv M derivatie property}
\end{equation}

%%%%%%%%%%%%%%%%%%%%%%%%%%%%%%%%%%%%%%%%%%%%%%%%%%%%%%%%%%%%%%%%%%%%%%%%%%%%%%%%
%*******************************************************************************
%%%%%%%%%%%%%%%%%%%%%%%%%%%%%%%%%%%%%%%%%%%%%%%%%%%%%%%%%%%%%%%%%%%%%%%%%%%%%%%%
\section{\label{sec:R_S Brune transform} The alternative parametrization of R-matrix theory}
%%%%%%%%%%%%%%%%%%%%%%%%%%%%%%%%%%%%%%%%%%%%%%%%%%%%%%%%%%%%%%%%%%%%%%%%%%%%%%%%
%*******************************************************************************
%%%%%%%%%%%%%%%%%%%%%%%%%%%%%%%%%%%%%%%%%%%%%%%%%%%%%%%%%%%%%%%%%%%%%%%%%%%%%%%%

Since the physics of the system are invariant with the choice of the arbitrary $B_c$ boundary condition, Brune built on the work of Barker \cite{Boundary_condition_Barker_1972}, Angulo and Descouvemont \cite{anguloMatrixAnalysisInterference2000}, to propose an alternative parametrization of R-matrix theory in which the alternative parameters, $\boldsymbol{\widetilde{e}}$ and $\boldsymbol{\widetilde{\gamma}}$, are boundary-condition independent \cite{Brune_2002}.

%%%%%%%%%%%%%%%%%%%%%%%%%%%%%%%%%%%%%%%%%%%%%%%%%%%%%%%%%%%%%%%%%%%%%%%%%%%%%%%%
\subsection{\label{sec:R_S def}Definition of the alternative $\boldsymbol{R}_S$ parametrization}

Key to the alternative parametrization is the splitting of the outgoing-wave reduced logarithmic derivative -- and thus the $\boldsymbol{L^0}$ matrix function -- into real and imaginary parts, respectively the shift $\boldsymbol{S}$ and penetration $\boldsymbol{P}$ factors:
\begin{equation}
\boldsymbol{L} \ = \ \boldsymbol{S} + i \boldsymbol{P}
\label{eq:: L = S + iP}
\end{equation}

From there, and with slight changes from the notation in \cite{Brune_2002}, the \textit{alternative level matrix} $\boldsymbol{\widetilde{A}}$ is defined as:
\begin{equation}
\boldsymbol{\widetilde{A}^{-1}}(E) \ = \ \boldsymbol{\widetilde{G}} + \boldsymbol{\widetilde{e}} - E\left[\Id{} + \boldsymbol{\widetilde{H}} \right] - \boldsymbol{\widetilde{\gamma}}\boldsymbol{L}(E)\boldsymbol{\widetilde{\gamma}}^\mathsf{T}
\label{eq::Brune physical level matrix}
\end{equation}
with
\begin{equation}
\widetilde{G}_{\lambda \mu} \ = \ \frac{\widetilde{\gamma_\mu}\left( S_\mu \widetilde{E_\lambda} -  S_\lambda \widetilde{E_\mu}  \right)\widetilde{\gamma_\lambda}}{ \widetilde{E_\lambda} - \widetilde{E_\mu} }
\end{equation}
and
\begin{equation}
\widetilde{H}_{\lambda \mu} \ = \ \frac{\widetilde{\gamma_\mu}\left( S_\mu -  S_\lambda \right)\widetilde{\gamma_\lambda}}{ \widetilde{E_\lambda} - \widetilde{E_\mu} }
\end{equation}
such that with the new \textit{alternative resonance parameters}, $\widetilde{E_i}$ and $\widetilde{\gamma_{i,c}}$, the following equality stands,
\begin{equation}
\boldsymbol{\gamma^\mathsf{T} A \gamma} = \boldsymbol{\widetilde{\gamma}^\mathsf{T} \widetilde{A} \widetilde{\gamma}}
\label{eq:R_L unchanged by Brune}
\end{equation}
and thus the scattering matrix $\boldsymbol{U}$ is left unchanged.

These alternative parameters $\widetilde{e}$ and $\widetilde{\gamma}$ are no longer $\boldsymbol{B}$ dependent since the arbitrary boundary condition does not appear in the definition of the alternative level matrix, and from there in the parametrization of the scattering matrix.

Brune explains how to compute these alternative parameters from the Wigner-Eisenbud ones by finding the $\left\{\widetilde{E_i}\right\}$ scalars and $\left\{\boldsymbol{a_i}\right\}$ vectors that solve the Brune generalized eigenproblem \cite{Brune_2002}:
\begin{equation}
\begin{IEEEeqnarraybox}[][c]{rCl}
\left[\boldsymbol{e} - \boldsymbol{\gamma} \left( \boldsymbol{S}(\widetilde{E_i}) - \boldsymbol{B} \right)\boldsymbol{ \gamma}^\mathsf{T} \right]\boldsymbol{a_i} = \widetilde{E_i}\boldsymbol{a_i}
\IEEEstrut\end{IEEEeqnarraybox}
\label{eq:Brune eigenproblem}
\end{equation}
where each eigenvector is normalized so that: 
\begin{equation}
\begin{IEEEeqnarraybox}[][c]{rCl}
\boldsymbol{a_i}^\mathsf{T}\boldsymbol{a_i} =1
\IEEEstrut\end{IEEEeqnarraybox}
\label{eq:Brune eigenvectors normalized}
\end{equation}
and defining the alternative parameters as:
\begin{equation}
\boldsymbol{\widetilde{e}} \triangleq  \boldsymbol{\mathrm{diag}}(\widetilde{E_i}) \quad \quad , \quad \quad \boldsymbol{\widetilde{\gamma}} \triangleq \boldsymbol{a}^\mathsf{T} \boldsymbol{\gamma}
\label{eq:Brune parameters}
\end{equation}
where $\boldsymbol{a}$ is the matrix composed of the column eigenvectors: $\boldsymbol{a} \triangleq \left[\boldsymbol{a_1}, \hdots, \boldsymbol{a_i} , \hdots \right]$.
The alternative level matrix is then defined as (c.f. equation (30), \cite{Brune_2002}):
\begin{equation}
\boldsymbol{\widetilde{A}}^{-1} \ \triangleq \ \boldsymbol{a}^\mathsf{T}\boldsymbol{A}^{-1}\boldsymbol{a}
\label{eq:: Brune invA}
\end{equation}
which guarantees
\begin{equation}
\boldsymbol{A} \ = \ \boldsymbol{a}\boldsymbol{\widetilde{A}}\boldsymbol{a}^\mathsf{T}
\label{eq:: Brune A = aAa}
\end{equation}
and thus (\ref{eq:R_L unchanged by Brune}), and whose explicit expression is (\ref{eq::Brune physical level matrix}).

Note that searching for the general eigenvalues in (\ref{eq:Brune eigenproblem}) is equivalent to solving (apply the Sylvester determinant identity theorem, or c.f. eq. (49)-(50) in \cite{Brune_2002}):
\begin{equation}
\left.\mathrm{det}\left(\boldsymbol{R}_S^{-1}(E)\right)\right|_{E = \widetilde{E_i}} = 0
\label{eq:R_S by Brune det search}
\end{equation}
i.e. solving for the poles of the $\boldsymbol{R}_S$ operator defined as
\begin{equation}
\boldsymbol{R}_S^{-1} \triangleq \boldsymbol{R}^{-1} + \boldsymbol{B} - \boldsymbol{S}
\label{eq:R_S by Brune}
\end{equation}
% hence our dubbing of the Brune parametrization as the $\boldsymbol{R}_S$ parametrization. 
% We here provide additional insights into this alternative $\boldsymbol{R}_S$ parametrization of the R-matrix scattering model, which will be foretelling of the structure of the $\boldsymbol{R}_L$ parametrization discussed in section \ref{sec:R_L Siegert and Humblet}.

The key insight is that in equation (22) of \cite{Brune_2002}, Brune builds a square matrix $\boldsymbol{a}\triangleq \left[\boldsymbol{a_1}, \hdots, \boldsymbol{a_i} , \hdots, \boldsymbol{a_{N_\lambda}} \right]$, from which he is able to built the inverse alternative level matrix in his equation (30) of \cite{Brune_2002}. Brune justifies that this matrix is indeed square in the paragraphs between equations (46) and (47) by a three-step monotony argument depicted in FIG. 1 of \cite{Brune_2002}: 1) he assumes $S_c(E)$ is continuous (i.e. has no real poles); 2) he assumes $\frac{ \partial S_c}{\partial E} \geq 0$, which is always true for negative energies and has recently proved to be true for positive energies in the case of repulsive Coulomb interactions \cite{Brune_Mark_monotonic_properties_of_shift_2018} (a general proof is lacking for positive energy attractive Coulomb channels but has always been verified in practice); 3) he invokes the eigenvalue repulsion behavior (no-crossing rule).
% of which we hereafter provide a discussion in section \ref{subsubsec:: Semi-simple poles in R-matrix theory}.
If these three assumption are true, since the left-hand-side of (\ref{eq:Brune eigenproblem}) is a real symmetric matrix for any real energy value, then the spectral theorem guarantees there exists $N_\lambda$ different real eigenvalues to it, and Brune's three assumptions above elegantly guarantee that there exists exactly $N_\lambda$ real solutions to the generalized eiganvalue problem (\ref{eq:Brune eigenproblem}).

%%%%%%%%%%%%%%%%%%%%%%%%%%%%%%%%%%%%%%%%%%%%%%%%%%%%%%%%%%%%%%%%%%%%%%%%%%%%%%%%
\subsection{\label{subsubsec::Ambiguity in shift and penetration}Ambiguity in shift and penetration factors definition for complex wavenumbers}

There is a subtlety, however. A careful analysis reveals that the assumption that $S_c(E)$ is continuous or monotonously increasing is not unequivocal, and points to an open discussion in the field of R-matrix theory and nuclear cross section evaluations: how should we continue the scattering matrix $\boldsymbol{U}$ to complex wavenumbers $k_c \in \mathbb{C}$ ?
Indeed, there is an ambiguity in the definition of the shift $S_c(E)$ and penetration $P_c(E)$ functions: two approaches are possible, and the community is not clear on which one is correct.

The first approach, legacy of Lane \& Thomas, is to define the shift and penetration functions as the real and imaginary parts of the the outgoing-wave reduced logarithmic derivative:
\begin{equation}
\forall E \in \mathbb{C}\;, \left\{ \quad \begin{array}{cc}
     \boldsymbol{S}(E) \triangleq & \Re\left[\boldsymbol{L}(E)\right]  \; \; \in \; \mathbb{R} \\
     \boldsymbol{P}(E) \triangleq & \Im\left[\boldsymbol{L}(E)\right]  \; \; \in \; \mathbb{R}
\end{array}   \right.
\label{eq:: Def S = Re[L], P = Im[L]}
\end{equation}
This definition, introduced in \cite{Lane_and_Thomas_1958} III.4.a. from equations (4.4) to (4.7c), finds its justification in the discussion between equations (2.1) and (2.2) of \cite{Lane_and_Thomas_1958} VII.2, as it presents the advantage of automatically closing the sub-threshold channels since:
\begin{equation}
\forall E<E_{T_c}\;, \quad  \Im\left[L_c(E)\right]= 0
\label{eq:: Im[L] = 0 for E<0}
\end{equation}
This elegant closure of channels comes at the cost of loosing the mathematical properties of the scattering matrix $\boldsymbol{U}(\boldsymbol{k)}$: it is no longer analytic for complex wavenumbers $k_c \in \mathbb{C}$ (we will also show in a follow-up article \cite{Ducru_Scattering_Matrix_of_Complex_Wavenumbers_2019} that this introduces non-physical spurious poles to the scattering matrix and brakes the generalized unitarity of Eden \& Taylor \cite{Eden_and_Taylor}).
In this Lane \& Thomas approach (\ref{eq:: Def S = Re[L], P = Im[L]}), the function calculated for $\boldsymbol{S}$ changes from $S(E) \triangleq S_c(E)$ above threshold ($E\geq E_{T_c}$), to $S(E) \triangleq L_c(E)$ below threshold ($E <  E_{T_c}$), because of (\ref{eq:: Im[L] = 0 for E<0}). 
Moreover, definition (\ref{eq:: Def S = Re[L], P = Im[L]}) induces ramifications for both the shift and the penetration factors, as we show in lemma \ref{lem:: Lane and Thomas S_c ramification properties}.

\begin{lem}\label{lem:: Lane and Thomas S_c ramification properties}
\textsc{Branch-point definition of shift $S_c(E)$ and penetration $P_c(E)$ functions}.\\
Definition (\ref{eq:: Def S = Re[L], P = Im[L]}) of the shift $S_c(E)$ and penetration $P_c(E)$ functions, legacy of Lane \& Thomas, entails:
\begin{itemize}
    \item branch-points for both $S_c(E)$ and $P_c(E)$, induced by the multi-sheeted nature of mapping (\ref{eq:rho_c(E) mapping}),
    \item on the $\big\{ E, - \big\}$ sheet below threshold $E < E_{T_c}$, the shift function $S_c(E)$ can present discontinuities and areas where $\frac{ \partial S_c }{\partial E}(E) < 0$,
    \item in particular, for neutral particles of odd angular momenta $\ell_c \equiv 1 \; (\mathrm{mod} \; 2)$, there is exactly one real sub-threshold pole to $S_c(E)$ on the $\big\{ E, - \big\}$ sheet,
    \item everywhere other than sub-threshold $\big\{ E, - \big\}$ sheet, and in particular on all of the $\big\{ E, + \big\}$ sheet, the shift function $S_c(E)$ is continuous and monotonously increasing:  $\frac{ \partial S_c }{\partial E}(E) \geq 0$.
\end{itemize}
\end{lem}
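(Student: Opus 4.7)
All four items rest on combining the Mittag-Leffler expansion of theorem~\ref{theo::Mittag-Leffler of L_c theorem} with the two-sheeted Riemann structure of the mapping $\rho_c(E) = \pm\rho_0\sqrt{E-E_{T_c}}$ from (\ref{eq:rho_c massive}). The first bullet is immediate: $L_c$ is a non-constant meromorphic function of $\rho_c$, so its composition with the two-valued $\rho_c(E)$ inherits a square-root branch-point at every threshold $E_{T_c}$; the two sheet choices produce distinct complex values of $L_c$ and hence distinct $S_c$ and $P_c$, which also branch at $E_{T_c}$.

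The key structural input I would develop next is a reflection symmetry of the outgoing-wave polynomial $P_\ell(\rho) \triangleq \prod_n(\rho-\omega_n)$ implicit in (\ref{eq::Outgoing wavefuntion O_ell expression for neutral particles}), namely $\overline{P_\ell(-\bar\rho)} = P_\ell(\rho)$, verifiable directly on the numerators of Table~\ref{tab::L_values_neutral}. This forces the zeros $\{\omega_n\}$ to come in mirror pairs $(\omega,-\bar\omega)$ about the imaginary axis, and the cited Hankel-zero literature \cite{Zeros_of_Hankel_Functions_and_Poles_1963, Complex_zeros_of_cylinder_functions_1966, Zeros_of_Hankel_Functions_1982} places all such zeros strictly in the open lower half-plane. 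The fourth bullet then follows quickly: for $\rho_c > 0$ real (above threshold), the expansion (\ref{eq::Explicit Mittag-Leffler expansion of L_c}) has no real pole, so $L_c$ is continuous and monotonicity is the Brune--Hale result \cite{Brune_Mark_monotonic_properties_of_shift_2018}; for $\rho_c = +iy$ with $y>0$ (sub-threshold on $\{E,+\}$), no $\omega_n$ lies on the positive imaginary axis, so again no pole arises, and pair-wise grouping in the Mittag-Leffler sum collapses the sum to a purely real expression, giving $P_c = 0$ and a continuous $S_c$; monotonicity then comes from direct differentiation of the rational forms of Table~\ref{tab::L_values_neutral} combined with the chain-rule factor $\partial y/\partial E = -\rho_0^2/(2y) < 0$.

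For the second and third bullets, I would specialize to $\rho_c = -iy$ with $y > 0$, i.e., sub-threshold on $\{E,-\}$. A pole of $L_c$ in the real variable $E$ now occurs exactly when some $\omega_n$ lies on the negative imaginary axis. The paired zeros contribute nothing to axial poles, so only unpaired axial zeros do. For odd $\ell$, degree-parity of $P_\ell$ forces an odd number of axial zeros, and combined with the cited Hankel literature and the explicit solutions of Table~\ref{tab::roots of the outgoing wave functions} this number is exactly one, located in the lower half-plane, i.e.\ at real sub-threshold $E$ on $\{E,-\}$. This single axial zero produces the isolated discontinuity of $S_c$ announced in the second bullet for odd $\ell$. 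Non-monotonicity between (or in the absence of) poles then follows by evaluating the rational form of Table~\ref{tab::L_values_neutral} at $\rho = -iy$: for instance, for $\ell = 0$ one has $L_0(-iy) = y$ decreasing in $E$, and for $\ell = 1$ one finds $S_c(-iy) = (y^2 - y + 1)/(y - 1)$ whose $y$-derivative changes sign at $y = 2$, yielding $\partial S_c/\partial E < 0$ for $y > 2$ through the chain rule.

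The main obstacle is the uniqueness in the third bullet. The pairing symmetry only forces an odd number of imaginary-axis zeros, whereas the lemma asserts exactly one. For $\ell \leq 4$ the algebraic expressions of Table~\ref{tab::roots of the outgoing wave functions} settle the matter directly. For general odd $\ell$, a self-contained proof would need to rule out more than one zero on the negative imaginary axis, which appears to require invoking the full asymptotic description of the zeros of the reverse Bessel polynomials (see the cited Hankel-zero works), establishing that only the ``unpaired'' zero near $-i\ell$ meets the imaginary axis while all remaining zeros are kept strictly off it.
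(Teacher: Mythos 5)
Your proposal is correct and follows essentially the same route as the paper: the paper's proof of this lemma likewise reduces everything to the Mittag-Leffler expansion of theorem~\ref{theo::Mittag-Leffler of L_c theorem}, the specular symmetry $\omega_n \leftrightarrow -\omega_n^*$ of the zeros of $O_\ell$ (obtained there from the conjugacy relations in the proof of lemma~\ref{lem:: analytic S_c and P_c lemma}), and the observation that sub-threshold energies on the $\big\{E,-\big\}$ sheet correspond to the negative imaginary $\rho$-axis, where the lower-half-plane zeros can produce real poles of $S_c(E)$. The one gap you flag --- that the pairing/parity argument only yields an \emph{odd} number of imaginary-axis zeros rather than exactly one --- is shared by the paper's own argument, which likewise settles uniqueness only for $\ell \le 4$ via table~\ref{tab::roots of the outgoing wave functions} and otherwise defers to the cited Hankel-zero literature.
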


\begin{proof}
The proof simply introduces the branch-structure of the $\rho_c(E)$ mapping (\ref{eq:rho_c(E) mapping}), observable in figure \ref{fig:mapping rho - E}, into the Lane \& Thomas definition (\ref{eq:: Def S = Re[L], P = Im[L]}).
Historically, the study of the properties emanating from this definition have neglected the $\big\{ E, - \big\}$ sheet. 
Importantly, it was recently proved that $\frac{\partial S_c}{\partial E}(E)  \geq 0$ is true for most cases \cite{Brune_Mark_monotonic_properties_of_shift_2018}. This proof did not consider the $\big\{ E, - \big\}$ sheet of mapping (\ref{eq:rho_c(E) mapping}).
However, their proof of $\frac{\partial S_c}{\partial E} (E) \geq 0$ should still stand on the $\big\{ E, + \big\}$ sheet.
Moreover, the proof of lemma \ref{lem:: analytic S_c and P_c lemma} establishes that all the discontinuity points, i.e. the real-energy poles, happen at sub-threshold energies, and in particular that neutral particles with odd angular moment introduce exactly one such sub-threshold discontinuity. 
This means that above threshold, both the shift $S_c(E)$ and penetration $P_c(E)$ functions are continuous. 
These behaviors are depicted in figure \ref{fig:Lane_and_Thomas_shift_and_penetration_factors_with_branch_points}.
Finally, one will notice that the $\big\{ E, + \big\}$ and $\big\{ E, - \big\}$ sheets coincide above threshold for the shift function $S_c(E)$, and below threshold for the penetration function $P_c(E)$.
For $P_c(E)$, this is because of property (\ref{eq:: Im[L] = 0 for E<0}).
For $S_c(E)$, this is because for real energies above threshold, both definitions (\ref{eq:: Def S = Re[L], P = Im[L]}) and (\ref{eq:: Def S and P analytic continuation from L}) coincide, and lemma \ref{lem:: analytic S_c and P_c lemma} shows the analytic continuation definition of $S_c(E)$ is function of $\rho_c^2(E)$, which unfolds the sheets of the Riemann mapping (\ref{eq:rho_c(E) mapping}). Hence, for above-threshold energies, this property still stands for the Lane \& Thomas definition of the shift factor $S_c(E)$.
\end{proof}

\begin{figure}[ht!!] % replace 't' with 'b' to force it to be on the bottom
  \centering
  \subfigure[\ Real part $\Re{\big[ L_{\ell} (E) \big]}$, on both $\big\{ E, \pm \big\}$ sheets. ]{\includegraphics[width=0.49\textwidth]{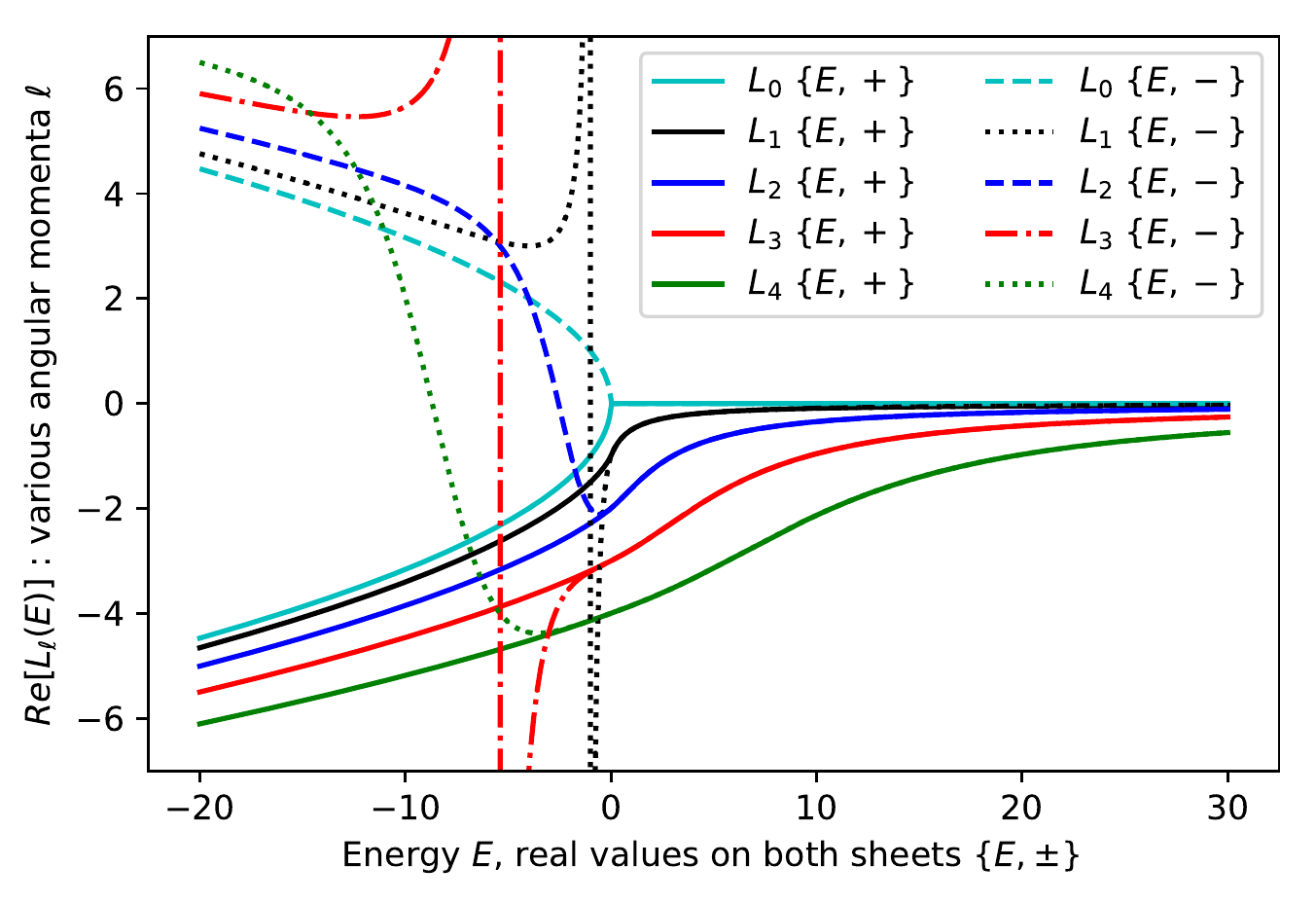}}
  \subfigure[\ Imaginary part $\Im{\big[ L_{\ell} (E) \big]}$, on both $\big\{ E, \pm \big\}$ sheets.]{\includegraphics[width=0.49\textwidth]{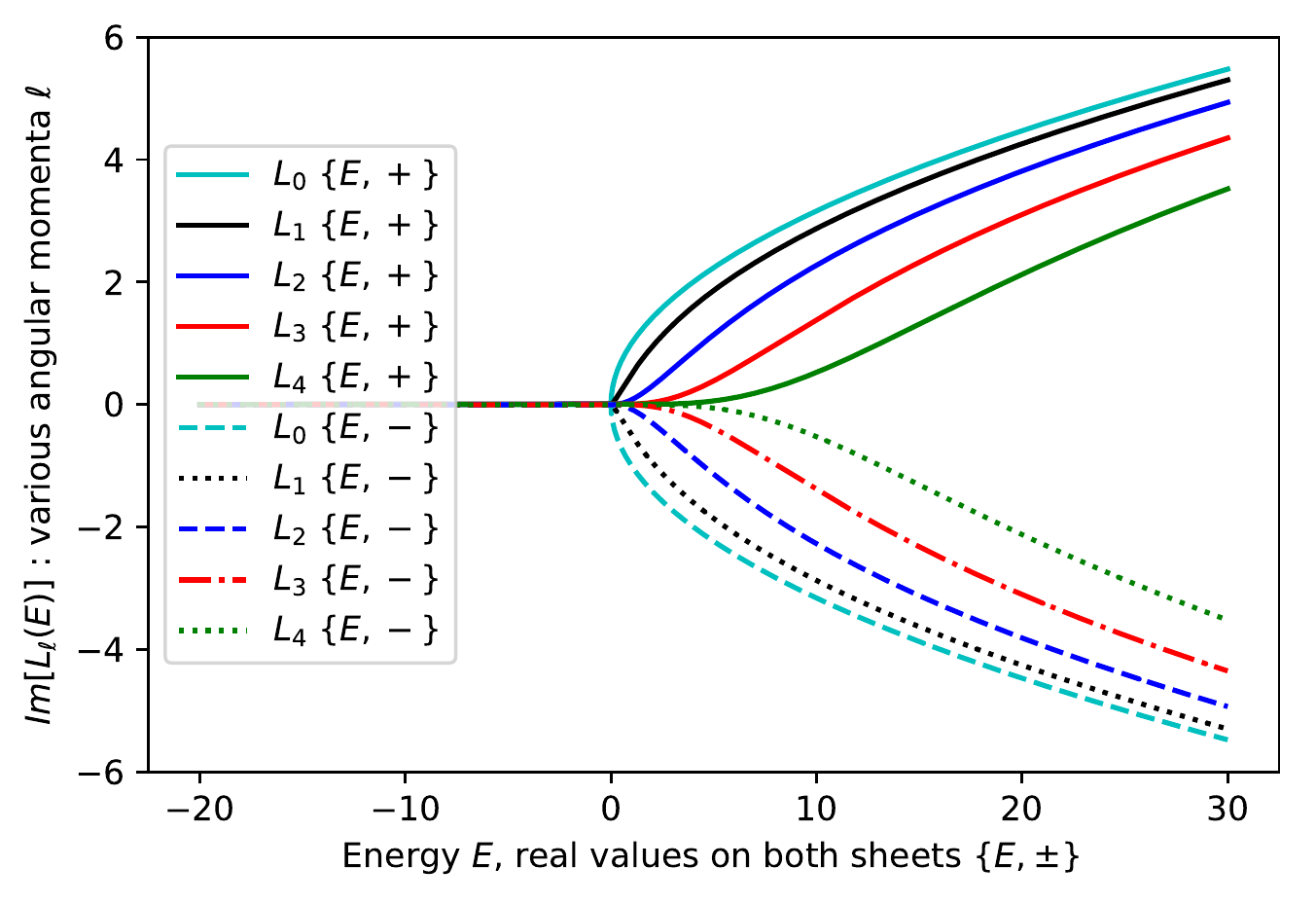}}
  \caption{\small{Real and imaginary parts of the reduced logarithmic derivative of the outgoing wavefunction $L_\ell(E)$, for semi-classical massive neutral particles (table \ref{tab::L_values_neutral}) energy-wavenumber mapping (\ref{eq:rho_c massive}), for different angular momenta $\ell \in \llbracket 1, 4 \rrbracket$. These real and imaginary parts were used by Lane \& Thomas to define the shift and penetration functions, $S_\ell(E) $ and $P_\ell(E)$, as (\ref{eq:: Def S = Re[L], P = Im[L]}). This definition commands branch points from mapping (\ref{eq:rho_c massive}) (c.f. figure \ref{fig:mapping rho - E}). $\Re\big[L_\ell(E)\big]$ presents sub-threshold discontinuities (for odd angular momenta $\ell$) and non-monotonic behavior (for even angular momenta $\ell$) below threshold on the $\big\{ E, - \big\}$ sheet. Units such that $\rho_0 = 1$. Threshold set at zero: $E_{T_c} = 0$.}}
  \label{fig:Lane_and_Thomas_shift_and_penetration_factors_with_branch_points}
\end{figure}

The second approach to defining the shift and penetration functions, $\boldsymbol{S}$ and $\boldsymbol{P}$, consists of performing analytic continuation of the scattering matrix $\boldsymbol{U}$ to complex energies $E\in\mathbb{C}$. This is implicit in the Kapur-Peierls or Siegert-Humblet expansions (c.f. \cite{Siegert, Humblet_thesis} and section  sections
IX.2.c-d-e p.297-298 of \cite{Lane_and_Thomas_1958}), and an abundant literature revolves around the analytic properties of the scattering matrix in the complex plane, including the vast Theory of Nuclear Reaction of Humblet and Rosenfeld \cite{Theory_of_Nuclear_Reactions_I_resonances_Humblet_and_Rosenfeld_1961,Theory_of_Nuclear_Reactions_II_optical_model_Rosenfeld_1961,Theory_of_Nuclear_Reactions_III_Channel_radii_Humblet_1961_channel_Radii,Theory_of_Nuclear_Reactions_IV_Coulomb_Humblet_1964,Theory_of_Nuclear_Reactions_V_low_energy_penetrations_Jeukenne_1965,Theory_of_Nuclear_Reactions_VI_unitarity_Humblet_1964,Theory_of_Nuclear_Reactions_VII_Photons_Mahaux_1965,Theory_of_Nuclear_Reactions_VIII_evolutions_Rosenfeld_1965,Theory_of_Nuclear_Reactions_IX_few_levels_approx_Mahaux_1965}, or the general unitarity condition on the multi-sheeted Riemann surface introduced by Eden and Taylor in \cite{Eden_and_Taylor}.
In this approach, energy dependence of the shift and penetration factors for positive energies are analytically continued into the complex plane, i.e.
\begin{equation}
\boldsymbol{S} : \left\{ \begin{array}{rcl}
\mathbb{C} & \mapsto & \mathbb{C} \\
E & \to & S_c(E)
\end{array}\right. \mathrm{s.t.} \; \; S(E) = S_c(E) , \; \forall (E-E_{T_c}) \in \mathbb{R}_+
\label{eq:: Def S analytic}
\end{equation}
so that they can be computed from the outgoing wavefunction reduced logarithmic derivative $\boldsymbol{L}$ by analytic continuation in wavenumber space $k_c \in \mathbb{C}$: 
\begin{equation}
\forall \rho_c \in \mathbb{C}\;, \left\{ \quad \begin{array}{cc}
     S_c(\rho_c) \triangleq & \frac{L_c(\rho_c) +\left[ L_c(\rho_c^*)\right]^* }{2}  \; \; \in \; \mathbb{C} \\
     P_c(\rho_c) \triangleq &  \frac{L_c(\rho_c) -\left[ L_c(\rho_c^*)\right]^* }{2\mathrm{i}} \; \; \in \; \mathbb{C}
\end{array}   \right.
\label{eq:: Def S and P analytic continuation from L}
\end{equation}
From this definition (\ref{eq:: Def S and P analytic continuation from L}), and using the recurrence relation (\ref{eq::L_ell recurrence formula}), one readily finds the expressions for the neutral particles shift and penetration factors documented in table \ref{tab::S_and_P_expressions_neutral}.
Critically, both definitions (\ref{eq:: Def S = Re[L], P = Im[L]}) and (\ref{eq:: Def S and P analytic continuation from L}) will yield the same shift $S_c(E)$ and penetration $P_c(E)$ functions for real energies above threshold $E \geq E_{T_c}$. 
Moreover, definition (\ref{eq:: Def S and P analytic continuation from L}) bestows interesting analytic properties onto the shift and penetration functions, here established in lemma \ref{lem:: analytic S_c and P_c lemma}.

\begin{table*}
\caption{\label{tab::S_and_P_expressions_neutral} Shift $S_\ell(\rho)$, $S_\ell^0(\rho) \triangleq S_\ell(\rho) - B_\ell$ using $B_\ell = - \ell$, and $P_\ell(\rho)$ irreducible forms for neutral particles, for angular momenta $0 \leq \ell \leq 4$, all defined from analytic continuation (\ref{eq:: Def S and P analytic continuation from L}).}
\begin{ruledtabular}
\begin{tabular}{c|c|c|c}
\ \ & $S_\ell(\rho)$ & $S_\ell^0(\rho) \triangleq S_\ell(\rho) - B_\ell$ (recurrence for $B_\ell = - \ell$ ) & $P_\ell(\rho)$  \tabularnewline
\hline
$\ell$  &  $ S_\ell(\rho) = \frac{\rho^2 \left(\ell - S_{\ell-1}(\rho)  \right)}{\left(\ell - S_{\ell-1}(\rho)  \right)^2 + P_{\ell-1}(\rho)^2} - \ell $  & $ S_\ell^0(\rho) \triangleq S_\ell(\rho) + \ell =  \frac{\rho^2 \left(2\ell - 1 - S_{\ell-1}^0(\rho)  \right)}{\left(2\ell -1 - S_{\ell-1}^0(\rho)  \right)^2 + P_{\ell-1}(\rho)^2}$   & $ P_\ell(\rho) = \frac{\rho P_{\ell-1}(\rho) }{\left(\ell - S_{\ell-1}(\rho)  \right)^2 + P_{\ell-1}(\rho)^2}$  \tabularnewline
\hline \hline
0  &  $0$  & $0$ & $\rho$ \tabularnewline
1  &  $ - \frac{1}{1+\rho^2}$ & $\frac{\rho^2}{1+\rho^2}$ & $\frac{\rho^3}{1+\rho^2}$ \tabularnewline
2   &  $ - \frac{18 + 3 \rho^2}{9 + 3\rho^2 + \rho^4}$ & $\frac{ 3 \rho^2 + 2 \rho^4}{9 + 3\rho^2 + \rho^4}$ & $\frac{\rho^5}{9 + 3\rho^2 + \rho^4}$ \tabularnewline
3  &  $ - \frac{675 + 90 \rho^2 + 6\rho^4}{225 + 45\rho^2 + 6\rho^4 + \rho^6}$ & $\frac{45 \rho^2 + 12\rho^4 + 3 \rho^6}{225 + 45\rho^2 + 6\rho^4 + \rho^6}$ &  $\frac{\rho^7}{225 + 45\rho^2 + 6\rho^4 + \rho^6}$ \tabularnewline
4  &  $ - \frac{44100 + 4725 \rho^2 + 270\rho^4 + 10 \rho^6}{11025 + 1575\rho^2 + 135\rho^4 + 10\rho^6 + \rho^8}$ & $ \frac{1575 \rho^2 + 270\rho^4 + 30 \rho^6 + 4\rho^8}{11025 + 1575\rho^2 + 135\rho^4 + 10\rho^6 + \rho^8}$ &  $\frac{\rho^9}{11025 + 1575\rho^2 + 135\rho^4 + 10\rho^6 + \rho^8}$\tabularnewline
\end{tabular}
\end{ruledtabular}
\end{table*}

\begin{lem}\label{lem:: analytic S_c and P_c lemma}
\textsc{Analytic continuation definition of shift $S_c(E)$ and penetration $P_c(E)$ functions}. \\
When defined by analytic continuation (\ref{eq:: Def S and P analytic continuation from L}), the shift function, $S_c(\rho)$, satisfies the Mittag-Leffler expansion:
\begin{equation}
\begin{IEEEeqnarraybox}[][c]{rcl}
    S_c(\rho) & \;  = \; & - \ell + \underset{\mathrm{arg}(\omega_n)\in \left[-\frac{\pi}{2},0\right]}{\sum_{n \geq 1}} \frac{\rho^2}{\rho^2 - \omega_n^2} + \frac{\rho^2}{\rho^2 - {\omega_n^*}^2}
\IEEEstrut\end{IEEEeqnarraybox}
\label{eq::S_c expansion in rho^2}
\end{equation}
where the poles $\big\{ \omega_n \big\}$ are only the lower-right-quadrant roots -- i.e. such that $\mathrm{arg}(\omega_n) \in [-\frac{\pi}{2},0]$ -- of the outgoing wave function $O_c(\rho_c)$.
In the neutral particles cases, these are reported in table \ref{tab::roots of the outgoing wave functions}.
Given $\rho_c(E)$ mapping (\ref{eq:rho_c(E) mapping}), this entails $S_c(E)$:
\begin{itemize}
    \item unfolds the sheets of $\rho_c(E)$ mapping (\ref{eq:rho_c(E) mapping}),
    \item is purely real for real energies: $\forall E\in\mathbb{R}, \; S_c(E) \in \mathbb{R}$.
\end{itemize}
The penetration function, $P_c(\rho)$, satisfies the Mittag-Leffler expansion:
\begin{equation}
\begin{IEEEeqnarraybox}[][c]{rcl}
    P_c(\rho) & \;  = \; & \rho \Bigg[ 1 - \mathrm{i} \underset{\mathrm{arg}(\omega_n)\in \left[- \frac{\pi}{2},0\right]}{\sum_{n \geq 1}} \frac{\omega_n}{\rho^2 - \omega_n^2} - \frac{{\omega_n^*}}{\rho^2 - {\omega_n^*}^2} \Bigg]
\IEEEstrut\end{IEEEeqnarraybox}
\label{eq::P_c expansion in rho^2}
\end{equation}
which in turn entails that $P_c(E)$:
\begin{itemize}
    \item is purely real for above threshold energies: $\forall E > E_{T_c}, \;  P_c(E) \in \mathbb{R}$,
    \item is purely imaginary for sub-threshold energies: $\forall E < E_{T_c}, \;  P_c(E) \in \mathrm{i}\mathbb{R}$,
\end{itemize}
In the neutral particles case, Mittag-Leffler expansions (\ref{eq::S_c expansion in rho^2}) and (\ref{eq::P_c expansion in rho^2}) are the partial fraction decompositions of the rational fractions reported in table \ref{tab::S_and_P_expressions_neutral}, and for all odd angular momenta $\ell_c \equiv 1 \; (\mathrm{mod} \; 2)$, both have one, shared, real sub-threshold pole.
\end{lem}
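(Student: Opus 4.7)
The starting point is to rewrite the analytic continuation definition (\ref{eq:: Def S and P analytic continuation from L}) in the equivalent form
\begin{equation*}
2 S_c(\rho) = L_c(\rho) + \overline{L_c(\bar\rho)}, \qquad 2\mathrm{i}\, P_c(\rho) = L_c(\rho) - \overline{L_c(\bar\rho)},
\end{equation*}
and then substitute the Mittag-Leffler expansion (\ref{eq::Explicit Mittag-Leffler expansion of L_c}) of $L_c$ from theorem \ref{theo::Mittag-Leffler of L_c theorem}. The constant and linear parts ($-\ell$ and $\mathrm{i}\rho$) combine trivially, yielding $-\ell$ in $S_c$ and $\rho$ in $P_c$. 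The nontrivial part reduces to manipulating $\sum_n \left[\frac{\rho}{\rho-\omega_n} \pm \frac{\rho}{\rho-\bar\omega_n}\right]$.

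The key ingredient is the reflection symmetry of the root set $\{\omega_n\}$ of $O_c(\rho)$ under $\omega \mapsto -\bar\omega$ (reflection across the imaginary axis). For the neutral particle case this follows directly from the explicit polynomial structure (\ref{eq::Outgoing wavefuntion O_ell expression for neutral particles}), since the coefficients of $O_\ell(\rho) \, \rho^\ell / \mathrm{e}^{\mathrm{i}(\rho + \ell\pi/2)}$ are real up to powers of $\mathrm{i}$, and is confirmed by inspection of table \ref{tab::roots of the outgoing wave functions}. In the general Whittaker case the same symmetry comes from the functional relation between $O_c(\rho)$ and $\overline{O_c(-\bar\rho)}$. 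This symmetry allows one to pair each non-imaginary root $\omega_n$ of the lower right quadrant with its mirror $-\bar\omega_n$, and to use the partial-fraction identity $\frac{\rho}{\rho-\omega_n} + \frac{\rho}{\rho+\omega_n} = \frac{2\rho^2}{\rho^2-\omega_n^2}$ (and its conjugate) to consolidate all four terms of the pair $(\omega_n, -\bar\omega_n)$ into the combination $\frac{\rho^2}{\rho^2-\omega_n^2} + \frac{\rho^2}{\rho^2-\bar\omega_n^2}$ appearing in (\ref{eq::S_c expansion in rho^2}). Pure imaginary roots are fixed points of the reflection and contribute only a single copy. The same pairing applied with the opposite sign recovers (\ref{eq::P_c expansion in rho^2}) for the penetration function.

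The analytic properties are then direct corollaries of these expansions. Expansion (\ref{eq::S_c expansion in rho^2}) shows $S_c$ is a function of $\rho^2$ alone, and since $\rho_c^2 \propto (E - E_{T_c})$ through (\ref{eq:rho_c massive}), the two sheets $\{E,\pm\}$ of the Riemann surface of figure \ref{fig:mapping rho - E} collapse onto a single sheet in $S_c(E)$. Reality on the real $E$-axis follows because the paired summands $\frac{\rho^2}{\rho^2-\omega_n^2}$ and $\frac{\rho^2}{\rho^2-\bar\omega_n^2}$ are complex conjugates when $\rho^2 \in \mathbb{R}$. For $P_c$, the prefactor $\rho$ carries all the branch structure: the bracket in (\ref{eq::P_c expansion in rho^2}) is real for real $\rho^2$ by the same conjugate-pairing argument, so $P_c$ inherits the character of $\rho$, i.e.\ real above threshold and purely imaginary below.

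The final statement on odd $\ell$ rests on a parity count. Non-imaginary roots are paired by the reflection, so parity of the total number of roots is determined by the number of pure imaginary ones; odd $\ell$ thus forces at least one pure imaginary root, and the Hankel polynomial structure of $O_\ell$ for neutral particles (table \ref{tab::roots of the outgoing wave functions}) shows there is exactly one. Such a root $\omega_n = \mathrm{i} y$ has $\omega_n^2 = -y^2 < 0$, placing a pole of both $S_c$ and $P_c$ at the real, sub-threshold energy $\rho^2 = -y^2$. The pole is shared because the factor $\rho^2 - \omega_n^2$ appears in the denominators of both (\ref{eq::S_c expansion in rho^2}) and (\ref{eq::P_c expansion in rho^2}) for the same imaginary $\omega_n$. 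The main delicate point in this proof is establishing the reflection symmetry of the root set in full generality; beyond neutral particles one must rely on known functional relations for Whittaker functions, and the exact count of pure imaginary roots for odd $\ell$ is most easily checked from the closed forms in table \ref{tab::roots of the outgoing wave functions} rather than from a general argument.
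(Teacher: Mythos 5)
Your proof is correct and follows essentially the same route as the paper's: substitute the Mittag-Leffler expansion of $L_c$ from theorem \ref{theo::Mittag-Leffler of L_c theorem} into definition (\ref{eq:: Def S and P analytic continuation from L}), exploit the specular symmetry $\omega_n \mapsto -\omega_n^*$ of the roots of $O_c$ (which the paper obtains from the Lane--Thomas conjugacy relations and you from the equivalent functional relation $O_c(\rho)=\overline{O_c(-\bar\rho)}$), pair terms by partial fractions into functions of $\rho^2$, and use the parity of the root count to force a purely imaginary root for odd $\ell$. Your explicit remark that purely imaginary roots are fixed points of the reflection and must contribute only a single copy is in fact slightly more careful than the paper's own pairing formula (\ref{eq::L_c Mittag-Lellfer expansion in pairs}), which as literally written would double-count such roots.
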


\begin{proof}
The proof uses theorem \ref{theo::Mittag-Leffler of L_c theorem}, where we establish the Mittag-Leffler expansion (\ref{eq::Mittag-Leffler expansion of L_c}) of the reduced logarithmic derivative $L_c(\rho_c)$.
We recall the conjugacy relations of the outgoing and incoming wavefunctions (eq. (2.12), VI.2.c. in \cite{Lane_and_Thomas_1958}), whereby, for any channel $c$:
\begin{equation}
\begin{IEEEeqnarraybox}[][c]{rclcrcl}
\big[O_c(k_c^*)\big]^* & \ = \ & I_c(k_c) \quad  &,& \quad \big[I_c(k_c^*)\big]^* & \ = \ & O_c(k_c) \\
O_c(-k_c) & \ = \ & I_c(k_c) \quad &,& \quad I_c(-k_c) & \ = \ & O_c(k_c) \\ -O_c^{(1)}(-k_c) & \ = \ & I_c^{(1)}(k_c) \quad &,& \quad -I_c^{(1)}(-k_c) & \ = \ & O_c^{(1)}(k_c) \\
\IEEEstrut\end{IEEEeqnarraybox}
\label{eq:Conjugacy for O and I}
\end{equation}
where the third line was obtained by taking the derivative of the second. 
Properties (\ref{eq:Conjugacy for O and I}) on the poles $\big\{ \omega_n \big\}$ mean each pole $\omega_n$ on the lower right quadrant of the complex plane -- i.e. such that $\mathrm{arg}(\omega_n) \in [-\frac{\pi}{2},0]$ -- induces a specular pole $-\omega_n^*$.
Dividing the poles in specular pairs, we can re-write the Mittag-Leffler expansion (\ref{eq::Mittag-Leffler expansion of L_c}) as:
\begin{equation}
\begin{IEEEeqnarraybox}[][c]{rcl}
    L_c(\rho) & \;  = \; & - \ell + \mathrm{i}\rho + \underset{\mathrm{arg}(\omega_n)\in \left[-\frac{\pi}{2},0\right]}{\sum_{n \geq 1}} \frac{\rho}{\rho - \omega_n} + \frac{\rho}{\rho + {\omega_n^*}}
\IEEEstrut\end{IEEEeqnarraybox}
\label{eq::L_c Mittag-Lellfer expansion in pairs}
\end{equation}
Plugging-in expression (\ref{eq::L_c Mittag-Lellfer expansion in pairs}) into the shift function definition (\ref{eq:: Def S and P analytic continuation from L}) readily yields (\ref{eq::S_c expansion in rho^2}) and (\ref{eq::P_c expansion in rho^2}).

Note that (\ref{eq::S_c expansion in rho^2}) unfolds the Riemann surface of mapping (\ref{eq:rho_c(E) mapping}), whereas (\ref{eq::P_c expansion in rho^2}) factors-out the branch points so that all its branches are symmetric.
In (\ref{eq::P_c expansion in rho^2}) we recognize the odd powers of $\rho$ in the neutral particles case of table \ref{tab::S_and_P_expressions_neutral}, which do not unfold the Riemann sheets of mapping (\ref{eq:rho_c(E) mapping}).
These behaviors are illustrated in figure \ref{fig:Analytic_continuation_S_and_P}.

In the neutral particles case, $L_c$ is a rational fraction in $\rho_c$, and its denominator is of degree $\ell_c$, as can be observed in table \ref{tab::L_values_neutral}, thus inducing $\ell_c $ poles, reported in table \ref{tab::roots of the outgoing wave functions}.
Since these poles $\big\{ \omega_n \big\}$ must respect the specular symmetry: $ \omega \longleftrightarrow -\omega_n^*$; it thus entails that these poles come in symmetric pairs. 
For neutral particles, odd angular momenta mean there is an odd number of poles $\big\{ \omega_n \big\}$. For them to come in pairs thus imposes one is exactly imaginary $\omega_n = - \mathrm{i} x_n$, with $x_n \in \mathbb{R}_+$.
When squared, this purely imaginary pole will introduce a real energy sub-threshold pole in both (\ref{eq::S_c expansion in rho^2}) and (\ref{eq::P_c expansion in rho^2}), through: $ \frac{1}{\rho^2 + x_n^2}$.
\end{proof}

\begin{figure}[ht!!] % replace 't' with 'b' to force it to be on the bottom
\center
\subfigure[\ Analytic shift function $S_\ell(E)$, for both $\big\{ E, \pm \big\}$ sheets.]{\includegraphics[width=0.478\textwidth]{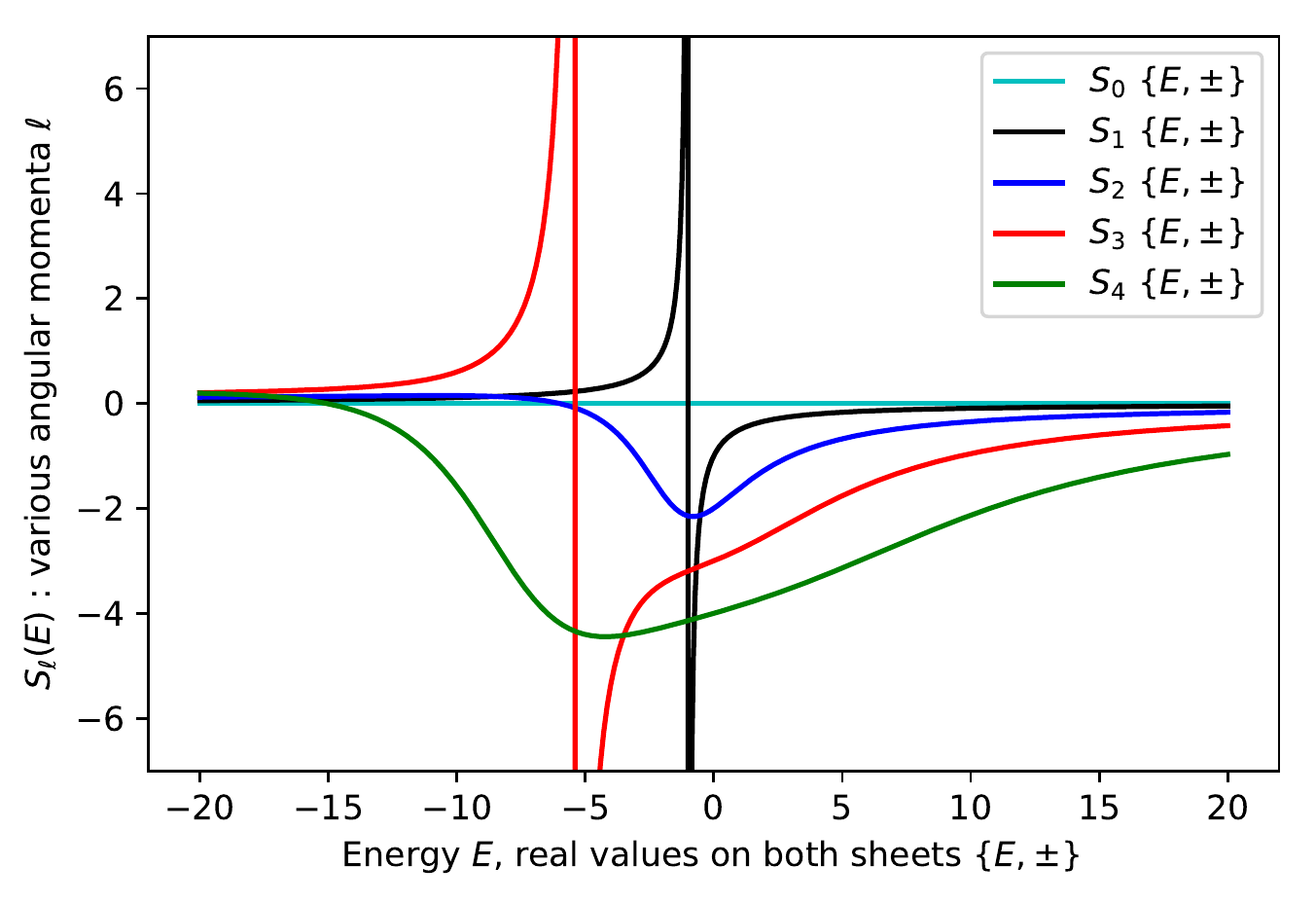}} 
\subfigure[\ Analytic penetration function real part $\Re{\left[ P_\ell(E) \right]}$.]{\includegraphics[width=0.478\textwidth]{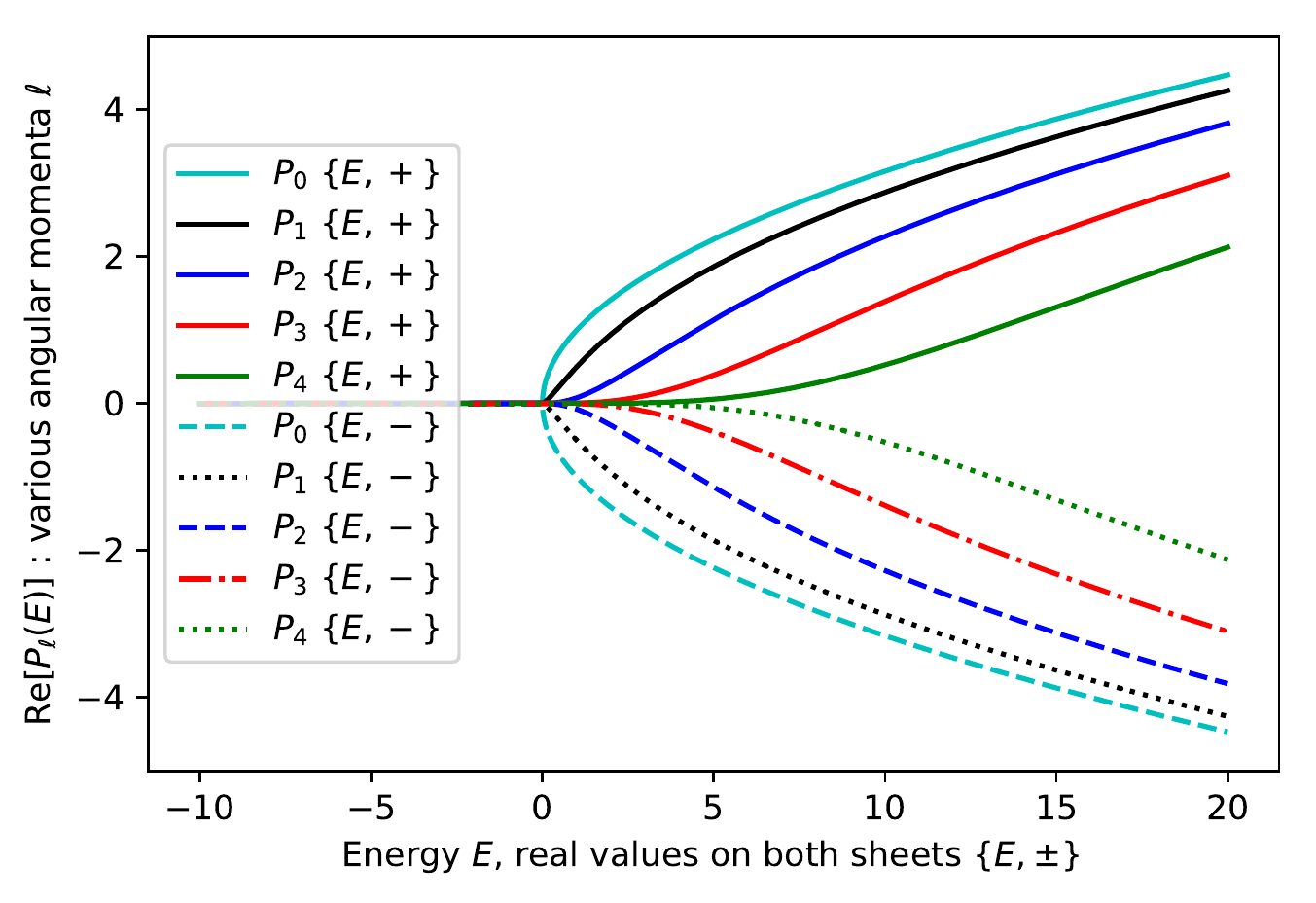}} 
\subfigure[\ Analytic penetration function imaginary part $\Im{\left[ P_\ell(E) \right] }$.]{\includegraphics[width=0.478\textwidth]{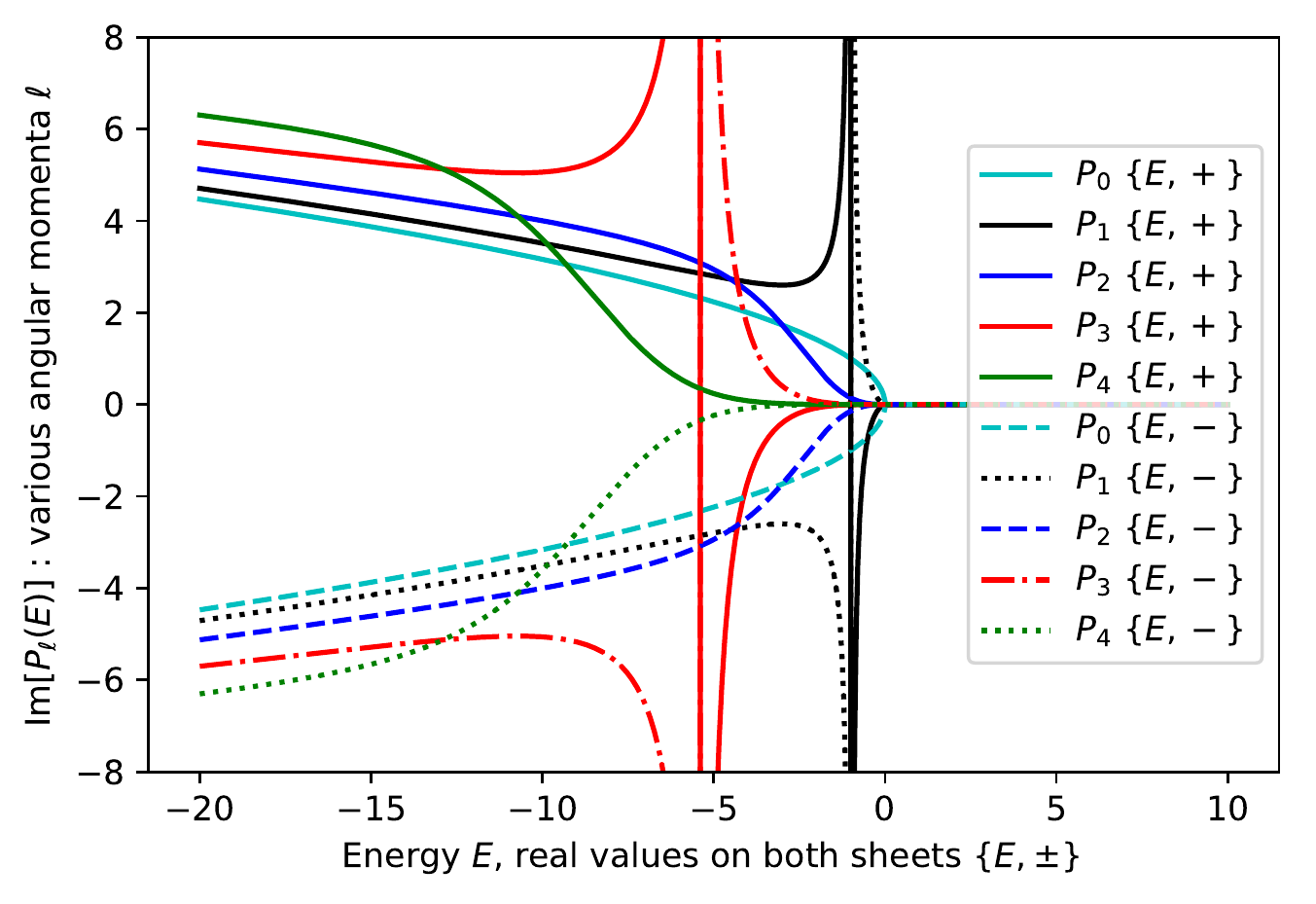}}
\caption{\small{Analytic shift $S_\ell(E)$ and penetration $P_\ell(E)$ functions, definition (\ref{eq:: Def S and P analytic continuation from L}), non-relativistic neutral particles (table \ref{tab::S_and_P_expressions_neutral}) $k_c(E)$ mapping (\ref{eq:rho_c massive}), angular momenta $\ell \in \llbracket 0, 4 \rrbracket$, zero threshold $E_{T_c} = 0$. The shift function $S_c(E)$ has no $\big\{ E, \pm \big\}$ branches (proof in lemma \ref{lem:: analytic S_c and P_c lemma}), has discontinuities (odd $\ell$) and non-monotonic behavior (even $\ell$) below threshold. $P_\ell(E)$ has $\big\{ E, \pm \big\}$ branches, is purely real above threshold, and purely imaginary below. Units such that $\rho_0 = 1$.}} 
\label{fig:Analytic_continuation_S_and_P}
\end{figure}

% \begin{figure*}[ht!!] % replace 't' with 'b' to force it to be on the bottom
%     \subfigure[Analytic penetration function real part $\Re{\left[ P_\ell(E) \right]}$.]{\includegraphics[width=0.49\textwidth]{Real_P_analytic.pdf}}
%     \subfigure[Analytic shift function $S_\ell(E)$, for both $\big\{ E, \pm \big\}$ sheets.]{\includegraphics[width=0.49\textwidth]{S_ell_analytic.pdf}}
%     \subfigure[Analytic penetration function imaginary part $\Im{\left[ P_\ell(E) \right] }$.]{\includegraphics[width=0.49\textwidth]{Imag_P_analytic.pdf}}
%     \caption{\small{Analytic shift $S_c(E)$ and penetration $P_c(E)$ functions, defined as (\ref{eq:: Def S and P analytic continuation from L}), for non-relativistic massive neutral particles (table \ref{tab::S_and_P_expressions_neutral}) energy-wavenumber mapping (\ref{eq:rho_c massive}), for different angular momenta $\ell_c \in \llbracket 0, 4 \rrbracket$. Analytic continuation definition (\ref{eq:: Def S and P analytic continuation from L}) induces no branch points for the shift function $S_c(E)$ (proved in lemma \ref{lem:: analytic S_c and P_c lemma}). There are discontinuities (odd angular momenta $\ell$) and non-monotonic behavior (even angular momenta $\ell$) for sub-threshold energies. $P_c(E)$ is purely real, with branches, above threshold; and purely imaginary, with branches, below threshold. Units such that $\rho_0 = 1$. Threshold set at zero: $E_{T_c} = 0$.}} 
%   \label{fig:Analytic_continuation_S_and_P}
% \end{figure*}

An example to illustrate the difference between definitions (\ref{eq:: Def S = Re[L], P = Im[L]}) and (\ref{eq:: Def S and P analytic continuation from L}) is depicted in figures \ref{fig:Lane_and_Thomas_shift_and_penetration_factors_with_branch_points} and \ref{fig:Analytic_continuation_S_and_P}.
Consider the elemental case of a neutron channel with angular momentum $\ell_c = 1$, and let $\rho_0$ be the proportionality constant so that (\ref{eq:rho_c massive}) is written $\rho(E) = \pm \rho_0 \sqrt{E - E_{T_c}}$. Let us also set a zero threshold $E_{T_c} = 0$, for simplicity.

In this case, the legacy Lane \& Thomas definition (\ref{eq:: Def S = Re[L], P = Im[L]}) corresponds to taking $S(E) \triangleq S_c(\rho_c(E)) = -\frac{1}{1+\rho_c^2}$ for above-threshold energies $E\geq E_{T_c}$, and switch to $S(E) \triangleq L_c(\rho_c(E)) = \frac{- 1 + \mathrm{i}\rho_c + \rho_c^2 }{1- \mathrm{i}\rho_c}$ for sub-threshold energies $E < E_{T_c}$.
Since the (\ref{eq:rho_c massive}) mapping $\rho(E) = \pm \rho_0 \sqrt{E - E_{T_c}}$ has two sheets, this means definition (\ref{eq:: Def S = Re[L], P = Im[L]}) entails: $S(E) \triangleq S_c(E) = -\frac{1}{1+\rho_0^2 E}$ for $E\geq E_{T_c}$, and $S(E) \triangleq L_c(E) = \frac{- 1 \pm \mathrm{i}\rho_0 \sqrt{E} + \rho_0^2 E }{1  \mp \mathrm{i}\rho_0 \sqrt{E}}$ for $E < E_{T_c}$, which is a real quantity. Definition (\ref{eq:: Def S = Re[L], P = Im[L]}) thus introduces the ramifications reported in figure \ref{fig:Lane_and_Thomas_shift_and_penetration_factors_with_branch_points}. In particular, the full cyan line ($L_0 \left\{E,+\right\}$ case) of our $\Re\big[L_\ell(E)\big]$ plot (figure \ref{fig:Lane_and_Thomas_shift_and_penetration_factors_with_branch_points}) corresponds to the uncharged case for angular momentum $\ell = 0$ reported as a full black curve in FIG.1, p.6 of \cite{Brune_Mark_monotonic_properties_of_shift_2018}. 
Notice that all the $\big\{ E,+ \big\}$ sheet curves are continuous and monotonically increasing ($\frac{\partial S_c}{\partial E} \geq 0$), which is in accordance to the monotonic properties established in  \cite{Brune_Mark_monotonic_properties_of_shift_2018}. 
However, on the $\big\{E,-\big\}$ sheet below threshold, $\Re\big[L_c(E)\big]$ is no longer monotonic for even angular momenta ($\frac{\partial \Re\big[L_c(E)\big]}{\partial E} \geq 0$ does not hold), and is discontinuous in the case of odd angular momenta.

In contrast, for our same elemental case, the analytic continuation definition (\ref{eq:: Def S and P analytic continuation from L}) simply defines $S(E) \triangleq S_c(\rho_c(E)) = -\frac{1}{1+\rho_c^2}$ for all real or complex energies $E \in \mathbb{C}$, that is $S(E) \triangleq -\frac{1}{1+\rho_0^2 E}$. The later happens to have a real pole, which introduces a discontinuity, at $E_{\mathrm{dis.}} = -\frac{1}{\rho_0^2}$, as can be seen in figure \ref{fig:Analytic_continuation_S_and_P}.
One can observe that all odd angular momenta are monotonous but have a real sub-threshold pole.
For even angular momenta, $S_\ell(E)$ is continuous, monotonically increasing above-threshold, but $\frac{\partial S}{\partial E}(E) \geq 0$ does not hold below-threshold.
For the penetration function $P_c(E)$, each ramification is monotonous, but in opposite, mirror direction. 
In figure \ref{fig:Analytic_continuation_S_and_P}, the shift function $S_c(E)$ does not present branch points, as proved in lemma \ref{lem:: analytic S_c and P_c lemma}: it is a function of $\rho^2$ so no $\pm \sqrt{\cdot}$ choice is necessary in $\rho_c(E)$ mapping (\ref{eq:rho_c EDA}).

%%%%%%%%%%%%%%%%%%%%%%%%%%%%%%%%%%%%%%%%%%%%%%%%%%%%%%%%%%%%%%%%%%%%%%%%%%%%%%%%
\subsection{\label{subsubsec::Number of Brune poles}Number of alternative poles: \\ existence of shadow poles}

Definitions (\ref{eq:: Def S = Re[L], P = Im[L]}) and (\ref{eq:: Def S analytic}) have a major impact on the alternative parameters (\ref{eq:Brune parameters}): they command that the number $N_S$ of alternative poles $\left\{\widetilde{E_i}\right\}$, solutions to Brune's generalized eigenproblem (\ref{eq:Brune eigenproblem}), is greater than the $N_\lambda$ previously found in \cite{Brune_2002}: i.e. $N_S \geq N_\lambda$.
And this is regardless of whether definition (\ref{eq:: Def S = Re[L], P = Im[L]}) or (\ref{eq:: Def S analytic}) is chosen for the shift factor $S_c(E)$ when searching for these solutions.

The fundamental reason for this is that Brune's three-step monotony argument, which elegantly proved in \cite{Brune_2002} that there are exactly $N_\lambda$ solutions to (\ref{eq:Brune eigenproblem}) and which we here recall in the last paragraph of section \ref{sec:R_S def}, rests on two hypotheses on the shift function $S_c(E)$: 1) it is continuous (i.e. has no real poles), and; 2) it is monotonously increasing, i.e. $\frac{ \partial S_c}{\partial E} \geq 0$.
In \cite{Brune_Mark_monotonic_properties_of_shift_2018}, these two hypotheses have just been proved to hold true for energies above threshold $E \geq E_{T_c}$, i.e. for real wavenumbers $k_c \in \mathbb{R}$.
Yet, we just established in lemmas \ref{lem:: Lane and Thomas S_c ramification properties} and \ref{lem:: analytic S_c and P_c lemma} that proper accounting of the multi-sheeted nature of the Riemann surface created by mapping (\ref{eq:rho_c(E) mapping}) shows these two hypotheses do not hold for sub-threshold energies $E < E_{T_c}$, where the wavenumber is purely imaginary from mapping (\ref{eq:rho_c massive}). 
This engenders additional solutions to Brune's generalized eigenproblem (\ref{eq:Brune eigenproblem}), so that the number $N_S$ of alternative poles $\left\{\widetilde{E_i}\right\}$ is in fact greater than the number of channels: $N_S \geq N_\lambda$.
So how many $N_S$ solutions are there? 
This depends on the R-matrix parameters and on the definition chosen for the shift function $S_c(E)$, as we now show in theorems \ref{theo::branch_brune_poles} and \ref{theo::analytic_Brune_poles}, for definitions (\ref{eq:: Def S = Re[L], P = Im[L]}) and (\ref{eq:: Def S analytic}), respectively.

\begin{theorem}\label{theo::branch_brune_poles} \textsc{Alternative Branch Poles}. \\
Let the alternative branch poles $\left\{\widetilde{E_i}\right\}$ be the solutions of the Brune generalized eigenproblem (\ref{eq:Brune eigenproblem}), using the legacy Lane \& Thomas definition (\ref{eq:: Def S = Re[L], P = Im[L]}) for the shift $S_c(E)$, and let $N_S$ be the number of such solutions, then:
\begin{itemize}
    \item all the alternative branch poles are real, and reside on the $2^{N_c}$ sheets of the Riemann surface from (\ref{eq:rho_c(E) mapping}) mapping: $\left\{\widetilde{E_i}, \underbrace{\pm , \hdots , \pm  }_{N_c}\right\} \in \mathbb{R}^{N_S}$,
\item exactly $N_\lambda$ alternative branch poles are present on the $\big\{ E, \underbrace{+ , \hdots , + }_{N_c} \big\}$ sheet of mapping (\ref{eq:rho_c(E) mapping}): these are the principal (or resonant) poles,
    \item additional alternative branch shadow poles can be found below threshold, $E<E_{T_c}$, on the $\big\{ E, - \big\}$ sheets of mapping (\ref{eq:rho_c(E) mapping}), depending on the values of the resonance parameters $\big\{E_\lambda, \gamma_{\lambda c}, B_c, E_{T_c}, a_c \big\}$ -- though in a way that is invariant under change of boundary-condition $B_c$,
    \item each neutral particle, odd angular momentum $\ell_c \equiv 1 \; (\mathrm{mod}\; 2)$, channel adds at least one alternative branch shadow pole below threshold on its $\big\{E, - \big\}$ sheet,
\end{itemize}
so that the total number $N_S^{\pm}$ of alternative branch poles on all sheets of mapping (\ref{eq:rho_c(E) mapping}) is greater or equal to the number $N_\lambda$ of levels: $N_S^{\pm} \geq N_\lambda$.
\end{theorem}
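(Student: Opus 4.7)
My proof proposal would proceed in four steps matching the four bullets of the theorem.

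First, I would establish the reality and the sheet structure. Under the legacy Lane \& Thomas definition \eqref{eq:: Def S = Re[L], P = Im[L]}, $S_c(E)$ is by construction the real part of $L_c(E)$ and hence lies in $\mathbb{R}$ for every real $E$ and on every branch. Consequently, on any fixed choice of sheet in \eqref{eq:rho_c(E) mapping}, the matrix $\boldsymbol{e}-\boldsymbol{\gamma}\bigl(\boldsymbol{S}(E)-\boldsymbol{B}\bigr)\boldsymbol{\gamma}^\mathsf{T}$ appearing in Brune's generalized eigenproblem \eqref{eq:Brune eigenproblem} is real and symmetric for every real $E$, so the spectral theorem forces every solution $\widetilde{E_i}$ to be real. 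The sheet count is combinatorial: each of the $N_c$ channels contributes a sign choice $\pm\sqrt{E-E_{T_c}}$ in \eqref{eq:rho_c massive}, giving $2^{N_c}$ sheets on which the eigenproblem must be solved independently.

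Second, on the all-plus sheet I would simply reproduce Brune's three-step monotony argument recalled at the end of section \ref{sec:R_S def}. Lemma \ref{lem:: Lane and Thomas S_c ramification properties} establishes the two hypotheses Brune needed on $\{E,+,\ldots,+\}$: $S_c$ is continuous and $\partial S_c/\partial E \geq 0$ everywhere on that sheet (the recent proof \cite{Brune_Mark_monotonic_properties_of_shift_2018} handles the above-threshold part, and below threshold $S_c = L_c$ is analytic and monotonic on the physical sheet). Combined with the no-crossing rule for eigenvalues of the symmetric family $\boldsymbol{e}-\boldsymbol{\gamma}\bigl(\boldsymbol{S}(E)-\boldsymbol{B}\bigr)\boldsymbol{\gamma}^\mathsf{T}$ parameterized by $E$, one gets exactly $N_\lambda$ real intersections with the line $\widetilde{E_i}=E$, i.e.\ exactly $N_\lambda$ principal poles.

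Third, I would argue that on every sheet containing at least one minus sign, extra real solutions are generated sub-threshold. The key mechanism, supplied by Lemma \ref{lem:: Lane and Thomas S_c ramification properties}, is that $S_c(E)$ on $\{E,-\}$ either (a) develops a real sub-threshold pole when the channel is a neutral particle with odd $\ell_c$, or (b) loses monotonicity for even $\ell_c$. For each odd-$\ell$ neutral channel, near the guaranteed pole $E_{\mathrm{dis.}}^c$ the diagonal entry of $\boldsymbol{\gamma}\bigl(\boldsymbol{S}(E)-\boldsymbol{B}\bigr)\boldsymbol{\gamma}^\mathsf{T}$ sweeps from $-\infty$ to $+\infty$, so by the Courant--Fischer min-max characterization one eigenvalue of $\boldsymbol{e}-\boldsymbol{\gamma}\bigl(\boldsymbol{S}(E)-\boldsymbol{B}\bigr)\boldsymbol{\gamma}^\mathsf{T}$ must diverge from $+\infty$ to $-\infty$ across $E_{\mathrm{dis.}}^c$; continuity of this eigenvalue branch on each side and the fact that it is bounded far from the pole then forces at least one crossing of the diagonal $y=E$, producing at least one additional real root of \eqref{eq:R_S by Brune det search}. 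Summing these contributions over channels yields $N_S^{\pm}\geq N_\lambda$, with strict inequality as soon as one odd-$\ell$ neutral channel is present on a sheet with its $-$ sign.

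Finally, invariance under $B_c$ follows directly from \eqref{eq:: R_B invariance for B'}: the operator $\boldsymbol{R}_S^{-1} = \boldsymbol{R}^{-1}+\boldsymbol{B}-\boldsymbol{S}$ in \eqref{eq:R_S by Brune} is unchanged when $\boldsymbol{B}\to\boldsymbol{B}'$ is compensated by the Barker transform \eqref{eq: Wigner-Eisenbud parameters transformations under change of Bc} of the resonance parameters, so the root set \eqref{eq:R_S by Brune det search} --- principal and shadow alike --- is intrinsic. The main obstacle I anticipate is the third step: turning the qualitative picture of ``a divergence in $S_c$ forces an extra eigenvalue crossing'' into a clean count that separately handles the odd-$\ell$ case (where exactly one shadow pole per channel is guaranteed) from the even-$\ell$ non-monotonicity case (where extra shadow poles appear only for certain ranges of the resonance parameters $\{E_\lambda,\gamma_{\lambda c},a_c\}$ and hence cannot be guaranteed unconditionally). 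I would control the odd-$\ell$ contribution rigorously via a local min-max argument around each discontinuity, and simply note the even-$\ell$ case as a parameter-dependent source of further shadow poles, which is consistent with the theorem's careful phrasing ``additional alternative branch shadow poles \emph{can} be found''.
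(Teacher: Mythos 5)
Your proposal is correct and follows essentially the same route as the paper's proof: reality and the $2^{N_c}$-sheet structure from the real-symmetric left-hand side of (\ref{eq:Brune eigenproblem}) under definition (\ref{eq:: Def S = Re[L], P = Im[L]}), Brune's three-step monotony argument on the all-plus sheet via lemma \ref{lem:: Lane and Thomas S_c ramification properties}, the sub-threshold pole of $S_c$ on $\big\{E,-\big\}$ for odd-$\ell$ neutral channels forcing an extra crossing, and $B_c$-invariance via (\ref{eq:: R_B invariance for B'}) applied to (\ref{eq:R_S by Brune det search}). The only difference is that your Courant--Fischer min-max justification of the forced eigenvalue divergence across the discontinuity makes explicit a step the paper merely asserts (``this pole will automatically cross the $E=E$ identity line''), which is a minor strengthening rather than a different argument.
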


\begin{proof}
Let us go about solving the Brune generalized eigenproblem (\ref{eq:Brune eigenproblem}), following the three-step argument of Brune (c.f. last paragraph of section \ref{sec:R_S def}).
We consider the left-hand side of (\ref{eq:Brune eigenproblem}).
According to definition (\ref{eq:: Def S = Re[L], P = Im[L]}), the shift function is always real, even for complex wavenumbers $k_c \in \mathbb{C}$. 
Since by construction the Wigner-Eisenbud R-matrix parameters $\big\{E_\lambda, \gamma_{\lambda c}, B_c, E_{T_c}, a_c \big\}$ are also all real, this implies the right-hand side must be real to solve (\ref{eq:Brune eigenproblem}).
Thus, all the alternative branch poles from definition (\ref{eq:: Def S = Re[L], P = Im[L]}) are real.
To find them, we follow Brune's approach: for any energy $E$, on any of the $2^{N_c}$ sheets of mapping (\ref{eq:rho_c(E) mapping}), the left-hand side is a real symmetric matrix, and its eigenvalue decomposition will thus yield $N_\lambda$ real eigenvalues: $\big\{\widetilde{E_i}(E)\big\} \in \mathbb{R}$. 
We then have to vary the $E$ value until these real eigenvalues cross the $E=E$ identity line in the right-hand side. 
In general, the full accounting of all the Riemann sheets from mapping (\ref{eq:rho_c(E) mapping}) will entail solutions of the generalized Brune eigenproblem (\ref{eq:Brune eigenproblem}) on all sheets.
These alternative branch poles should thus be reported with the choice of sheet from the mapping (\ref{eq:rho_c(E) mapping}) for each channel: $\left\{\widetilde{E_i}, +, -, \hdots, + \right\} $.

We state in lemma \ref{lem:: Lane and Thomas S_c ramification properties} than on the $\big\{ E, + \big\}$ sheet, $S_c(E)$ is indeed continuous and monotonously increasing. 
We can thus apply Brune's three-step argument: the $N_\lambda$ eigenvalues of the left-hand side of (\ref{eq:Brune eigenproblem}) will satisfy $\frac{ \partial \widetilde{E_i}}{\partial E} (E) \leq 0$, and thus each and every one of them will eventually cross the $E=E$ identity line exactly once as $E$ varies continuously. 
On the $\big\{ E, + \big\}$ sheet for all channels, there are thus exactly $N_\lambda $ alternative poles: $\left\{\widetilde{E_i}, \underbrace{+ , \hdots , +  }_{N_c}\right\} \in \mathbb{R}^{N_\lambda}$

However, we showed in lemma \ref{lem:: Lane and Thomas S_c ramification properties} that $S_c(E)$ is not monotonous and can be discontinuous for sub-threshold energies $E < E_{T_c}$ on the $\big\{ E, - \big\}$ sheet.
So how many alternative poles are there on all sheets? 
Unfortunately, the number of solutions to Brune's generalized eigenproblem (\ref{eq:Brune eigenproblem}) will depend on the values of the resonance parameters $\big\{E_\lambda, \gamma_{\lambda c}, B_c, E_{T_c}, a_c \big\}$ -- though in a way that is invariant under change of boundary-condition $B_c$, as made evident in (\ref{eq:R_S by Brune det search}) when considering invariance (\ref{eq:: R_B invariance for B'}).
That the number of solutions to (\ref{eq:Brune eigenproblem}) depends on the parameters can be observed in figure \ref{fig:Brune_argument_even_angular_momenta}. 

For neutral particles odd momenta $\ell_c \equiv 1 \; (\mathrm{mod} \; 2)$ channels, lemma \ref{lem:: Lane and Thomas S_c ramification properties} also showed there exist exactly one sub-threshold pole to $S_c(E)$ on the $\big\{ E, - \big\}$ sheet of mapping (\ref{eq:rho_c(E) mapping}).
This pole will automatically cross the $E=E$ identity line of Brune's three-step argument twice, once below and once above threshold, adding an additional alternative shadow pole to the $N_\lambda$ ones Brune found in \cite{Brune_2002}.
This proves that there exists alternative shadow poles, just as shadow poles in the Siegert-Humblet parameters were revealed by G.Hale in \cite{Hale_1987, anti-Hale_1987}. 
This behavior is illustrated in figure \ref{fig:Brune_argument_odd_angular_momenta}.
\end{proof}

\begin{figure}[ht!!] % replace 't' with 'b' to force it to be on the bottom
  \centering
  \subfigure[\ Positive resonance energy $E_\lambda = 2$.]{\includegraphics[width=0.49\textwidth]{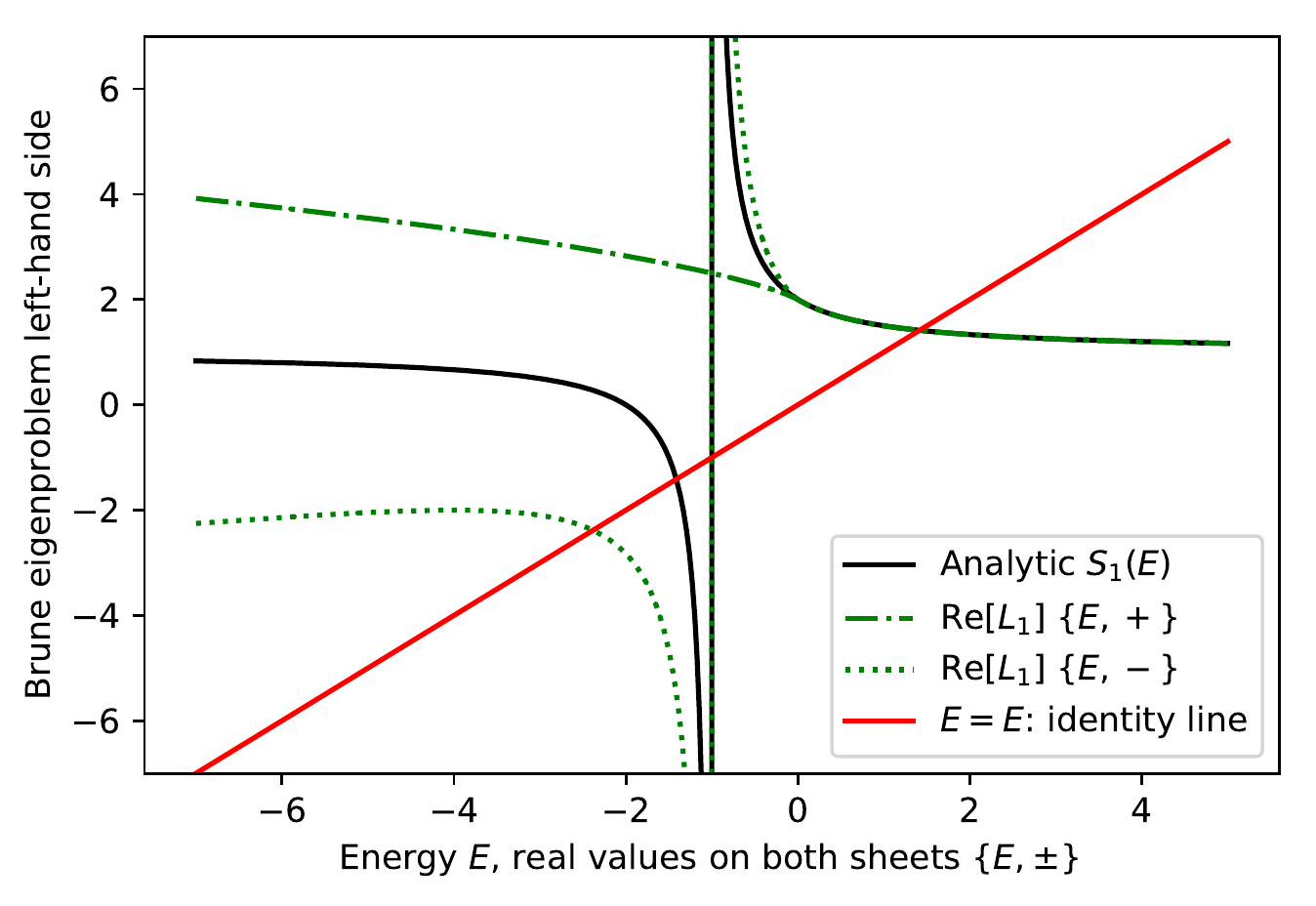}}
  \subfigure[\ Negative resonance energy $E_\lambda = -2$.]{\includegraphics[width=0.49\textwidth]{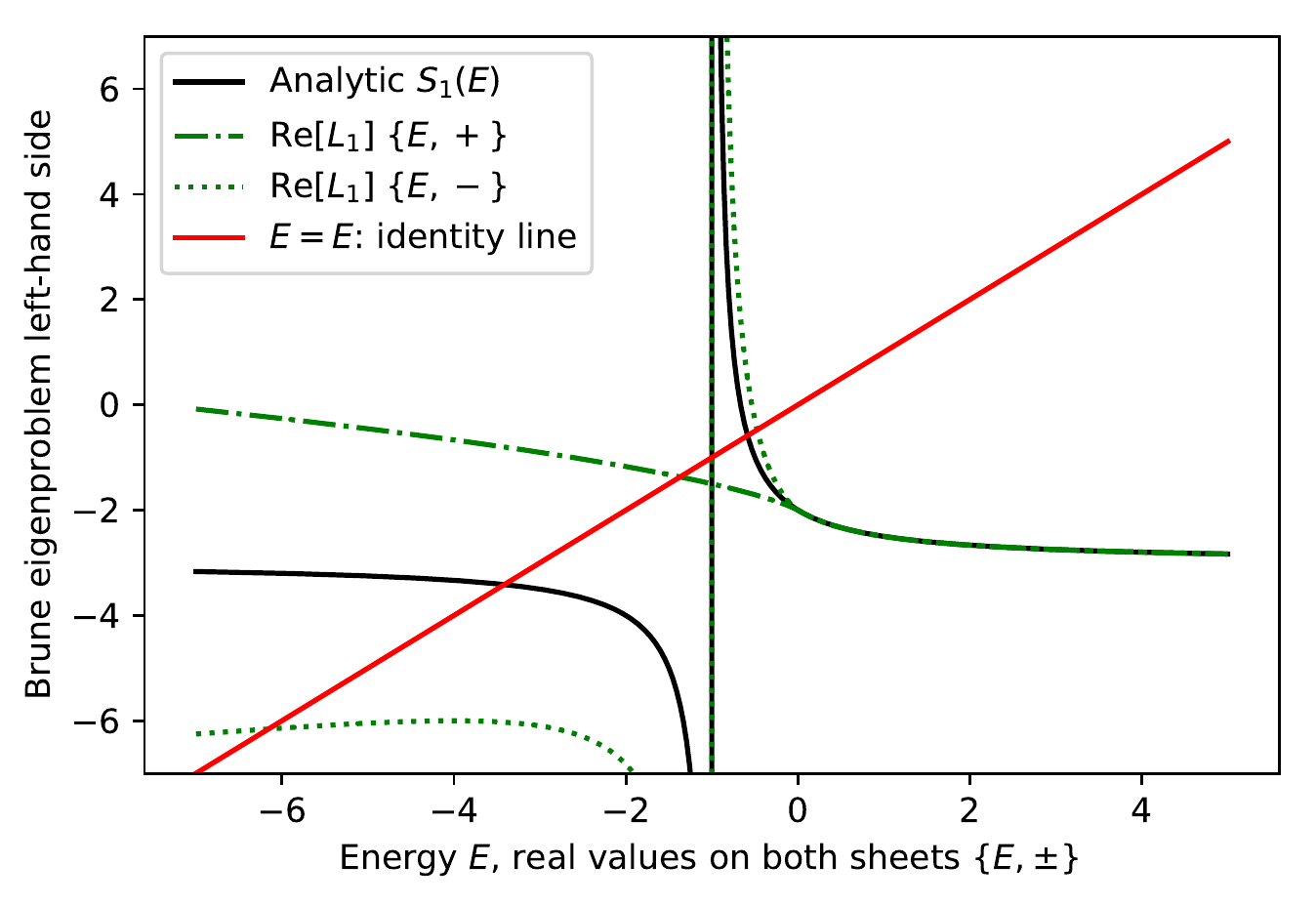}}
  \caption{\small{Brune eigenproblem (\ref{eq:: elemental Brune problem}) for 1-level 1-channel p-wave: \\ comparison of real solutions from definitions (\ref{eq:: Def S = Re[L], P = Im[L]}) versus (\ref{eq:: Def S and P analytic continuation from L}), for angular momentum $\ell_c = 1$, neutral particles energy-wavenumber mapping (\ref{eq:rho_c massive}), resonance width $\gamma_{\lambda,c} = 1$, using $B_c = - \ell_c$ convention and zero threshold $E_{T_c}= 0$. Units such that $\rho_0 = 1$. Since both have a real sub-threshold poles, both will yield two solutions (crossing the E=E diagonal), one above and one below the discontinuity. If at threshold energy $E_{T_c}$ the left hand side of (\ref{eq:: elemental Brune problem}) is above the E=E diagonal, then the above-threshold solutions from both definitions coincide. In any case, the sub-threshold solutions differ. Behavior is analogous for all odd angular momenta $\ell_c \equiv 1 (\mathrm{mod} 2)$.}}
  \label{fig:Brune_argument_odd_angular_momenta}
\end{figure}

Theorem \ref{theo::branch_brune_poles} establishes the existence of sub-threshold alternative shadow poles when the legacy Lane \& Thomas definition (\ref{eq:: Def S = Re[L], P = Im[L]}) is chosen for the shift function $S_c(E)$. 
If instead the analytic continuation definition (\ref{eq:: Def S analytic}) is chosen, we now show in theorem \ref{theo::analytic_Brune_poles} that this unfolds the Riemann surface for the shift function $S_c(E)$ so that no branch points are required to define the alternative analytic parameters.
We argue in a follow-up article that the analytic continuation approach (\ref{eq:: Def S and P analytic continuation from L}) is the physically correct one \cite{Ducru_Scattering_Matrix_of_Complex_Wavenumbers_2019}, as it conserves the meromorphic properties of the Kapur-Peierls operator, which preserves general unitarity, cancels non-physical poles out of the scattering matrix $\boldsymbol{U}(E)$ otherwise spuriously introduced by the Lane \& Thomas approach (\ref{eq:: Def S = Re[L], P = Im[L]}), allows for parameters transform under change of channel radius, and still should close cross sections below channel thresholds. 
Though there is no absolute consensus yet amongst the community as to which approach ought to be valid, both yield identical results for real energies above threshold (real wavenumbers $k_c \in \mathbb{R}$) in the case of exact R-matrix formalism (but not the Reich-Moore approximation as we show in section \ref{sec:Generalized Brune parameters for Reich-Moore approximation}).

\begin{theorem}\label{theo::analytic_Brune_poles} \textsc{Alternative Analytic Poles}. \\
Let the alternative analytic poles $\left\{\widetilde{E_i}\right\}$ be the solutions of the Brune generalized eigenproblem (\ref{eq:Brune eigenproblem}), using the analytic continuation definition (\ref{eq:: Def S and P analytic continuation from L}) for the shift $S_c(E)$, and let $N_S$ be the number of such solutions, then:
\begin{itemize}
    \item the alternative analytic poles are in general complex, and live on the single sheet of the unfolded Riemann surface from (\ref{eq:rho_c(E) mapping}) mapping: $\left\{\widetilde{E_i} \right\} \in \mathbb{C}^{N_S}$, 
    \item in the neutral particle case, there are exactly $N_S$ complex alternative analytic poles with:
\begin{equation}
N_S = N_\lambda + \sum_{c=1}^{N_c}\ell_c
\label{eq:N_S Brune poles}
\end{equation}
    \item in the charged particles case, there is a countable infinity of complex alternative analytic poles: $N_S = \infty$,
    \item for each level $\lambda$, there exists a real principal (or resonant) alternative analytic pole. These $N_\lambda$ principal poles are the same as the principal alternative branch poles of theorem \ref{theo::branch_brune_poles},
    \item the number $N_S^{\mathbb{R}}$ of real alternative analytic poles, $\left\{\widetilde{E_i} \right\} \in \mathbb{R}^{N_S^{\mathbb{R}}}$, is greater than the number of levels, $ N_S^{\mathbb{R}} \geq N_\lambda$, and depends on the values of the resonance parameters $\big\{E_\lambda, \gamma_{\lambda c}, B_c, E_{T_c}, a_c \big\}$ -- though in a way that is invariant under change of boundary-condition $B_c$,
    \item each neutral particle, odd angular momentum $\ell_c \equiv 1 \; (\mathrm{mod}\; 2)$, channel adds at least one real alternative analytic shadow pole below threshold,
\end{itemize}
so that the number $N_S$ of complex and $N_S^{\mathbb{R}}$ of real alternative analytic poles is greater than the number $N_\lambda$ of levels: $ N_S \geq N_S^{\mathbb{R}} \geq N_\lambda$.
\end{theorem}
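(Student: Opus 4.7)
The plan is to exploit lemma \ref{lem:: analytic S_c and P_c lemma}, which guarantees that under the analytic continuation definition (\ref{eq:: Def S and P analytic continuation from L}) the shift function $S_c(E)$ is a single-valued meromorphic function of $\rho_c^2(E)$, so the sheets of the Riemann mapping (\ref{eq:rho_c(E) mapping}) are unfolded. Consequently, the generalized eigenproblem (\ref{eq:Brune eigenproblem}), equivalent to the determinantal equation $\det(\boldsymbol{e}-E\Id{}-\boldsymbol{\gamma}(\boldsymbol{S}(E)-\boldsymbol{B})\boldsymbol{\gamma}^\mathsf{T})=0$, becomes a \emph{single} meromorphic equation in $E\in\mathbb{C}$. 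Because $S_c(E)$ is complex-valued off the real axis, the matrix $\boldsymbol{e}-\boldsymbol{\gamma}(\boldsymbol{S}(E)-\boldsymbol{B})\boldsymbol{\gamma}^\mathsf{T}$ is no longer Hermitian for complex $E$, so its fixed-point energies $\widetilde{E_i}$ are in general complex and live on the single unfolded sheet.

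For the neutral-particle count, I would write each shift as $S_c(E)-B_c=\widetilde{N}_c(E)/Q_c(E)$ with $Q_c$ a polynomial of degree $\ell_c$ in $E$, as read off from table \ref{tab::S_and_P_expressions_neutral} via lemma \ref{lem:: analytic S_c and P_c lemma}. Setting $\overline{Q}(E)\triangleq\prod_c Q_c(E)$ and $M(E)\triangleq\boldsymbol{e}-E\Id{}-\boldsymbol{\gamma}(\boldsymbol{S}(E)-\boldsymbol{B})\boldsymbol{\gamma}^\mathsf{T}$, inspection of the diagonal-product leading term shows that $\overline{Q}(E)^{N_\lambda}\det M(E)$ is a polynomial of degree exactly $N_\lambda(1+\sum_c\ell_c)$. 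At each simple zero $E^*$ of some $Q_c$, the shift $S_c$ has a simple pole, and the rank-one perturbation $\gamma_c(S_c-B_c)\gamma_c^\mathsf{T}$ forces $\det M(E)$ to have a simple pole; hence $\overline{Q}(E)^{N_\lambda}\det M(E)$ carries a zero of order $N_\lambda-1$ at $E^*$ introduced purely by cancellation. Subtracting the $\sum_c\ell_c(N_\lambda-1)$ spurious zeros from the total $N_\lambda(1+\sum_c\ell_c)$ leaves exactly $N_\lambda+\sum_c\ell_c$ genuine zeros of $\det M(E)$, which is the claim (\ref{eq:N_S Brune poles}). For charged particles, the Whittaker outgoing-wave functions produce a countable infinity of zeros and hence a countable infinity of poles of $S_c(E)$, so the same argument delivers $N_S=\infty$.

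To identify the $N_\lambda$ real principal poles I would re-run Brune's three-step monotonicity argument as in the proof of theorem \ref{theo::branch_brune_poles}, restricted this time to the real axis above threshold: by lemma \ref{lem:: analytic S_c and P_c lemma}, definitions (\ref{eq:: Def S = Re[L], P = Im[L]}) and (\ref{eq:: Def S and P analytic continuation from L}) coincide for $E\geq E_{T_c}$, and in that regime $S_c(E)$ is real, continuous, and monotonically increasing by \cite{Brune_Mark_monotonic_properties_of_shift_2018}; the spectral theorem then supplies $N_\lambda$ real eigenvalue curves each crossing the identity line $E=E$ exactly once, and these crossings coincide with the principal branch poles of theorem \ref{theo::branch_brune_poles}. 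For the sub-threshold real shadow poles, lemma \ref{lem:: analytic S_c and P_c lemma} furnishes, for each neutral-particle channel with odd $\ell_c$, a single real sub-threshold pole of $S_c(E)$ through which one eigenvalue curve jumps from $-\infty$ to $+\infty$, forcing by the intermediate-value theorem at least one additional crossing of the identity line below threshold, hence at least one extra real shadow pole per such channel. Invariance under change of boundary condition follows immediately from (\ref{eq:: R_B invariance for B'}): since $\boldsymbol{R}_S^{-1}=\boldsymbol{R}_B^{-1}+\boldsymbol{B}-\boldsymbol{S}=\boldsymbol{R}_{B'}^{-1}+\boldsymbol{B}'-\boldsymbol{S}$, the determinantal equation (\ref{eq:R_S by Brune det search}) is $\boldsymbol{B}$-independent and so is the whole multi-set $\{\widetilde{E_i}\}$.

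The chief obstacle will be making the degree count of the second paragraph fully airtight: I will have to confirm that the leading coefficient of $\overline{Q}^{N_\lambda}\det M$ does not vanish and that every pole of each $S_c$ induces a genuine simple pole of $\det M$ rather than cancelling against a numerator zero. These are generic statements on the Wigner-Eisenbud parameters, so the non-generic coincidences (e.g. $\gamma_{\lambda c}=0$ for all $\lambda$, or a resonance energy $E_\lambda$ accidentally coinciding with a root of some $Q_c$) will need to be set aside as lower-dimensional exceptional subsets of parameter space. A more delicate, qualitative difficulty is to guarantee that the non-monotonic sub-threshold behavior of $S_c$ for even-$\ell$ channels does not produce uncontrolled additional real crossings, so that the lower bound $N_S^{\mathbb{R}}\geq N_\lambda$ and the odd-$\ell$ shadow-pole count come out as claimed rather than being polluted by accidental even-$\ell$ crossings.
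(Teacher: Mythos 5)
Your proposal is essentially correct and reproduces the overall architecture of the paper's proof: lemma \ref{lem:: analytic S_c and P_c lemma} to unfold the Riemann sheets and make the determinantal problem (\ref{eq:R_S by Brune det search}) a single meromorphic equation in $E$; a degree count for the neutral-particle case; Brune's above-threshold monotonicity argument (where definitions (\ref{eq:: Def S = Re[L], P = Im[L]}) and (\ref{eq:: Def S and P analytic continuation from L}) coincide) for the $N_\lambda$ real principal poles; the real sub-threshold pole of $S_c$ for odd $\ell_c$ to force extra real crossings; and invariance (\ref{eq:: R_B invariance for B'}) for $B_c$-independence. Where you genuinely diverge is the central counting step. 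The paper expands $\mathrm{det}\left(\boldsymbol{R}_S^{-1}\right)$ (or $\mathrm{det}\,M$ when $N_c \geq N_\lambda$) by $n$-linearity into traces of adjugate products and invokes its bespoke diagonal-divisibility lemma \ref{lem::diagonal divisibility and capped multiplicities} to identify the dominant-degree polynomial after rationalization. You instead rationalize by $\overline{Q}^{N_\lambda}$, read off the total degree $N_\lambda(1+\sum_c\ell_c)$ from the diagonal product, and subtract the spurious zeros of order $N_\lambda-1$ that the prefactor plants at each pole of each $S_c$ --- the key observation being that the singular part of $\boldsymbol{\gamma}\boldsymbol{S^0}\boldsymbol{\gamma}^\mathsf{T}$ at a simple pole of a single channel's shift is rank one, so $\mathrm{det}\,M$ acquires at most a simple pole there. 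This is arguably cleaner and more transparent than the paper's trace expansion, and it localizes exactly where the count can fail. The one substantive thing you relegate to ``non-generic exceptions'' that the paper treats head-on is the capped-multiplicity phenomenon: when $m_c$ channels share the same shift function (hence the same poles), the singular part has rank up to $\min\{m_c,N_\lambda\}$, the spurious-zero order changes, and the count (\ref{eq:N_S Brune poles}) must be replaced by the capped version (\ref{eq: capped multiplicities N_S}). Your rank argument actually contains the seed of that refinement --- the pole order of $\mathrm{det}\,M$ is bounded by the rank of the singular part --- so it would be worth making it explicit rather than excluding it as a measure-zero coincidence, since repeated channels with identical $\ell_c$ and channel radius do occur in practice. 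Your remaining caveats (non-vanishing leading coefficient, genuine versus removable poles of $\mathrm{det}\,M$) are shared genericity assumptions that the paper also makes implicitly.
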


\begin{proof}

The proof follows the one of theorem \ref{theo::branch_brune_poles}. 
However, when considering the left-hand side of (\ref{eq:Brune eigenproblem}), the shift function is now defined from analytic continuation definition (\ref{eq:: Def S and P analytic continuation from L}), which in general entails $S_c(E)$ is a complex number. 
This entails the left-hand side of (\ref{eq:Brune eigenproblem}) is now a complex symmetric matrix. 
In general, a complex symmetric matrix is not diagonalizable, has no special properties on its spectrum, and we refer to reference literature on its Jordan canonical form and other properties 
\cite{Craven_complex_symmetric_1969, Nondefective_complex_symmetric_matrices_1985, complex_symmetric_matrix_SVD_1988, Scott_complex_symmetric_1993, fast_diag_of_complex_symmetric_matrices_for_quantum_applications_1997, Complex_symmetric_operators_2005, Complex_symmetric_operators_II_2007}.
Nonetheless, we know the left-hand side of (\ref{eq:Brune eigenproblem}) will be real-symmetric, thus diagonalizable, for real energies above threshold, which hints (but does not prove) it is probably a good assumption to assume the complex symmetric matrix to be non-defective in general. 
Regardless of the eigenvectors, we can search for the alternative poles $\left\{\widetilde{E_i} \right\} $ by solving problem (\ref{eq:R_S by Brune det search}) directly (c.f. discussion around equation (51) in \cite{Brune_2002}).
Here, the analytic properties of definition (\ref{eq:: Def S and P analytic continuation from L}), established in lemma \ref{lem:: analytic S_c and P_c lemma}, entail the determinant in (\ref{eq:R_S by Brune det search}) is a meromorphic operator of $\rho^2$, which unfolds mapping (\ref{eq:rho_c(E) mapping}) so that all the solutions of (\ref{eq:R_S by Brune det search}) live on one single sheet. 

In the case of $N_c$ massive neutral channels, the shift factor $S_c(\rho)$ is a rational fraction in $\rho^2$ with a degree of $\ell_c$ (in $E$ space) in the denominator, where $\ell_c$ is the angular momentum of the channel (c.f. table \ref{tab::S_and_P_expressions_neutral} and lemma \ref{lem:: analytic S_c and P_c lemma} with table \ref{tab::roots of the outgoing wave functions}).
The search for the poles of the $\boldsymbol{R}_S$ operator (\ref{eq:R_S by Brune det search}) will then yield $N_S$ complex alternative poles $\left\{\widetilde{E_i}\right\} \in \mathbb{C}$ with $N_S = N_\lambda + \sum_{c=1}^{N_c}\ell_c$, as stated in (\ref{eq:N_S Brune poles}).
The intuition behind this number $N_S$ is that both the R-matrix (\ref{eq:R expression}) and the diagonal matrix of shift functions, $\boldsymbol{S}(E) \triangleq \mathrm{\boldsymbol{diag}}\left( S_c(E)\right)$, will each contribute their number of poles, $N_\lambda$ and $\sum_c \ell_c$ respectively, adding them up to yield $N_S = N_\lambda + \sum_{c=1}^{N_c}\ell_c$ solutions (\ref{eq:N_S Brune poles}) to the determinant problem (\ref{eq:R_S by Brune det search}).
We achieved a formal proof of result (\ref{eq:N_S Brune poles}), though it is somewhat technical. 
It rests on the diagonal divisibility and capped multiplicities lemma \ref{lem::diagonal divisibility and capped multiplicities}, which we apply to the developed rational fraction $\mathrm{det}\left( \boldsymbol{R}_S^{-1}(E)\right)$ in (\ref{eq:R_S by Brune det search}), or directly to (\ref{eq:Brune eigenproblem}), depending on whether $N_\lambda \geq N_c$ or $N_c \geq N_\lambda$. 
In the (most common) case of $N_\lambda \geq N_c$, we develop $\mathrm{det}\left( \boldsymbol{R}_S^{-1}\right)(E) = \mathrm{det}\left( \boldsymbol{R}^{-1} - \boldsymbol{S^0}\right)(E)$ by n-linearity:
  $ \mathrm{det}\left( \boldsymbol{R}^{-1} - \boldsymbol{S^0}\right)= \mathrm{det}\left( \boldsymbol{R}^{-1}\right) \mathrm{det}\left( \Id{} - \boldsymbol{R}\boldsymbol{S^0}\right)$ with $\mathrm{det}\left( \Id{} - \boldsymbol{R}\boldsymbol{S^0}\right) = 1 - \mathrm{Tr}\left( \boldsymbol{R}\boldsymbol{S^0}\right) + \hdots + \mathrm{Tr}\left( \boldsymbol{\mathrm{Adj}}\left(-\boldsymbol{R}\boldsymbol{S^0}\right)\right) + \mathrm{det}\left(-\boldsymbol{R}\boldsymbol{S^0}\right) $, so that: 
  $\mathrm{det}\left( \boldsymbol{R}_S^{-1}\right) = \mathrm{det}\left( \boldsymbol{R}^{-1}\right)  - \mathrm{Tr}\left( \boldsymbol{\mathrm{Adj}}\left(\boldsymbol{R}^{-1}\right)\boldsymbol{S^0}\right) + \hdots - \mathrm{Tr}\left(\boldsymbol{R}^{-1}\boldsymbol{\mathrm{Adj}}\left(\boldsymbol{S^0}\right)\right) +  (-1)^{N_c} \mathrm{det}\left(\boldsymbol{S^0}\right)  $.
  In the latter expression, $\boldsymbol{R}^{-1}(E) = \boldsymbol{\gamma}^+ \left(\boldsymbol{e} - E \Id{} \right) {\boldsymbol{\gamma}^\mathsf{T}}^+$ has no poles, so its determinant is a polynomial $\mathrm{det}\left( \boldsymbol{R}^{-1}\right)(E) \in \mathbb{C}[X]$. 
 The rational fraction with greatest degree in the denominator is $\mathrm{det}\left(\boldsymbol{S^0}\right) (E) \in \mathbb{C}(X)$.
 For neutral particles $S_c^0(E) = \frac{s^0_c(E)}{d_c(E)} $, where the denominator is of degree $\ell_c = \mathrm{deg}\left(d_c(E) \right)$ in $E$ space (c.f. table \ref{tab::S_and_P_expressions_neutral}), so that to rationalize the rational fraction $\mathrm{det}\left( \boldsymbol{R}_S^{-1}\right)(E) \in \mathbb{C}(X)$, we must multiply it by the denominator of $\mathrm{det}\left(\boldsymbol{S^0}\right) (E)$, which is $\prod_{c=1}^{N_c} d_c(E)$, a polynomial of degree $\sum_c \ell_c$. 
 That is $\left(\prod_{c=1}^{N_c} d_c(E) \right) \times \mathrm{det}\left( \boldsymbol{R}_S^{-1}\right)(E) = \left(\prod_{c=1}^{N_c} d_c(E) \right) \times \mathrm{det}\left( \boldsymbol{R}^{-1}\right)(E)  + \hdots + (-1)^{N_c}\prod_{c=1}^{N_c} s^0_c(E)  \in \mathbb{C}[X]$.
 The dominant degree polynomial in this expression is $\left(\prod_{c=1}^{N_c} d_c(E) \right) \times \mathrm{det}\left( \boldsymbol{R}^{-1}\right)(E) $. In this expression, the total degree of the polynomial is the sum of the degrees of the product terms.
 We readily have $\mathrm{deg}\left(\prod_{c=1}^{N_c} d_c(E) \right) = \sum_c \ell_c$. 
For the degree of the determinant term $\mathrm{det}\left( \boldsymbol{R}^{-1}\right)(E)$, the application of diagonal divisibility and capped multiplicities lemma \ref{lem::diagonal divisibility and capped multiplicities} stipulates that if $E_{\lambda_1} = E_{\lambda_2} = \hdots = E_{\lambda_{m_\lambda}} $, this multiplicity ${m_\lambda}$ of the resonance energy value $E_\lambda $ will be capped by $N_c$. In practice, this does not happen because the Wigner-Eisenbud resonance parameters $E_\lambda$ are defined as different from each other $E_\lambda \neq E_{\mu\neq \lambda}$.
 This is no longer true in the case $N_c \geq N_\lambda$, where developing the determinant of (\ref{eq:Brune eigenproblem}) directly will similarly yield by n-linearity, and denoting $\boldsymbol{\Delta} \triangleq \boldsymbol{e} - E \Id{}$ for clarity of scripture: 
 $\mathrm{det}\left( \boldsymbol{\Delta} - \boldsymbol{\gamma} \boldsymbol{S^0} \boldsymbol{ \gamma}^\mathsf{T}\right) = \mathrm{det}\left( \boldsymbol{\Delta}\right)  - \mathrm{Tr}\left( \boldsymbol{\mathrm{Adj}}\left(\boldsymbol{\Delta}\right)\boldsymbol{\gamma} \boldsymbol{S^0} \boldsymbol{ \gamma}^\mathsf{T}\right) + \hdots - \mathrm{Tr}\left(\boldsymbol{\Delta} \; \boldsymbol{\mathrm{Adj}}\left(\boldsymbol{\gamma} \boldsymbol{S^0} \boldsymbol{ \gamma}^\mathsf{T}\right)\right) + (-1)^{N_\lambda} \mathrm{det}\left(\boldsymbol{\gamma} \boldsymbol{S^0} \boldsymbol{ \gamma}^\mathsf{T}\right)  $.
Again, in the latter expression the rational fraction with the highest-degree denominator is $\mathrm{det}\left(\boldsymbol{\gamma} \boldsymbol{S^0} \boldsymbol{ \gamma}^\mathsf{T}\right)(E) \in \mathbb{C}(X)$. 
Applying the diagonal divisibility and capped multiplicities lemma \ref{lem::diagonal divisibility and capped multiplicities} to it commands that if there are various channels with the same $S_c(E)$, for instance with the same $\ell_c$ and ${\rho_0}_c$, their multiplicity of occurrence is capped by $N_\lambda$ when rationalizing the fraction  $\mathrm{det}\left(\boldsymbol{\gamma} \boldsymbol{S^0} \boldsymbol{ \gamma}^\mathsf{T}\right)(E) \in \mathbb{C}(X)$, so that $Q(E) \times \mathrm{det}\left(\boldsymbol{\gamma} \boldsymbol{S^0} \boldsymbol{ \gamma}^\mathsf{T}\right)(E) \in \mathbb{C}[X]$ is a polynomial, with $Q(E) \triangleq \left(\underset{d_c \neq d_{c \neq c'}}{\prod_{c = 1} 
^{N_c}}d_c(E) \right) \times  \left( \underset{ d_c = d_{c \neq c'}}{\prod_{c = 1}^{\mathrm{min}\left\{N_c,N_\lambda\right\}} }d_c(E) \right)$.
In the developed expression of the polynomial $ Q(E) \times \mathrm{det}\left( \boldsymbol{\Delta} - \boldsymbol{\gamma} \boldsymbol{S^0} \boldsymbol{ \gamma}^\mathsf{T}\right)$, the dominant degree term is now: $Q(E) \times \mathrm{det}\left( \boldsymbol{\Delta}\right) $, the degree of which is the sum of the degree of each term. The degree of $ \mathrm{det}\left( \boldsymbol{\Delta}\right)$ is $N_\lambda$, whereas the degree of $Q(E)$ is $\mathrm{deg}\left(Q(E)\right) = \sum_{c=1 | \ell_c \neq \ell_{c'}}^{N_c} \ell_c + \sum_{c=1 | \ell_c = \ell_{c'}}^{\mathrm{min}\left\{N_\lambda, N_c\right\}} \ell_c $.
Hence, we find back the expression (\ref{eq:N_S Brune poles}) to be proved: $N_S = N_\lambda + \sum_{c=1}^{N_c}\ell_c$, but with the additional subtlety that the multiplicities (repeating occurrences) are capped, both for $\sum_{\begin{array}{c}
     \tiny{E_\lambda \mathrm{\, multiplicity }}  \\
    \tiny{\mathrm{ capped \, at} \, N_c}
 \end{array}} \mathrm{deg}\left(E_\lambda - \rho^2(E)\right)$ and for $\sum_{\begin{array}{c}
   \tiny{S_c \mathrm{\, multiplicity }}  \\
    \tiny{\mathrm{ capped \, at} \, N_\lambda}
 \end{array}} \mathrm{deg}\left(d_c(\rho(E))\right)$, so that the final, exact number of complex eigenvalues to Brune's generalized eigenproblem (\ref{eq:Brune eigenproblem}) in the neutral channels case is: 
\begin{equation}
\begin{IEEEeqnarraybox}[][c]{rcl}
 N_S =  N_\lambda + \sum_{\begin{array}{c}
   \tiny{S_c \mathrm{\, multiplicity }}  \\
    \tiny{\mathrm{ capped \, at} \, N_\lambda}
 \end{array}} \ell_c 
\IEEEstrut\end{IEEEeqnarraybox}
\label{eq: capped multiplicities N_S}
\end{equation}
This means that if many channels, say $m_c$, have the same shift function $S_c = S_{c'}$, the resulting $\ell_c = \ell_{c'} $ will only be added $\mathrm{min}\left\{m_c, N_\lambda\right\}$ times in the sum (\ref{eq: capped multiplicities N_S}). \\
A final technical note to state that this number $N_S$ of poles (\ref{eq: capped multiplicities N_S}) is true in $E$ space, as we have showed in lemma \ref{lem:: analytic S_c and P_c lemma} that definition (\ref{eq:: Def S and P analytic continuation from L}) unfolds the Riemann sheet of (\ref{eq:rho_c(E) mapping}). 
If we were performing this in $\rho$ space, we would thus simply multiply the degrees by 2.
This is not true if we were searching for the poles of the Kapur-Peierls operator $\boldsymbol{R}_L$, as the mapping of $\rho(E)$ is not one-to-one anymore. 
From table \ref{tab::L_values_neutral}, we would be able to perform the same analysis that yielded (\ref{eq: capped multiplicities N_S}), but it would have to be in $\rho$ space. %, as we did to establish (\ref{eq::NL number of poles}).

In the charged particles case, $S_c(E)$ has an infinity of poles (c.f. lemma \ref{lem:: analytic S_c and P_c lemma}). Extending our proof of (\ref{eq: capped multiplicities N_S}) from the neutral particles to the charged particles ones would thus yield a countable infinity of complex alternative analytic poles. 

A key question is: how many of the $N_S$ complex alternative poles are real?
To address it, we come back to the three-step Brune argument and look for real eigenvalues from the left-hand-side of (\ref{eq:Brune eigenproblem}) that will cross the right-hand side identity line $E=E$ for real values. 
Here again, Brune's three-step argument will guarantee at least $N_\lambda$ real solutions.
There are in general more solutions however, and as for the alternative shadow poles of theorem \ref{theo::branch_brune_poles}, the number of real alternative analytic poles, solutions to (\ref{eq:Brune eigenproblem}), will depend on the R-matrix parameters $\big\{E_\lambda, \gamma_{\lambda c}, B_c, E_{T_c}, a_c \big\}$, in a way that is invariant under change of boundary-condition $B_c$ (plug-in invariance (\ref{eq:: R_B invariance for B'}) into (\ref{eq:R_S by Brune det search})).
We illustrate various such cases in figure \ref{fig:Brune_argument_even_angular_momenta}. 
However, each neutral particle channel with odd angular momentum $\ell_c \equiv 1 \; (\mathrm{mod} \; 2 )$ will add at least one real sub-threshold solution to the $N_\lambda$ ones, due to the real sub-threshold pole of $S_c(E)$ discovered in lemma \ref{lem:: analytic S_c and P_c lemma}.
This behavior is depicted in figure \ref{fig:Brune_argument_odd_angular_momenta}.
\end{proof}

\begin{lem}\label{lem::diagonal divisibility and capped multiplicities}
\textsc{Diagonal divisibility and capped multiplicities}.\\
Let $\boldsymbol{M} \in \mathbb{C}^{m\times n}$ be a complex matrix and $\boldsymbol{D}(z) \in \boldsymbol{\mathrm{Diag}}_{n}\left(\mathbb{C}\left(X\right)\right)$ be a diagonal matrix of complex rational functions with simple poles, that is $D_{ij}(z) = \delta_{ij} \frac{R_i(z) \in \mathbb{C}\left[ X \right] }{P_i(z) \in \mathbb{C}\left[ X \right]}$, with $\mathbb{C}\left[ X \right]$ designating the set of polynomials and $\mathbb{C}\left(X\right)$ the set of rational expressions, and we assume $P_i(z)$ has simple roots.\\
Let $Q(z) \in \mathbb{C}\left[ X \right]$ be the denominator of $\mathrm{det}\left( \boldsymbol{D} \right)(z)$, but with all multiplicities capped by $m$, i.e.
\begin{equation}
Q(z) \triangleq \prod_{\begin{array}{c}
     j = 1  \\
     P_j \neq P_{i \neq j}
\end{array}}^n P_j(z)\prod_{\begin{array}{c}
     i = 1  \\
     P_i = P_{i \neq j}
\end{array}}^{\mathrm{mim}\left\{n,m\right\}} P_i(z)
\end{equation}
then $Q(z)$ is the denominator of $\mathrm{det}\left( \boldsymbol{M} \boldsymbol{D}(z) \boldsymbol{M}^\mathsf{T} \right) $, so that:
\begin{equation}
Q(z)\cdot \mathrm{det}\left( \boldsymbol{M} \boldsymbol{D}(z) \boldsymbol{M}^\mathsf{T} \right)  \in \mathbb{C}\left[ X \right]
\end{equation}
\end{lem}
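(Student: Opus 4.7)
The plan is to apply the Cauchy--Binet formula to $\det\bigl(\boldsymbol{M}\boldsymbol{D}(z)\boldsymbol{M}^\mathsf{T}\bigr)$ and then combinatorially bound, for each summand, the multiplicity with which every distinct denominator polynomial $P_i$ can occur. First dispose of the trivial case $m > n$: then $\boldsymbol{M}\boldsymbol{D}(z)\boldsymbol{M}^\mathsf{T}$ has rank at most $n < m$, so its determinant vanishes identically. Henceforth assume $m \le n$, in which case Cauchy--Binet, together with the fact that $\boldsymbol{D}(z)$ is diagonal, yields
\begin{equation*}
\det\bigl(\boldsymbol{M}\boldsymbol{D}(z)\boldsymbol{M}^\mathsf{T}\bigr) = \sum_{\substack{S \subseteq \{1,\ldots,n\} \\ |S|=m}} \bigl(\det \boldsymbol{M}_S\bigr)^2 \prod_{k \in S} \frac{R_k(z)}{P_k(z)},
\end{equation*}
where $\boldsymbol{M}_S$ denotes the $m \times m$ submatrix of $\boldsymbol{M}$ made of the columns indexed by $S$.

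Next I would group the indices by equal denominator polynomial. For each distinct $P$ appearing in the list $\{P_1, \ldots, P_n\}$, set $\mathcal{I}_P \triangleq \{i : P_i = P\}$ and $\mu_P \triangleq |\mathcal{I}_P|$. In the summand indexed by $S$, the polynomial $P$ appears in $\prod_{k \in S} P_k(z)$ with exponent $|S \cap \mathcal{I}_P|$, which is bounded by both $|S| = m$ and $|\mathcal{I}_P| = \mu_P$, hence by $\min\{\mu_P, m\}$. By construction this is precisely the exponent of $P$ in $Q(z)$, so $\prod_{k \in S} P_k(z) \mid Q(z)$ for every $S$ contributing to the Cauchy--Binet sum.

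Multiplying the identity through by $Q(z)$ then gives
\begin{equation*}
Q(z) \cdot \det\bigl(\boldsymbol{M}\boldsymbol{D}(z)\boldsymbol{M}^\mathsf{T}\bigr) = \sum_{\substack{S \subseteq \{1,\ldots,n\} \\ |S|=m}} \bigl(\det \boldsymbol{M}_S\bigr)^2 \,\frac{Q(z)}{\prod_{k\in S} P_k(z)} \prod_{k \in S} R_k(z),
\end{equation*}
in which every summand is a product of polynomials in $z$ and hence belongs to $\mathbb{C}[X]$, proving the claim. The only non-routine step is the capping argument, and I do not anticipate any real obstacle there: it is essentially the pigeonhole observation that a size-$m$ subset of $\{1,\ldots,n\}$ can meet the index set $\mathcal{I}_P$ of any shared denominator in at most $\min\{\mu_P, m\}$ elements, matching exactly the multiplicity prescribed by $Q$. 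The hypothesis that each $P_i$ has simple roots is not actually needed for this divisibility argument; it simply ensures that the denominators are square-free polynomials, which is the natural setting in which the lemma is invoked in the proof of theorem \ref{theo::analytic_Brune_poles}.
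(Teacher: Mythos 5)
Your proof is correct, and it takes a genuinely different -- and in fact tighter -- route than the paper's. The paper expands $\mathrm{det}\left(\boldsymbol{M}\boldsymbol{D}\boldsymbol{M}^\mathsf{T}\right)$ via the Leibniz formula and then distributes the product $\prod_{i=1}^{m}\sum_{j=1}^{n}$ into a sum over all tuples $(j_1,\hdots,j_m)\in\{1,\hdots,n\}^m$; each term then has denominator $\prod_{i}P_{j_i}(z)$, and the capping is argued by saying a given $P_j$ can occur at most $m$ times in a product of $m$ factors, nor more than its multiplicity in $\mathrm{det}\left(\boldsymbol{D}\right)$. That second clause is the delicate point: a tuple with repeated indices (say $j_1=j_2$) can make a multiplicity-one $P_j$ appear twice in a single product, so the cap by the list-multiplicity $\mu_P$ really rests on the fact that all repeated-index terms vanish (summing over $\sigma$ produces a determinant with two equal columns), an observation the paper leaves implicit. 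Your Cauchy--Binet decomposition builds exactly this into the indexing: the sum runs over $m$-element subsets $S$, the indices within $S$ are automatically distinct, and the bound $|S\cap\mathcal{I}_P|\leq\min\{m,\mu_P\}$ on the exponent of each distinct $P$ matches the exponent prescribed by $Q$ with no further argument. You also dispose of the $m>n$ case explicitly, which the paper does not discuss. What the paper's route buys is only that it avoids invoking Cauchy--Binet; what yours buys is an airtight multiplicity count plus the useful closed form $\mathrm{det}\left(\boldsymbol{M}\boldsymbol{D}\boldsymbol{M}^\mathsf{T}\right)=\sum_{S}\left(\mathrm{det}\,\boldsymbol{M}_S\right)^2\prod_{k\in S}R_k/P_k$. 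Your closing remarks are also accurate: both proofs really establish the divisibility statement $Q\cdot\mathrm{det}\left(\boldsymbol{M}\boldsymbol{D}\boldsymbol{M}^\mathsf{T}\right)\in\mathbb{C}[X]$ rather than minimality of $Q$ as ``the'' denominator, and the simple-roots hypothesis plays no role in that divisibility.
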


\begin{proof}
Leibniz's determinant formula yields:
\begin{equation*}
\begin{IEEEeqnarraybox}[][c]{rcl}
\mathrm{det}\left( \boldsymbol{M} \boldsymbol{D}(z) \boldsymbol{M}^\mathsf{T} \right)  & = & \sum_{\sigma \in S_m} \epsilon(\sigma) \prod_{i=1}^{m}\sum_{j=1}^{n}M_{ij}M_{\sigma{i} j} \frac{R_j(z)}{P_j(z)}  \\
\IEEEstrut\end{IEEEeqnarraybox}
\label{eq:: Leibniz formula}
\end{equation*}
Let us now develop the product using the formula:
\begin{equation*}
\begin{IEEEeqnarraybox}[][c]{rcl}
\prod_{i=1}^{m}\sum_{j=1}^{n}x_{i,j} = \sum_{j_1, \hdots , j_m \in \llbracket 1, n \rrbracket ^m } \prod_{i=1}^{m}x_{i,j_i}
\IEEEstrut \end{IEEEeqnarraybox}
\label{eq:: Yoann's touch formula}
\end{equation*}
which leads to:
\begin{equation}
\begin{IEEEeqnarraybox}[][c]{rcl}
\mathrm{det}\left( \boldsymbol{M} \boldsymbol{D} \boldsymbol{M}^\mathsf{T} \right)  & = & \sum_{\sigma \in S_m} \epsilon(\sigma)\sum_{\tiny{\begin{array}{c}
     j_1, \hdots , j_m  \\
      \in \llbracket 1, n \rrbracket ^m 
\end{array}}}  \prod_{i=1}^{m} M_{ij_i}M_{\sigma{i} j_i} \frac{R_{j_i}(z)}{P_{j_i}(z)}  \\
\IEEEstrut\end{IEEEeqnarraybox}
\label{eq:: Developed determinant for divisibility Lemma}
\end{equation}
We here have a sum of products of $m$ terms; thus, the $ \frac{R_{j}(z)}{P_{j}(z)}$ never appear more than $m$ times in each product -- nor more than their multiplicity in $\mathrm{det}\left( \boldsymbol{D} \right)(z)$.
It thus suffices to account for each $P_j(z)$ a number of times that is the maximum between its multiplicity and $m$ in order to rationalize the $\mathrm{det}\left( \boldsymbol{M} \boldsymbol{D}(z) \boldsymbol{M}^\mathsf{T} \right) \in \mathbb{C}(X)$ fraction.
\end{proof}

\begin{figure}[ht!!] % replace 't' with 'b' to force it to be on the bottom
  \centering
  \subfigure[\ Positive resonance energy $E_\lambda = 1$.]{\includegraphics[width=0.49\textwidth]{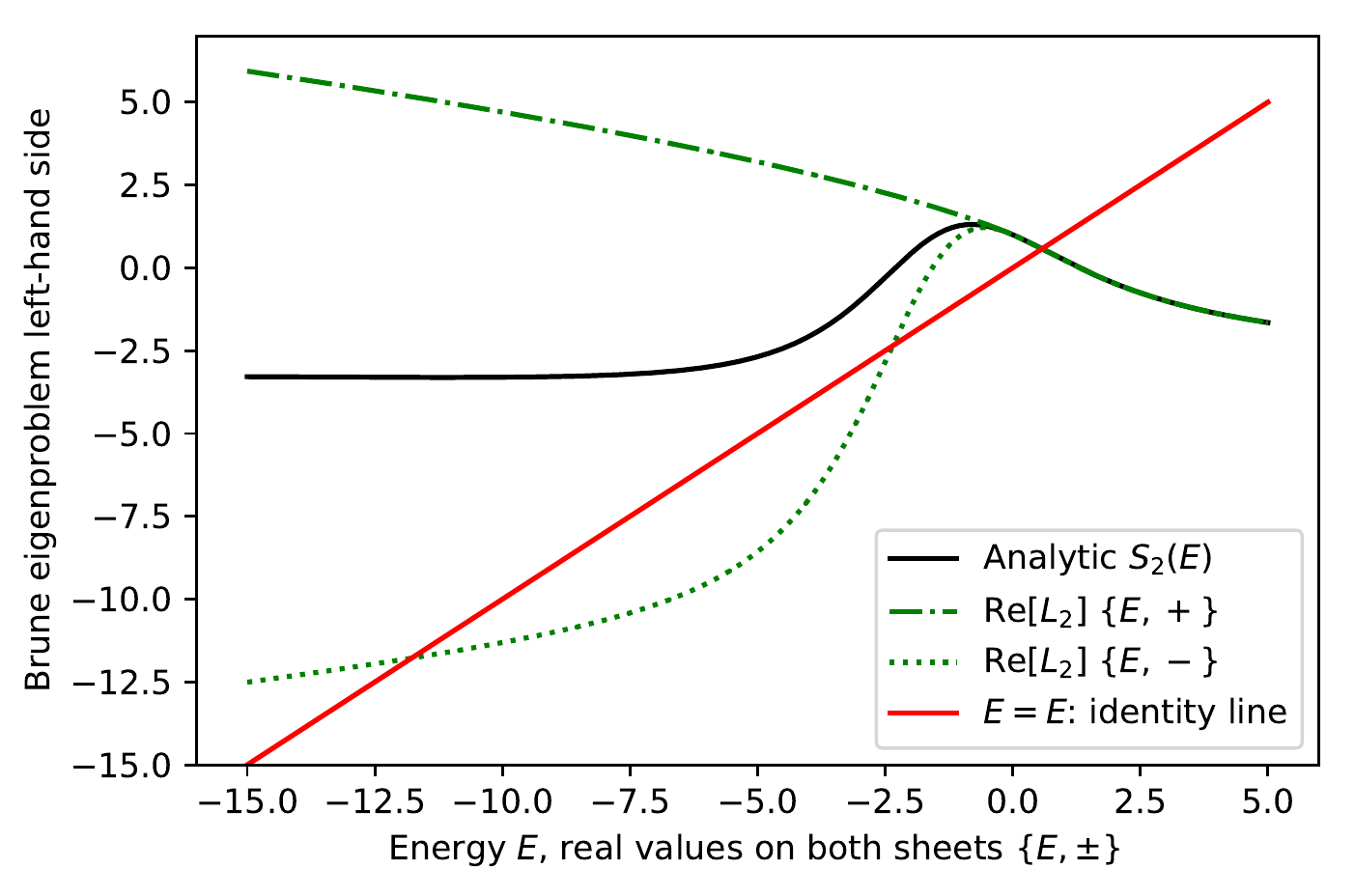}}
  \subfigure[\ Negative resonance energy $E_\lambda = -2$.]{\includegraphics[width=0.49\textwidth]{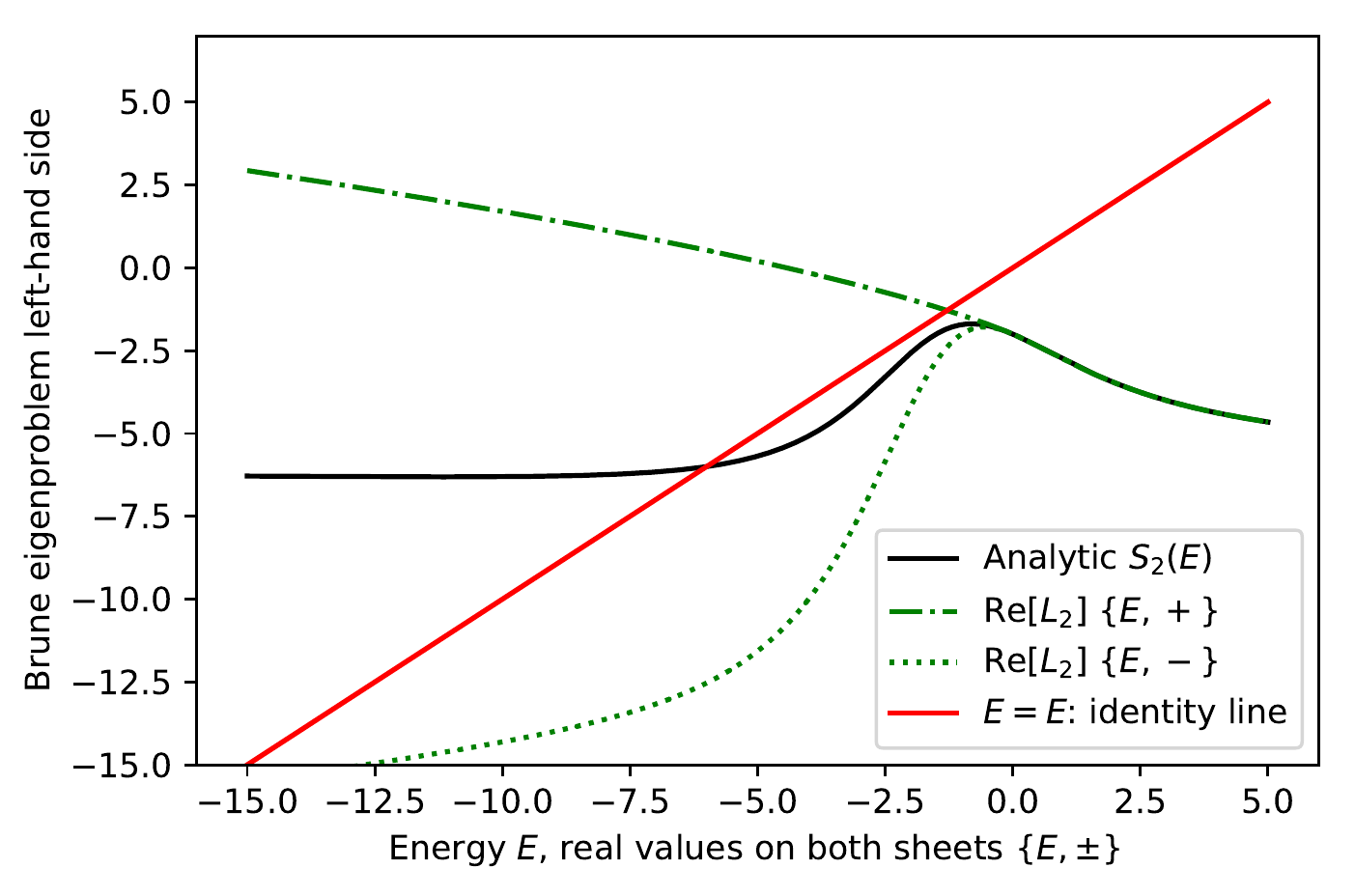}}
  \caption{\small{Brune eigenproblem (\ref{eq:: elemental Brune problem}) for 1-level 1-channel d-wave: \\ comparison of real solutions from definitions (\ref{eq:: Def S = Re[L], P = Im[L]}) versus (\ref{eq:: Def S and P analytic continuation from L}), for angular momentum $\ell_c = 2$, neutral particles energy-wavenumber mapping (\ref{eq:rho_c massive}), resonance width $\gamma_{\lambda,c} = \sqrt{2}$, using $B_c = - \ell_c$ convention and zero threshold $E_{T_c}= 0$. Units such that $\rho_0 = 1$. 
  Since there are no real sub-threshold poles, both can yield one, two, or three solutions (crossing the E=E diagonal), depending on the values of the resonance parameters. If at threshold energy $E_{T_c}$ the left hand side of (\ref{eq:: elemental Brune problem}) is above the E=E diagonal, then the above-threshold solutions from both definitions coincide. In any case, the sub-threshold solutions differ. Behavior is analogous for all even angular momenta $\ell_c \equiv 0 (\mathrm{mod} 2)$.}}
  \label{fig:Brune_argument_even_angular_momenta}
\end{figure}

Importantly, since both shift function $S_c(E)$ definitions (\ref{eq:: Def S = Re[L], P = Im[L]}) and (\ref{eq:: Def S analytic}) coincide above threshold, the solutions to (\ref{eq:Brune eigenproblem}) will be the same above thresholds. 
The discrepancy in the values of the alternative parameters, solutions to (\ref{eq:Brune eigenproblem}), will only differ when certain channels have to be considered below threshold: $S_c(E)$ with $E<E_{T_c}$.

To illustrate these differences, let us consider the simple example of a one-level, one-channel neutral particle interaction, with a zero-threshold $E_{T_c} = 0$, and set about solving the Brune generalized eigenproblem (\ref{eq:Brune eigenproblem}), which here takes the simple scalar form:
\begin{equation}
    E_\lambda - \gamma_{\lambda,c}\big( S_c(E) - B_c \big) \gamma_{\lambda,c} = E
    \label{eq:: elemental Brune problem}
\end{equation}
In figures \ref{fig:Brune_argument_odd_angular_momenta} and \ref{fig:Brune_argument_even_angular_momenta}, we plotted the left and right hand side of this elemental Brune eigenproblem (\ref{eq:: elemental Brune problem}), for both definitions (\ref{eq:: Def S = Re[L], P = Im[L]}) and (\ref{eq:: Def S and P analytic continuation from L}) of the shift function $S_c(E)$, for various values of resonance parameters $\left\{E_\lambda, \gamma_{\lambda,c}\right\}$ and the convention $B_c = -\ell_c$, for different angular momenta $\ell_c$.

In the case of $\ell_c = 1$, depicted in figure \ref{fig:Brune_argument_odd_angular_momenta}, one can observe that the real sub-threshold pole engendered by odd angular momenta (c.f. section \ref{subsubsec::Ambiguity in shift and penetration}) introduces a sub-threshold alternative parameter, where the left-hand side of (\ref{eq:: elemental Brune problem}) crosses the $E=E$ identity line.
In the case of the Lane \& Thomas legacy definition (\ref{eq:: Def S = Re[L], P = Im[L]}), this sub-threshold alternative shadow pole is on the $\big\{ E, - \big\}$ sheet of mapping (\ref{eq:rho_c massive}),  whereas for analytic continuation definition (\ref{eq:: Def S and P analytic continuation from L}) it is on the same, unique sheet. The same behavior will be observable for all odd angular momenta $\ell_c \equiv 1 \; (\mathrm{mod} \; 2)$.

In the case $\ell_c=2$, depicted in figure \ref{fig:Brune_argument_even_angular_momenta}, the non-purely-imaginary poles $\left\{ \omega_n , \omega_n^* \right\} \not\in \mathrm{i}\mathbb{R}$ (c.f. lemma \ref{lem:: analytic S_c and P_c lemma} and table \ref{tab::roots of the outgoing wave functions}) will impact the shift function $S_c(\rho_c)$ in ways that may or may not produce additional real solutions $\left\{\widetilde{E_i}\right\} \in \mathbb{R}$ to the generalized eigenproblem (\ref{eq:Brune eigenproblem}).
This behavior is reported in figure \ref{fig:Brune_argument_even_angular_momenta}, where one can observe that, depending on the R-matrix parameter values $\big\{ E_\lambda, \gamma_{\lambda,c}, B_c \big\}$, there are either one, two (tangential for the analytic continuation definition), or three solutions to the Brune generalized eigenproblem (\ref{eq:: elemental Brune problem}).
For instance, one can see that definition (\ref{eq:: Def S = Re[L], P = Im[L]}) can yield situations with two sub-threshold alternative branch poles -- one on the $\big\{E,+\big\}$ branch and one shadow pole (i.e. on the $\big\{E,-\big\}$ branch) -- or with two sub-threshold alternative shadow poles -- both sub-threshold on the $\big\{E,-\big\}$ branch -- or situations where only one, above-threshold solution is produced. 
On the other hand, analytic continuation definition (\ref{eq:: Def S and P analytic continuation from L}) can also yield one, two (tangentially) or three solutions, depending on the sub-threshold behavior and the resonant parameters eigenvalues $\big\{ E_\lambda, \gamma_{\lambda,c}, B_c \big\}$.
The number of real solutions $\left\{\widetilde{E_i}\right\} \in \mathbb{R}$ to the Brune generalized eigenproblem (\ref{eq:Brune eigenproblem}) will thus depend on the R-matrix parameters, and is in general comprised between $N_\lambda$ and $N_S$.

To verify the number of complex alternative analytic poles (\ref{eq:N_S Brune poles}), a trivial example is considering (\ref{eq:: elemental Brune problem}) in the $\ell_c = 1$ case, where the analytic shift function takes the wavenumber dependence, $S(\rho) = - \frac{1}{1+\rho^2}$, and thus the poles of the $\boldsymbol{R}_S$ operator are nothing but the solutions to $\frac{E_\lambda - E}{\gamma_{\lambda,c}^2} + B + \frac{1}{1+\rho_0^2(E-E_{T_c})} = 0$.
The fundamental theorem of algebra then guarantees this problem has $N_S = 2$ complex solutions, not $N_\lambda = 1$.
The surprising part is that both are real poles: one above and one below threshold, which again stems from the fact the number of roots $\left\{ \omega_n \right\}$ is odd and that their symmetries thus require one pole to be exactly imaginary (in wavenumber space), as explained in section \ref{subsubsec::Ambiguity in shift and penetration}. 
For $\ell_c = 2$, we would have $S_2^0(E) = \frac{3E + 2 E^2}{\frac{9}{\rho_0^2} + 3E + E^2}$, so that the fundamental theorem of algebra commands (\ref{eq:: elemental Brune problem}) will have $N_S = 3$ solutions, verifying the $N_S = N_\lambda + \sum_{c=1}^{N_c}\ell_c$ complex poles we establish in (\ref{eq:N_S Brune poles}).
In the general charged-particles case, the shift factor $S_c(\rho)$ is no longer a rational fraction in $\rho^2$ but is a meromorphic operator in $\rho^2$ with an infinity of poles (c.f. lemma \ref{lem:: analytic S_c and P_c lemma}). This illustrates how, in general, there exist $N_\lambda \leq N_S \leq \infty$ complex poles of the $\boldsymbol{R}_S$ operator, and that at least $N_\lambda$ of them are real.

When the left-hand side of (\ref{eq:: elemental Brune problem}) crosses the $E=E$ identity line above threshold, the alternative branch poles coincide with the alternative analytic poles, as can be observed in figures \ref{fig:Brune_argument_odd_angular_momenta} and \ref{fig:Brune_argument_even_angular_momenta}. 
Since the shift function $S_c(E)$ is continuous and monotonically increasing above threshold, the question is whether the eigenvalues of the left-hand side of (\ref{eq:Brune eigenproblem}) are above the $E=E$ identity line at the threshold value: $E = E_{T_c}$.
If yes, then it would mean that past the last threshold there will be exactly $N_\lambda$ solutions to (\ref{eq:Brune eigenproblem}).
However, nothing guarantees \textit{a priori} that all the eigenvalues of the left hand side of (\ref{eq:Brune eigenproblem}) are above the $E=E$ at the last threshold.
From solving the elemental Brune problem (\ref{eq:: elemental Brune problem}), we observed that it seems to require negative resonance levels $E_\lambda < 0$ to induce the left-hand side of (\ref{eq:Brune eigenproblem}) to be below the $E=E$ identity line at the threshold value, as illustrated in figures \ref{fig:Brune_argument_odd_angular_momenta} and \ref{fig:Brune_argument_even_angular_momenta}.
When this happens, the alternative poles will be sub-threshold, and thus depend on the (\ref{eq:: Def S = Re[L], P = Im[L]}) or (\ref{eq:: Def S and P analytic continuation from L}) definition for the shift function $S_c(E)$.
However, the fact that different channels will have different threshold levels $E_{T_c} \neq E_{T_{c'}}$, and that nothing stops R-matrix parameters from displaying negative resonance levels $E_\lambda < 0$, mean no definitive conclusion can be reached as to the number of real alternative parameters (other than there exists at least $N_\lambda$ of them).

%%%%%%%%%%%%%%%%%%%%%%%%%%%%%%%%%%%%%%%%%%%%%%%%%%%%%%%%%%%%%%%%%%%%%%%%%%%%%%%%
\subsection{\label{subsubsec::choice of Brune poles}Choice of alternative poles}

Brune defined the alternative parameters in (\ref{eq:Brune parameters}) and (\ref{eq:: Brune invA}) by building the square matrix $\boldsymbol{a}$, and then inverting it to guarantee (\ref{eq:: Brune A = aAa}) (c.f. section \ref{sec:R_S def}).
We just demonstrated in theorems \ref{theo::branch_brune_poles} and \ref{theo::analytic_Brune_poles} that there are in general more alternative poles $N_S$ -- either alternative branch poles or alternative analytic poles -- than the number $N_\lambda$ of resonance levels: $N_S \geq N_\lambda$.
Yet the fact that there are more than $N_\lambda$ solutions to (\ref{eq:Brune eigenproblem}) implies the $\boldsymbol{a} \triangleq \left[\boldsymbol{a_1}, \hdots, \boldsymbol{a_i} , \hdots , \boldsymbol{a_{N_S}} \right]$ matrix, composed of the $N_S$ solutions to Brune's eigenproblem (\ref{eq:Brune eigenproblem}), is in general not square, and could even be infinite if $N_S = \infty$ (Coulomb channels).
This brings two critical questions: 1) do these additional alternative poles impede us from well defining the alternative parameters? 2) can we still uniquely define the alternative poles?

We here demonstrate in theorem \ref{theo::Choice of Brune poles} the striking property that choosing any finite set of at least $N_\lambda$ different solutions from the $N_\lambda \leq N_S \leq \infty$ solutions of Brune's eigenproblem (\ref{eq:Brune eigenproblem}), suffices, under our new extended definition (\ref{eq:: pseudo invA Brune}), to properly describe the R-matrix scattering model.

\begin{theorem}\label{theo::Choice of Brune poles} \textsc{Choice of alternative poles} \\
If we generalize definition (\ref{eq:: Brune invA}) of the alternative level matrix $\boldsymbol{\widetilde{A}} $ by defining it as the following (Moore-Penrose) pseudo-inverse:
\begin{equation}
\boldsymbol{\widetilde{A}} \ \triangleq \ \left[\boldsymbol{a}^\mathsf{T}\boldsymbol{A}^{-1}\boldsymbol{a}\right]^{+}
\label{eq:: pseudo invA Brune}
\end{equation}
then the choice of any number $N_S$ of alternative poles, solutions to the Brune generalized eigenproblem (\ref{eq:Brune eigenproblem}), will reconstruct the scattering matrix $\boldsymbol{U}(E)$, if, and only if, we choose at least $N_\lambda$ solutions: $N_S \geq N_\lambda$.
\end{theorem}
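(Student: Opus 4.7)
The plan is to reduce the statement to a single matrix identity and then invoke Moore--Penrose pseudo-inverse calculus. Since the scattering matrix (\ref{eq:U expression}) depends on the Wigner--Eisenbud parameters only through the Kapur--Peierls operator $\boldsymbol{R}_L = \boldsymbol{\gamma}^\mathsf{T}\boldsymbol{A}\boldsymbol{\gamma}$ of (\ref{eq:Kapur-Peierls Operator and Channel-Level equivalence}), it suffices to prove that under definition (\ref{eq:: pseudo invA Brune}) one has $\boldsymbol{a}\boldsymbol{\widetilde{A}}\boldsymbol{a}^\mathsf{T} = \boldsymbol{A}$ with $\boldsymbol{a}$ now the $N_\lambda \times N_S$ rectangular collection of eigenvectors. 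Pre- and post-multiplying by $\boldsymbol{\gamma}^\mathsf{T}$ and $\boldsymbol{\gamma}$ and using $\boldsymbol{\widetilde{\gamma}} = \boldsymbol{a}^\mathsf{T}\boldsymbol{\gamma}$ from (\ref{eq:Brune parameters}) would then recover (\ref{eq:R_L unchanged by Brune}) and hence the same $\boldsymbol{U}(E)$.

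First I would establish the rank of the rectangular $\boldsymbol{a}$. Brune's three-step monotony argument recalled at the end of section \ref{sec:R_S def} produces $N_\lambda$ principal eigenvectors that are linearly independent and therefore span $\mathbb{R}^{N_\lambda}$; appending any of the additional real or complex alternative poles from theorems \ref{theo::branch_brune_poles} and \ref{theo::analytic_Brune_poles} cannot decrease that span, so $\mathrm{rank}(\boldsymbol{a}) = N_\lambda$ whenever the selection of $N_S \geq N_\lambda$ poles includes the $N_\lambda$ principal ones. Moore--Penrose calculus then supplies the right inverse $\boldsymbol{a}^{+} = \boldsymbol{a}^\mathsf{T}(\boldsymbol{a}\boldsymbol{a}^\mathsf{T})^{-1}$ with $\boldsymbol{a}\boldsymbol{a}^{+} = \Id{}$, and dually $(\boldsymbol{a}^\mathsf{T})^{+}\boldsymbol{a}^\mathsf{T} = \Id{}$. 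Because $\boldsymbol{A}^{-1}$ is invertible and both outer factors have full rank, the reverse-order pseudo-inverse identity
\begin{equation*}
\left[\boldsymbol{a}^\mathsf{T}\boldsymbol{A}^{-1}\boldsymbol{a}\right]^{+} = \boldsymbol{a}^{+}\, \boldsymbol{A}\, (\boldsymbol{a}^\mathsf{T})^{+}
\end{equation*}
holds, which I would verify directly by checking the four Moore--Penrose axioms. Sandwiching this identity between $\boldsymbol{a}$ on the left and $\boldsymbol{a}^\mathsf{T}$ on the right collapses the outer projectors to identities and yields $\boldsymbol{a}\boldsymbol{\widetilde{A}}\boldsymbol{a}^\mathsf{T} = \boldsymbol{A}$, completing the sufficiency direction.

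For the converse, if $N_S < N_\lambda$ then $\mathrm{rank}(\boldsymbol{a}) \leq N_S < N_\lambda$, so the reconstructed operator $\boldsymbol{a}\boldsymbol{\widetilde{A}}\boldsymbol{a}^\mathsf{T}$ is rank-deficient, whereas $\boldsymbol{A}(E)$ has full rank $N_\lambda$ at any non-resonant $E$; the scattering matrix therefore cannot be reproduced for all $E$. The main obstacle I anticipate lies in the rank step: one must exclude the pathological case in which a permissible choice of $N_S$ eigenvectors happens to lie entirely in a proper subspace of $\mathbb{R}^{N_\lambda}$. This is most cleanly handled by requiring the chosen set to contain the $N_\lambda$ principal real alternative poles (whose existence is guaranteed by theorems \ref{theo::branch_brune_poles} and \ref{theo::analytic_Brune_poles}), thereby inheriting the spanning property from Brune's original construction; alternatively one can invoke the no-crossing rule within the generalized eigenproblem (\ref{eq:Brune eigenproblem}) together with a genericity assumption on the Wigner--Eisenbud parameters.
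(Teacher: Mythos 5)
Your proof follows essentially the same route as the paper's: both reduce the claim to the identity $\boldsymbol{a}\boldsymbol{\widetilde{A}}\boldsymbol{a}^\mathsf{T} = \boldsymbol{A}$, invoke full row rank of the rectangular $\boldsymbol{a}$ together with the reverse-order pseudo-inverse law for full-rank factors, and sandwich to recover the Kapur--Peierls operator and hence $\boldsymbol{U}(E)$. You are in fact more careful than the paper on the one non-trivial point --- the paper simply asserts that $N_S \geq N_\lambda$ implies $\boldsymbol{a}$ has independent rows, whereas you correctly flag that an arbitrary selection of eigenvectors could in principle be rank-deficient and must be handled by including the $N_\lambda$ principal eigenvectors or by a genericity assumption --- and you also supply the rank-deficiency argument for the converse direction, which the paper's proof omits entirely.
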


\begin{proof}
The proof rests on the pseudo-inverse property for independent columns and rows, and applies it to the $\boldsymbol{a} \triangleq \left[\boldsymbol{a_1}, \hdots, \boldsymbol{a_i} , \hdots , \boldsymbol{a_{N_S}} \right]$ matrix, constructed by choosing $N_S$ solutions of the generalized eigenproblem (\ref{eq:Brune eigenproblem}).
If $N_S \geq N_\lambda$, then $\boldsymbol{a}$ has independent rows so that its pseudo-inverse will yield: $ \boldsymbol{\widetilde{A}}  = \boldsymbol{a}^{+}\boldsymbol{A}{\boldsymbol{a}^\mathsf{T}}^{+}$.
This property in turn entails (\ref{eq:: Brune A = aAa}) is satisfied, and thus (\ref{eq:R_L unchanged by Brune}) stands, leaving unchanged the Kapur-Peierls operator $\boldsymbol{R}_L$, and hence fully representing the scattering matrix $\boldsymbol{U}(E)$.
\end{proof}

Critically, $N_\lambda$ real solutions to (\ref{eq:Brune eigenproblem}) can always be found -- as shown in theorems \ref{theo::branch_brune_poles} and \ref{theo::analytic_Brune_poles} -- meaning the alternative parametrization is always capable of fully reconstructing the scattering matrix energy behavior with real parameters through generalized pseudo-inverse definition (\ref{eq:: pseudo invA Brune}). It is well defined.

Yet, if any choice of $N_\lambda$ alternative poles will yield the same scattering matrix $\boldsymbol{U}(E)$ through definition (\ref{eq:: pseudo invA Brune}), this choice is \textit{a priori} not unique. 
Can we define some conventions on the choice of alternative parameters to make them unique?
Under the legacy Lane \& Thomas definition (\ref{eq:: Def S = Re[L], P = Im[L]}), this can readily be achieved by neglecting the shadow poles and restraining the search to the principal sheet $\big\{ E, + , \hdots , +\big\}$, for all $N_c$ channels, where we have shown in theorem \ref{theo::branch_brune_poles} that one will find exactly $N_\lambda$ poles. 
Under the analytic continuation definition (\ref{eq:: Def S and P analytic continuation from L}), one can still uniquely define the $N_\lambda$ ``first" solutions in the following algorithmic way: one starts the search by diagonalizing, at the last threshold energy (greatest $E_{T_c}$ value), the left-hand side of (\ref{eq:Brune eigenproblem}). If all the eigenvalues are above the $E=E$ line, then increase the energy until the eigenvalues cross the $E=E$ diagonal, and we will have $N_\lambda $ uniquely defined real alternative analytic poles. If at the first threshold some eigenvalues are below the $E=E$ identity line (as we saw could happen if some resonance energies are negative $E_\lambda < 0$), then we can decrease the energy values until those cross the $E=E$ identity line for the first time, and stop the search there, thus again uniquely defining $N_\lambda$ alternative analytic poles. 
This foray into the algorithmic procedure for solving (\ref{eq:: Def S and P analytic continuation from L}) gives us the occasion to point to the vast literature on methods to solve non-linear eigenvalue problems, in particular \cite{Handbook_of_linear_algebra}. 

In the end, though we argue that the physically correct definition for the shift function $S_c(E)$ ought to be through analytic continuation (\ref{eq:: Def S and P analytic continuation from L}), both approaches enable to set conventions that will uniquely determine $N_\lambda$ real alternative poles.

%%%%%%%%%%%%%%%%%%%%%%%%%%%%%%%%%%%%%%%%%%%%%%%%%%%%%%%%%%%%%%%%%%%%%%%%%%%%%%%%
%*******************************************************************************
%%%%%%%%%%%%%%%%%%%%%%%%%%%%%%%%%%%%%%%%%%%%%%%%%%%%%%%%%%%%%%%%%%%%%%%%%%%%%%%%
\section{\label{sec:Generalized Brune parameters for Reich-Moore approximation} Generalized alternative parameters for Reich-Moore formalism}
%%%%%%%%%%%%%%%%%%%%%%%%%%%%%%%%%%%%%%%%%%%%%%%%%%%%%%%%%%%%%%%%%%%%%%%%%%%%%%%%
%*******************************************************************************
%%%%%%%%%%%%%%%%%%%%%%%%%%%%%%%%%%%%%%%%%%%%%%%%%%%%%%%%%%%%%%%%%%%%%%%%%%%%%%%%

In this section, we study how the community could convert present nuclear data libraries -- featuring both Wigner-Eisenbud parameters and Reich-Moore parameters -- to alternative parameters, in order to eliminate the dependence on the arbitrary boundary condition parameters $B_c$.
We generalize the alternative parametrization to encompass the widely used Reich-Moore formalism, with which many evaluations are conducted, and we show that it is necessary for the community to decide on a convention to continue R-matrix operators to complex wavenumbers -- that is we must choose between branch-points definition (\ref{eq:: Def S = Re[L], P = Im[L]}) and analytic continuation (\ref{eq:: Def S analytic}).

%%%%%%%%%%%%%%%%%%%%%%%%%%%%%%%%%%%%%%%%%%%%%%%%%%%%%%%%%%%%%%%%%%%%%%%%%%%%%%%%
\subsection{\label{subsubsec::Generalization to Reich-Moore approximation and Teichmann-Wigner eliminated channels}Generalization to Reich-Moore formalism and Teichmann-Wigner eliminated channels}

In practice we are only interested in certain outcomes of a nuclear reaction (such as neutron fission, scattering, etc.) and we are sometimes unable to track the vast number of all possible channels (such as every single individual photon interaction) -- this is specially true of heavy nuclides for which the interaction becomes a large many-body problem. 
For these cases, the community has traditionally resorted to Teichmann and Wigner's channel elimination method (c.f. \cite{Teichmann_and_Wigner_1952} or section X, p.299 of \cite{Lane_and_Thomas_1958}) to not explicitly treat all the channels we are not interested in, but still capture their effects on channels of interest.
This yields the Reich-Moore approximation of R-matrix theory \cite{Reich_Moore_1958}, which models the effects of all the eliminated channels (usually $\gamma$ ``gamma capture'' photon channels) on every level by adding to every level's resonance energy $E_\lambda$ a partial eliminated capture width $\Gamma_{\lambda,\gamma}$ that shifts the effective resonance energy into the complex plane:
\begin{equation}
\begin{IEEEeqnarraybox}[][c]{rcl}
\boldsymbol{e}_{\mathrm{R.M.}} & \ \triangleq \ & \boldsymbol{\mathrm{diag}}\left( E_\lambda - \mathrm{i}\frac{\Gamma_{\lambda,\gamma}}{2}\right)
\IEEEstrut\end{IEEEeqnarraybox}
\label{eq:e diagonal matrix Reich-Moore}
\end{equation}
From this, the Reich-Moore formalism R-matrix (\ref{eq:R expression}), where all the eliminated capture channels have been collapsed into one $\gamma$ channel, is now:
\begin{equation}
\begin{IEEEeqnarraybox}[][c]{rcl}
R_{c,c'\not\in\gamma_{\mathrm{elim.}} } & \; \triangleq \; &  \sum_{\lambda=1}^{N_\lambda}\frac{\gamma_{\lambda,c}\gamma_{\lambda,c'}}{E_\lambda - \mathrm{i}\frac{\Gamma_{\lambda,\gamma}}{2} - E}  \\
 \mathrm{i.e.}  \quad \boldsymbol{R}_{\mathrm{R.M.}} & \; = \; & \boldsymbol{\gamma}^\mathsf{T} \left(\boldsymbol{e}_{\mathrm{R.M.}} - E\Id{}\right)^{-1}\boldsymbol{\gamma}
\IEEEstrut\end{IEEEeqnarraybox}
\label{eq:R expression Reich Moore}
\end{equation}
and, equivalently, the Reich-Moore inverse level matrix (\ref{eq:inv_A expression}) thereby becomes:
\begin{equation}
\begin{IEEEeqnarraybox}[][c]{rcl}
\boldsymbol{A^{-1}}_{\mathrm{R.M.}} & \ \triangleq \ & \boldsymbol{e}_{\mathrm{R.M.}} - E\, \Id{} - \boldsymbol{\gamma}\left( \boldsymbol{L} - \boldsymbol{B} \right)\boldsymbol{\gamma}^\mathsf{T}
\IEEEstrut\end{IEEEeqnarraybox}
\label{eq:inv_A expression Reich-Moore}
\end{equation}
All the other R-matrix expressions linking these operators to the scattering matrix (\ref{eq:U expression}), and thereby the cross section (\ref{eq:partial sigma_cc'from scattering matrix}) remain unchanged in the Reich-Moore formalism.
In practice, the only effect of this channel elimination is that the Reich-Moore formalism allows for complex resonance energies (\ref{eq:e diagonal matrix Reich-Moore}) in the parametrizations of R-matrix theory. In this sense, though initially emerging from the channel elimination approximation, the Reich-Moore formalism can be seen like an actual extension of the exact R-matrix formalism.

This has consequential effects on the alternative parametrization. If one wants to convert the Reich-Moore parameters into alternative parameters, Brune's equations of section \ref{sec:R_S Brune transform} are not valid, since they assume the left hand side of (\ref{eq:Brune eigenproblem}) is a real symmetric matrix (and thus diagonalizable with real eigenvalues). 
We here generalize the alternative parametrization of R-matrix theory to encompass the Reich-Moore formalism -- which is of great practical importance -- and the additional shadow poles previously discovered (in theorems \ref{theo::branch_brune_poles} and \ref{theo::analytic_Brune_poles}).
First, we notice that in the Reich-Moore formalism, Brune's generalized eigenproblem (\ref{eq:Brune eigenproblem}) becomes:
\begin{equation}
\begin{IEEEeqnarraybox}[][c]{rCl}
\left[\boldsymbol{e}_{\mathrm{R.M.}}  - \boldsymbol{\gamma} \left( \boldsymbol{S}(\widetilde{E_i}) - \boldsymbol{B} \right)\boldsymbol{ \gamma}^\mathsf{T} \right]\boldsymbol{a_i} = \widetilde{E_i}\boldsymbol{a_i}
\IEEEstrut\end{IEEEeqnarraybox}
\label{eq:Brune eigenproblem Reich-Moore}
\end{equation}
The fact that the left hand side of generalized eigenproblem (\ref{eq:Brune eigenproblem Reich-Moore}) is now a complex symmetric matrix (and not a real symmetric nor a Hermitian matrix) entails the solutions $\widetilde{E_i}$ are no longer real, but complex (we now have complex alternative poles $\widetilde{E_i} \in \mathbb{C}$ and eigenvectors $\boldsymbol{a_i} \in \mathbb{C}^{N_\lambda}$).
In order to conserve an euclidean norm on the space of eigenvectors, the normalization condition must now be generalized to vectors by means of the Hermitian conjugate: 
\begin{equation}
\begin{IEEEeqnarraybox}[][c]{rCl}
\boldsymbol{a_i}^\dagger \boldsymbol{a_i} =1
\IEEEstrut\end{IEEEeqnarraybox}
\label{eq:Brune eigenvectors normalized Reich-Moore}
\end{equation}
We then define the alternative parameters with Hermitian conjugate transformation:
\begin{equation}
\boldsymbol{\widetilde{e}_{\mathrm{R.M.}}} \triangleq  \boldsymbol{\mathrm{diag}}(\widetilde{E_i}) \quad \quad , \quad \quad \boldsymbol{\widetilde{\gamma}} \triangleq \boldsymbol{a}^\dagger \boldsymbol{\gamma}
\label{eq:Brune parameters Reich Moore}
\end{equation}
where $\boldsymbol{a}$ is the matrix composed of the column eigenvectors: $\boldsymbol{a} \triangleq \left[\boldsymbol{a_1}, \hdots, \boldsymbol{a_i} , \hdots \right]$.
We then define the generalized alternative level matrix by means of theorem \ref{theo::Choice of Brune poles}, generalized to complex eigenvectors and for an arbitrary number $N_S \geq N_\lambda$ (at least as many generalized alternative poles as the number of levels) of solutions (now complex) to the generalized eigenproblem (\ref{eq:Brune eigenproblem Reich-Moore}):
\begin{equation}
\boldsymbol{\widetilde{A}}_{\mathrm{R.M.}} \ \triangleq \ \left[\boldsymbol{a}^\dagger\boldsymbol{A}_{\mathrm{R.M.}}^{-1}\boldsymbol{a}\right]^{+}
\label{eq:: pseudo invA Brune generalized to Reich-Moore}
\end{equation}
This generalized definition will guarantee that the Kapur-Peierls operator (\ref{eq:Kapur-Peierls Operator and Channel-Level equivalence}) will we conserved through the following generalization of Brune's relation (\ref{eq:R_L unchanged by Brune}):
\begin{equation}
{\boldsymbol{R}_L}_{\mathrm{R.M.}} = \boldsymbol{\gamma^\mathsf{T} A_{\mathrm{R.M.}} \gamma} = \boldsymbol{\widetilde{\gamma}^\dagger \widetilde{A}_{\mathrm{R.M.}} \widetilde{\gamma}}
\label{eq: Kapur-Peierls operator with generalized Brune parameters}
\end{equation}
thus preserving the scattering matrix (\ref{eq:U expression}) and ultimately the cross section (\ref{eq:partial sigma_cc'from scattering matrix}), as long as we choose more (or equal) solutions to (\ref{eq:Brune eigenproblem Reich-Moore}) than there are levels: $N_S \geq N_\lambda$.

Note that our generalization (\ref{eq: Kapur-Peierls operator with generalized Brune parameters}) does not make the Kapur-Peierls operator Hermitian, since the generalized alternative level $\boldsymbol{\widetilde{A}}_{\mathrm{R.M.}}$ matrix (\ref{eq:: pseudo invA Brune generalized to Reich-Moore}) is still not Hermitian, but complex symmetric. 

%%%%%%%%%%%%%%%%%%%%%%%%%%%%%%%%%%%%%%%%%%%%%%%%%%%%%%%%%%%%%%%%%%%%%%%%%%%%%%%%
\subsection{\label{subsubsec::Necessary choice: how to continue R-matrix operators into the complex plane?}Necessary choice: how to continue R-matrix operators into the complex plane?}

The fact that $\boldsymbol{\widetilde{e}_{\mathrm{R.M.}}} $ is now complex -- complex alternative poles $\widetilde{E_i} \in \mathbb{C}$ and eigenvectors $\boldsymbol{a_i} \in \mathbb{C}^{N_\lambda}$ solve (\ref{eq:Brune eigenproblem Reich-Moore}) -- has profound consequences on the Reich-Moore alternative parameters (\ref{eq:Brune parameters Reich Moore}), because it breaks Brune's three-step monotony argument (last paragraph of section \ref{sec:R_S def}) to prove that here are exactly $N_\lambda$ real solutions on the physical sheet above threshold (we showed there are shadow poles below threshold or in the complex plane in both theorem \ref{theo::branch_brune_poles} and \ref{theo::analytic_Brune_poles}). Indeed, nothing guarantees the three-step monotony argument still stands in the complex plane, when calling the shift operator $\boldsymbol{S}(\widetilde{E_i})$ at complex values $\widetilde{E_i}\in\mathbb{C}$. 
Actually, the choice of convention to continue the R-matrix operators into the complex plane -- that is branch-point definition (\ref{eq:: Def S = Re[L], P = Im[L]}) or analytic continuation (\ref{eq:: Def S and P analytic continuation from L}) -- is now of critical importance, since it will change $\boldsymbol{S}(\widetilde{E_i})$ (for $\widetilde{E_i} \in \mathbb{C}$) and thus the values of the all the Reich-Moore alternative parameters (\ref{eq:Brune parameters Reich Moore}), including the principal poles. 
If we choose analytic continuation definition (\ref{eq:: Def S and P analytic continuation from L}), then theorem \ref{theo::analytic_Brune_poles} still stands and there are $N_S \geq N_\lambda$ (complex) alternative poles as in (\ref{eq:N_S Brune poles}).
However, if we choose branch-point definition (\ref{eq:: Def S = Re[L], P = Im[L]}), then Brune's three-step monotony argument does not stand and we have no guarantee on the number of alternative poles anymore, nor on which sheet of mapping (\ref{eq:rho_c(E) mapping}) the alternative poles reside. 

The only workaround to this is to use the \textit{Generalized Reich-Moore} framework to convert the Reich-Moore parameters into real R-matrix parameters as described in \cite{Generalized_Reich_Moore_2017}; but this would incur a great computational and memory cost as we will have to expand a few eliminated channels R-matrix ($N_c \times N_c$ with $c \not \in \gamma_{\mathrm{elim.}}$) into a square R-matrix of the size of the levels ($N_\lambda \times N_\lambda$), when for large nuclides we often have $N_\lambda \gg N_c$.
And even in the case of Generalized Reich-Moore (which is equivalent to exact R-matrix in that it yields real resonance parameters), the values of the alternative parameters will still depend on the choice of continuation in the complex plane -- branch-point definition (\ref{eq:: Def S = Re[L], P = Im[L]}) versus analytic continuation definition (\ref{eq:: Def S analytic}) -- when there are many different thresholds for different channels, and the $\boldsymbol{S}(E)$ operator must be called below threshold for certain channels when solving (\ref{eq:Brune eigenproblem Reich-Moore}). In fact, the only case where the choice of continuation -- definition (\ref{eq:: Def S = Re[L], P = Im[L]}) versus definition (\ref{eq:: Def S analytic}) -- has no consequence on the values of the principal alternative poles (the Shadow poles always differ) is when we are using the exact R-matrix equations (or the generalized Reich-Moore ones \cite{Generalized_Reich_Moore_2017}) and all the alternative poles are above the thresholds of all the channels. 

In practice, this means that the choice of continuation matters because it changes the values of all the alternative parameters: for Reich-Moore alternative parameters (\ref{eq:Brune parameters Reich Moore}) we need to call the external R-matrix operators ($\boldsymbol{O}, \boldsymbol{I}, \boldsymbol{P}, \boldsymbol{S}$) into the complex plane; and for real R-matrix alternative parameters (\ref{eq:Brune parameters}) the many thresholds will mix up in the sub-threshold (shadow) values of the $S_c(E)$ operator (unless we are only solving past the last threshold). 
Thus, in order to convert nuclear data libraries from Wigner-Eisenbud to alternative parameters, the nuclear scientists community must convene on a convention -- either branch-point definition (\ref{eq:: Def S = Re[L], P = Im[L]}) or analytic continuation definition (\ref{eq:: Def S analytic}) -- to compute R-matrix operators for complex wavenumbers. 
The authors are publishing a follow-up article arguing in favor of analytic continuation \cite{Ducru_Scattering_Matrix_of_Complex_Wavenumbers_2019}.

%%%%%%%%%%%%%%%%%%%%%%%%%%%%%%%%%%%%%%%%%%%%%%%%%%%%%%%%%%%%%%%%%%%%%%%%%%%%%%%%
%*******************************************************************************
%%%%%%%%%%%%%%%%%%%%%%%%%%%%%%%%%%%%%%%%%%%%%%%%%%%%%%%%%%%%%%%%%%%%%%%%%%%%%%%%
\section{\label{sec:Evidence of shadow Brune poles in xenon 134} Evidence of shadow alternative poles in xenon $^{\mathrm{134}}\mathrm{Xe}$}
%%%%%%%%%%%%%%%%%%%%%%%%%%%%%%%%%%%%%%%%%%%%%%%%%%%%%%%%%%%%%%%%%%%%%%%%%%%%%%%%
%*******************************************************************************
%%%%%%%%%%%%%%%%%%%%%%%%%%%%%%%%%%%%%%%%%%%%%%%%%%%%%%%%%%%%%%%%%%%%%%%%%%%%%%%%

We here report the first evidence of the existence of shadow poles in the alternative parametrization of R-matrix theory, observed in isotope xenon $^{\mathrm{134}}\mathrm{Xe}$ for neutron reactions: $\mathrm{n} + {^{\mathrm{134}}\mathrm{Xe}}$.
In doing so, we also demonstrate that all alternative parameters depend on the convention used for continuation into the complex plane of R-matrix operators. 

We chose xenon $^{\mathrm{134}}\mathrm{Xe}$ because it has only a few resonances per spin group, this makes it a clear case that is simple to solve numerically. 
Xenon $^{\mathrm{134}}\mathrm{Xe}$ is stable and the fourth most abundant isotope of xenon (10.436\% of natural content, most abundant is ${^{\mathrm{132}}\mathrm{Xe}}$ with 26.909\%). 
The isotope spin is $0^{(+)}$, and the neutron's $1/2^{(+)}$. There are three spin groups: $J^\pi = 1/2^{(+)}$ with 3 s-wave resonances; $J^\pi = 1/2^{(-)}$ with 2 p-wave resonances; and $J^\pi = 3/2^{(-)}$ with 1 p-wave resonance.
The R-matrix parameters of xenon $^{\mathrm{134}}\mathrm{Xe}$, here reported in table \ref{tab:resonance parameters}, were taken from ENDF/B-VIII.0 nuclear data library \cite{ENDFBVIII8th2018brown}, where we observe the two p-waves in the $J^\pi = 1/2^{(-)}$ spin group.
\begin{table}[b]
\caption{\label{tab:resonance parameters}
Xenon $^{\mathrm{134}}\mathrm{Xe}$ resonance parameters for the two p-waves of spin group $J^\pi = 1/2^{(-)}$, from ENDF/B-VIII.0 evaluation}
\begin{ruledtabular}
\begin{tabular}{l}
$ z = \sqrt{E}$ with $E$ in (eV) \\
$A = 132.760$ \\
$a_c = 5.80$ : channel radius (Fermis) \\
$\rho(z) = \frac{A a_c \sqrt{\frac{2m_n}{h}} }{A + 1} z$ \\
with $\sqrt{\frac{2m_n}{h}} = 0.002196807122623 $ in units ($1/(10^{-14}\text{m} \sqrt{\text{eV}})$) \\
$E_1 = 2186.0$ : first resonance energy (eV) \\
$\Gamma_{1,n} = 0.2600$ : neutron width of first resonance \\ (not reduced width), i.e. $\Gamma_{\lambda,c} = 2P_c(E_\lambda) \gamma_{\lambda,c}^2$ \\
$\Gamma_{1,\gamma} = 0.0780$ : eliminated capture width (eV) \\
$E_2 = 6315.0$ : second resonance energy (eV) \\
$\Gamma_{2,n} = 0.4000$ (eV) \\
$\Gamma_{2,\gamma} = 0.0780$ (eV) \\
$g_{J^\pi} = 1/3$ : spin statistical factor \\
$B_c = - 1 $
\end{tabular}
\end{ruledtabular}
\end{table}
The xenon $^{\mathrm{134}}\mathrm{Xe}$ ENDF/B-VIII.0 evaluation is listed as a MLBW (Multi-Level Breit-Wigner) with B=S approximation, which means that the exact R-matrix equations are not used (neither the Reich-Moore ones), but instead the physically incorrect approximation that $S_c(E) = B_c$ is constant is made (i.e. the shift function is forced onto the boundary parameters, to simplify the evaluation process).
All ENDF/B-VIII.0 evaluations with only a few resonances are carried out with this B=S approximation.
Though this has no incidence on s-waves ($S_{\ell=0} = 0$) for neutral channels, and in general the B=S approximation has only small effects in practice on the evaluation, these equations cannot rigorously match the R-matrix-equivalent formalisms we here derive.

\begin{table}[b]
\caption{\label{tab:Brune parameters R-matrix}
Xenon $^{\mathrm{134}}\mathrm{Xe}$ alternative parameters for spin-parity group $J^\pi = 1/2^{(-)}$. For a verisimilar fictitious isotope, all capture widths (including eliminated channels) are set to zero, and the p-waves are converted using ENDF/B-VIII.0 resonance parameters into R-matrix equations, and solving the Brune eigenproblem (\ref{eq:Brune eigenproblem}) as detailed in section \ref{sec:R_S def}, for both conventions to continue the shift function to complex wavenumbers: Lane \& Thomas (\ref{eq:: Def S = Re[L], P = Im[L]}) versus analytic continuation (\ref{eq:: Def S analytic}). (Results to two significant digits or discrepancy)}
\begin{ruledtabular}
\begin{tabular}{l}
\textsc{R-matrix alternative parameters} (\ref{eq:Brune eigenproblem})\\
 \textsc{(verisimilar fictitious isotope)}\\
\hline
\hline
Lane \& Thomas definition (\ref{eq:: Def S = Re[L], P = Im[L]})\\
\hline
Alternative branch poles (eV) and their sheet of mapping (\ref{fig:mapping rho - E}):\\
$\begin{array}{lcl}
    \left\{\widetilde{E_1}, - \right\} = -626,938 &  \quad ; \quad & \widetilde{z_1} = - 791.794 \, \mathrm{i} \\
     \left\{\widetilde{E_2}, + \right\} = 2,183 &  \quad ; \quad & \widetilde{z_2} = 46.73 \\
     \left\{\widetilde{E_3}, + \right\} = 6,313 &  \quad ; \quad  & \widetilde{z_3} = 79.454
\end{array}$ \\
Corresponding eigenvectors:\\
$\boldsymbol{a_1} = \left[0.8732734 \; \; , \; \; 0.4872304\right]^\mathsf{T}$\\
$\boldsymbol{a_2} = \left[1.0 \; \; , \; \; 2.9864\times 10^{-4}\right]^\mathsf{T}$\\
$\boldsymbol{a_3} = \left[ - 8.5708\times 10^{-4} \; \; , \; \; 1.0\right]^\mathsf{T}$\\
\hline 
Analytic continuation definition (\ref{eq:: Def S analytic})\\
\hline
Alternative analytic poles (eV) and their sheet of mapping (\ref{fig:mapping rho - E}):\\
$\begin{array}{lcl}
    \left\{\widetilde{E_1}, \pm \right\} = -626,111 &  \quad ; \quad & \widetilde{z_1} = 791.271 \, \mathrm{i} \\
     \left\{\widetilde{E_2}, \pm \right\} = 2,183 &  \quad ; \quad & \widetilde{z_2} = 46.73 \\
     \left\{\widetilde{E_3}, \pm \right\} = 6,313 &  \quad ; \quad  & \widetilde{z_3} = 79.454
\end{array}$ \\
Corresponding eigenvectors:\\
$\boldsymbol{a_1} = \left[0.8732752 \; \; , \; \; 0.4872272\right]^\mathsf{T}$\\
$\boldsymbol{a_2} = \left[1.0 \; \; , \; \; 2.9864\times 10^{-4}\right]^\mathsf{T}$\\
$\boldsymbol{a_3} = \left[ -8.5708\times 10^{-4} \; \; , \; \; 1.0\right]^\mathsf{T}$\\
\end{tabular}
\end{ruledtabular}
\end{table}
To validate theorems \ref{theo::branch_brune_poles} and \ref{theo::analytic_Brune_poles}, we first create a verisimilar fictitious single-channel xenon $^{\mathrm{134}}\mathrm{Xe}$ isotope in R-matrix formalism (instead of MLBW), by setting all the capture widths (explicit $\gamma$ or eliminated capture) to zero, and treating the resulting purely scattering system with R-matrix equations -- i.e. (\ref{eq:U expression}) and (\ref{eq:R expression}).
We then convert these Wigner-Eisenbud R-matrix parameters into alternative parameters by solving the generalized eigenvalue system (\ref{eq:Brune eigenproblem}), and report the results in table \ref{tab:Brune parameters R-matrix}. % (see appendix \ref{app: Shadow Brune poles in xenon $^{\mathrm{134}}\mathrm{Xe}$, exact values} for arbitrary-precision values).
The alternative poles reported in table \ref{tab:Brune parameters R-matrix} exhibit all the behaviors proved in theorems \ref{theo::branch_brune_poles} and \ref{theo::analytic_Brune_poles}. 
As in theorem \ref{theo::branch_brune_poles}, the alternative branch poles -- i.e. found using the Lane \& Thomas definition (\ref{eq:: Def S = Re[L], P = Im[L]}) -- are all real and count $N_\lambda = 2$ principal poles on the $\left\{E,+\right\}$ sheet of mapping (\ref{fig:mapping rho - E}), near the resonances, as well as one shadow alternative branch pole on the $\left\{E,-\right\}$ sheet bellow threshold. 
Meanwhile, as proved in theorem \ref{theo::analytic_Brune_poles}, there are three (from (\ref{eq:N_S Brune poles}) we have $N_S = 2 + 1 $) alternative analytic poles -- i.e. using the analytic continuation definition (\ref{eq:: Def S analytic}). Two (the `principal' ones) are real (because $N_\lambda = 2$), and the last one (the `shadow alternative analytic pole') is sub-threshold and also happens to be real because $\ell_c = 1$ is an odd number (c.f. theorem \ref{theo::analytic_Brune_poles}). Again, since definition (\ref{eq:: Def S and P analytic continuation from L}) unfolds mapping (\ref{eq:rho_c(E) mapping}), the alternative analytic poles have no multi-sheeted structure (which we made explicit by stating both $\left\{E,\pm\right\}$ sheets).

To validate our generalization to Reich-Moore, established in section \ref{subsubsec::Generalization to Reich-Moore approximation and Teichmann-Wigner eliminated channels}, we proceed just as we did with the fictitious R-matrix single-channel xenon $^{\mathrm{134}}\mathrm{Xe}$ isotope, and convert the ENDF/B-VIII.0 resonance parameters into alternative parameters by solving the Brune-generalized-to-Reich-Moore eigenproblem (\ref{eq:Brune eigenproblem Reich-Moore}).
The results are reported in table \ref{tab:Brune parameters Reich-Moore}. %(cf appendix \ref{app: Shadow Brune poles in xenon $^{\mathrm{134}}\mathrm{Xe}$, exact values} for arbitrary precision accuracy).
\begin{table}[b]
\caption{\label{tab:Brune parameters Reich-Moore}
Xenon $^{\mathrm{134}}\mathrm{Xe}$ alternative parameters for spin-parity group $J^\pi = 1/2^{(-)}$. The p-waves are converted using ENDF/B-VIII.0 resonance parameters into Reich-Moore equations, and solving the generalized eigenproblem (\ref{eq:Brune eigenproblem Reich-Moore}) as detailed in section \ref{subsubsec::Generalization to Reich-Moore approximation and Teichmann-Wigner eliminated channels}, for both conventions to continue the shift function to complex wavenumbers: Lane \& Thomas (\ref{eq:: Def S = Re[L], P = Im[L]}) versus analytic continuation (\ref{eq:: Def S analytic}).
(Results to two significant digits or discrepancy)}
\begin{ruledtabular}
\begin{tabular}{l}
\textsc{Reich-Moore alternative parameters} (\ref{eq:Brune eigenproblem Reich-Moore})\\
\hline
\hline
Lane \& Thomas definition (\ref{eq:: Def S = Re[L], P = Im[L]})\\
\hline
Alternative branch poles (eV) and their sheet of mapping (\ref{fig:mapping rho - E}):\\
$\begin{array}{rcl}
    \left\{\widetilde{E_1}, - \right\} & = & - 626938 -   0.039 \, \mathrm{i} \\
    \widetilde{z_1} & = & 2.462\times10^{-5} - 791.794  \, \mathrm{i} \\
     \left\{\widetilde{E_2}, - \right\} & = & 2183.8031735 - 0.039 \, \mathrm{i} \\
     \widetilde{z_2} & = & 46.73117988 - 4.172\times10^{-4} \, \mathrm{i} \\
     \left\{\widetilde{E_3}, - \right\} & = & 6313.013519 - 0.039 \, \mathrm{i} \\ \widetilde{z_3} & = & 79.454474511 - 2.4542\times10^{-4} \, \mathrm{i} 
\end{array}$ \\
Corresponding eigenvectors:\\
$\boldsymbol{a_1} = \left[0.8732734 \; \; , \; \; 0.487230\right]^\mathsf{T}$\\
$\boldsymbol{a_2} = \left[1.0 \; \; , \; \; 2.9863794\times 10^{-4}\right]^\mathsf{T}$\\
$\boldsymbol{a_3} = \left[ - 8.570833\times 10^{-4} \; \; , \; \; 1.0\right]^\mathsf{T}$\\
\hline 
Analytic continuation definition (\ref{eq:: Def S analytic})\\
\hline
Alternative analytic poles (eV) and their sheet of mapping (\ref{fig:mapping rho - E}):\\
$\begin{array}{rcl}
    \left\{\widetilde{E_1}, \pm \right\} & = & - 626111 - 5.119\times 10^{-5} \, \mathrm{i} \\ 
    \widetilde{z_1} & = & 3.234\times10^{-8} - 791.271  \, \mathrm{i} \\
     \left\{\widetilde{E_2}, \pm \right\} & = &  2183.8031770 - 0.03896 \, \mathrm{i} \\ \widetilde{z_2} & = & 46.73117992 - 4.168\times10^{-4} \, \mathrm{i} \\
     \left\{\widetilde{E_3}, \pm \right\} & = & 6313.013521 - 0.03898 \, \mathrm{i} \\ \widetilde{z_3} & = & 79.454474522 - 2.4534\times10^{-4} \, \mathrm{i} 
\end{array}$ \\
Corresponding eigenvectors:\\
$\boldsymbol{a_1} = \left[0.8732752 + 3.5344\times 10^{-10} \, \mathrm{i} \; \; , \; \; 0.487227 \right]^\mathsf{T}$\\
$\boldsymbol{a_2} = \left[1.0 + 1.7772\times 10^{-5} \, \mathrm{i} \; \; , \; \; 2.9863747\times 10^{-4}  \right]^\mathsf{T}$\\
$\boldsymbol{a_3} = \left[ -8.570825\times 10^{-4} + 5.2348\times 10^{-9} \, \mathrm{i} \; \; , \; \; 1.0\right]^\mathsf{T}$\\
\end{tabular}
\end{ruledtabular}
\end{table}
The Reich-Moore generalized alternative parameters in table \ref{tab:Brune parameters Reich-Moore} also inherit most of the results from theorems \ref{theo::branch_brune_poles} and \ref{theo::analytic_Brune_poles}. There are some notable differences however. Generalizing to Reich-Moore entails all the alternative poles are now complex, regardless of which definition (\ref{eq:: Def S = Re[L], P = Im[L]}) or (\ref{eq:: Def S analytic}) is chosen to continue the shift function $S_c(k_c)$ to complex wavenumbers $k\in\mathbb{C}$.\\
This has major consequences when choosing the Lane \& Thomas definition (\ref{eq:: Def S = Re[L], P = Im[L]}): unlike in the R-matrix case, in Reich-Moore the $N_\lambda$ principal poles are no longer on the physical sheet $\left\{E,+\right\}$. Indeed, we observe in table \ref{tab:Brune parameters Reich-Moore} that in our case all the alternative branch poles are now on the non-physical sheet $\left\{E,-\right\}$. In the general case, the alternative branch poles could be on the physical or the non-physical sheet (we have no proof for either), and one could thus say that all alternative poles are shadow poles in the Reich-Moore formalism.
This lack of knowledge of on what sheet to find the alternative branch poles comes atop the fact, discussed in section \ref{subsubsec::Necessary choice: how to continue R-matrix operators into the complex plane?}, that Brune's three-step monotony argument (which proved the existence of exactly $N_\lambda$ real alternative poles above threshold) is only valid for real-symmetric matrices. When generalized to Reich-Moore, eigenproblem (\ref{eq:Brune eigenproblem Reich-Moore}) counts a complex-symmetric matrix, entailing Brune's three-step monotony argument at the core of theorem \ref{theo::branch_brune_poles} is no longer valid, and we actually do not have proof of the number of alternative branch poles (theorem \ref{theo::branch_brune_poles} proof only stands for R-matrix, or generalized Reich-Moore). \\
This is not the case for the alternative analytic poles, which generalize quite naturally to Reich-Moore formalism. In fact, the only difference to theorem \ref{theo::analytic_Brune_poles} is that the three-step monotony argument can no longer be used to prove that $N_\lambda$ of the $N_S = N_\lambda + \sum_c \ell_c$ alternative analytic poles (\ref{eq:N_S Brune poles}) are real -- and indeed they are not, as shown in table \ref{tab:Brune parameters Reich-Moore}. Apart from that, theorem \ref{theo::analytic_Brune_poles} remains intact for our generalization to Reich-Moore established in section \ref{subsubsec::Generalization to Reich-Moore approximation and Teichmann-Wigner eliminated channels}: there are still $N_S = N_\lambda + \sum_c \ell_c$ alternative analytic poles (\ref{eq:N_S Brune poles}), and there is no need to specify on which $\left\{E,\pm \right\}$ sheet of mapping (\ref{eq:rho_c(E) mapping}) they reside since the analytic continuation of the shift function $S_c(\rho_c)$ unfolds the mapping (c.f. lemma \ref{lem:: analytic S_c and P_c lemma} and theorem \ref{theo::analytic_Brune_poles}). \\

Take a closer observation at the results in tables \ref{tab:Brune parameters R-matrix} and \ref{tab:Brune parameters Reich-Moore}. 
Notice that the imaginary part of the alternative branch poles are all equal to -0.039, which is exactly the opposite of half the eliminated channel width: convenient. 
This can readily be explained by splitting the generalized-to-Reich-Moore Brune-eigenproblem~(\ref{eq:Brune eigenproblem Reich-Moore}) into real and imaginary parts, and noticing that if all the eliminated channel widths are the same, then the eigenvalue's (alternative pole) imaginary part is exactly opposite to the eliminated channel width divided by two, i.e. if $\forall \lambda,\lambda' \in J^\pi , \; \Gamma_{\lambda, \gamma} = \Gamma_{\lambda' ,\gamma}$ then $\forall j \, ,  \; \Im\left[\widetilde{E_j}\right] = - \frac{\Gamma_{\lambda,\gamma}}{2}$, from (\ref{eq:e diagonal matrix Reich-Moore}).
It so happens that in our particular case of xenon $^{\mathrm{134}}\mathrm{Xe}$ this is indeed true, all eliminated capture widths are equal to 0.078 (c.f. table \ref{tab:resonance parameters}). Looking at the ENDF/B-VIII.0 library it is surprisingly common to have the same eliminated capture widths within the same spin group (both xenon-132 and xenon $^{\mathrm{134}}\mathrm{Xe}$ are such examples).
But this is of course not true in general, and a quick look at uranium-238 will show that different levels have different eliminated capture widths (i.e. $\exists \lambda,\lambda' \in J^\pi , \; \; \Gamma_{\lambda, \gamma} \neq \Gamma_{\lambda' ,\gamma}$).
So when the Lane \& Thomas convention (\ref{eq:: Def S = Re[L], P = Im[L]}) is chosen, the alternative branch poles imaginary part will in general not coincide with the eliminated capture widths: $\Im\left[\widetilde{E_j}\right] \neq - \frac{\Gamma_{\lambda,\gamma}}{2}$.
Similarly, neither will the alternative branch eigenvectors be real in general.\\

Note that the eigenvectors in tables \ref{tab:Brune parameters R-matrix} and \ref{tab:Brune parameters Reich-Moore} are close, but differ when going to higher digits precision, and those small differences have strong impact on the cross section calculation. This leads us to discuss the numerical methods employed to solve the generalized eigenproblem (\ref{eq:Brune eigenproblem Reich-Moore}), which need to be solved in wavenumber space $k_c$ (we here use the variable $z = \sqrt{E}$) to properly describe the multi-sheeted nature of mapping (\ref{eq:rho_c(E) mapping}).
For the fictitious R-matrix problem (dealing with only one channel and real values), we coded the analytic continuation of the $S_c(\rho_c)$ shift function (c.f. table \ref{tab::S_and_P_expressions_neutral}), and used the built-in MATLAB polynomial rootfinder to solve (\ref{eq:Brune eigenproblem Reich-Moore}), verifying that the results were indeed roots.
For the branch-point definition (\ref{eq:: Def S = Re[L], P = Im[L]}), we used a built-in MATLAB numerical solver for equations of the type $f(x) = 0$ on the determinant of the left-hand side of (\ref{eq:Brune eigenproblem}), and solved the roots one-by-one.
For the generalized Reich-Moore Brune eigenproblem (\ref{eq:Brune eigenproblem Reich-Moore}), the alternative analytic poles are readily found in the complex plane with the same polynomial rootfinder (we discuss methods to solve for all the roots of a polynomial simultaneously in \cite{Ducru_PHYSOR_conversion_2016} and \cite{Analytic_Benchmark_1_2019}).
Finding the alternative branch poles is much more complicated: the built-in $f(x) = 0$ MATLAB solver finds the two principal poles (on the non-physical sheet $\left\{E,-\right\}$ this time), but to find the shadow pole we had to devise a procedure manually: from the solution when the eliminated capture width is zero (R-matrix case), we zoom-in in the region around that solution and build a convex bowl around it and then slowly increase the eliminated capture width from zero. For each capture width value we did a minimization on the norm of the determinant to find the updated alternative pole for an in-between value of the eliminated capture width. We then iteratively re-solve, re-do a new complex bowl, augment the eliminated capture width, until we converge on the shadow alternative branch pole.
This cumbersome procedure points to the mathematical advantages of analytic continuation definition (\ref{eq:: Def S and P analytic continuation from L}) as it conserves smooth analytic properties of the Kapur-Peierls operator into the complex plane, which greatly simplifies the conversion to alternative poles for Reich-Moore evaluations. \\

Finally, to validate theorem \ref{theo::Choice of Brune poles}, as well as the entire generalization to Reich-Moore formalism we establish in section \ref{subsubsec::Generalization to Reich-Moore approximation and Teichmann-Wigner eliminated channels}, we construct the corresponding cross sections using the xenon $^{\mathrm{134}}\mathrm{Xe}$ resonance parameters from ENDF/B-VIII.0 with the exact R-matrix and Reich-Moore equations -- i.e. (\ref{eq:U expression}) and (\ref{eq:R expression}) to compute (\ref{eq:partial sigma_cc'from scattering matrix}).
The resulting point-wise cross section values are plotted in figure \ref{fig:xenon-134 J=1/2(-) cross section}, and their peak resonance values are provided for reference in table \ref{tab:Cross_sections_peak_values}. 
These cross sections do not exactly coincide with the point-wise evaluation values from ENDF/B-VIII.0, since ENDF uses the (coarser) MLBW equations instead of the Reich-Moore (or R-matrix) ones.
\begin{figure}[ht!!] % replace 't' with 'b' to force it to be on the bottom
  \centering
  \subfigure[\ First p-wave resonance.]{\includegraphics[width=0.49\textwidth]{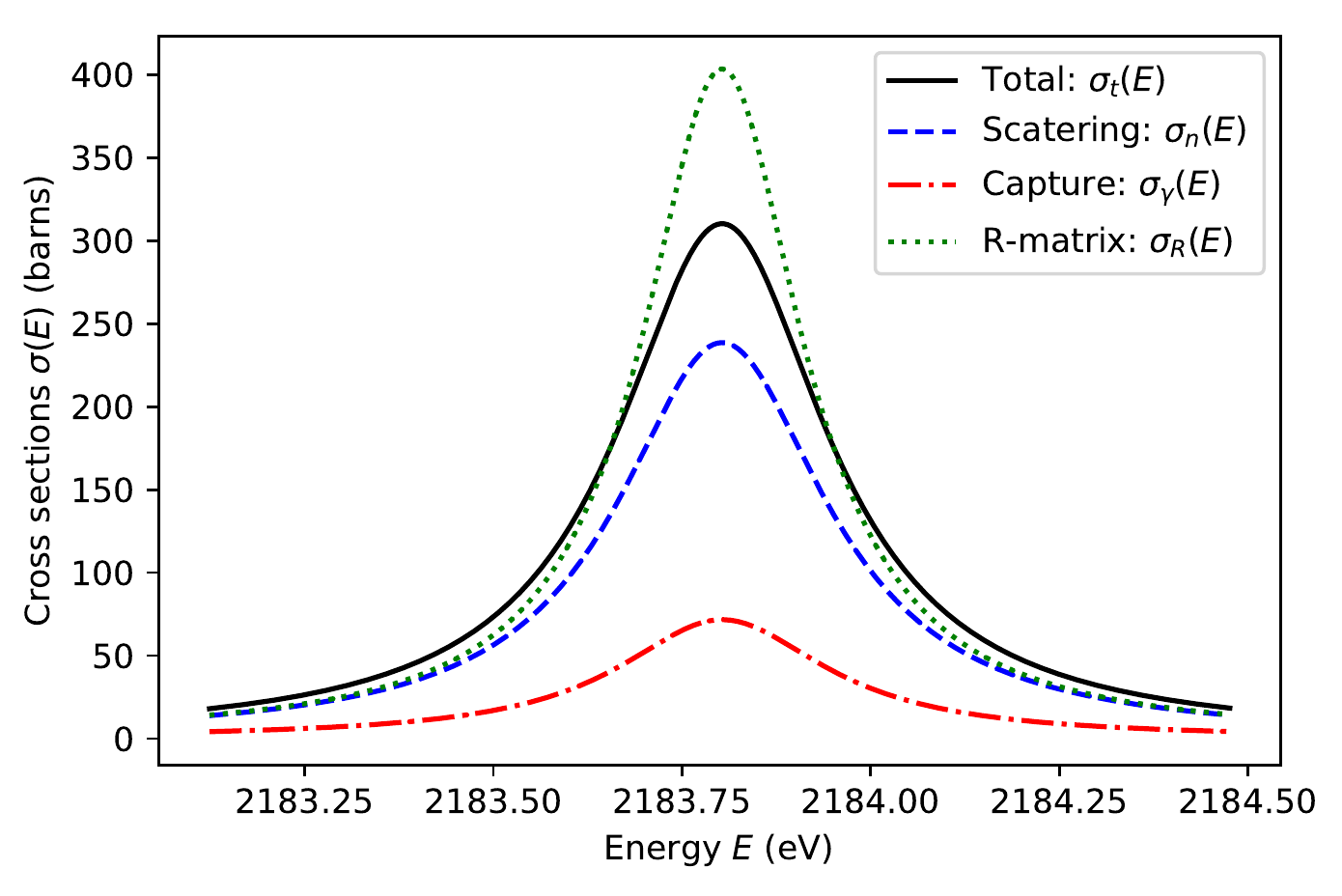}}
  \subfigure[\ Second p-wave resonance.]{\includegraphics[width=0.49\textwidth]{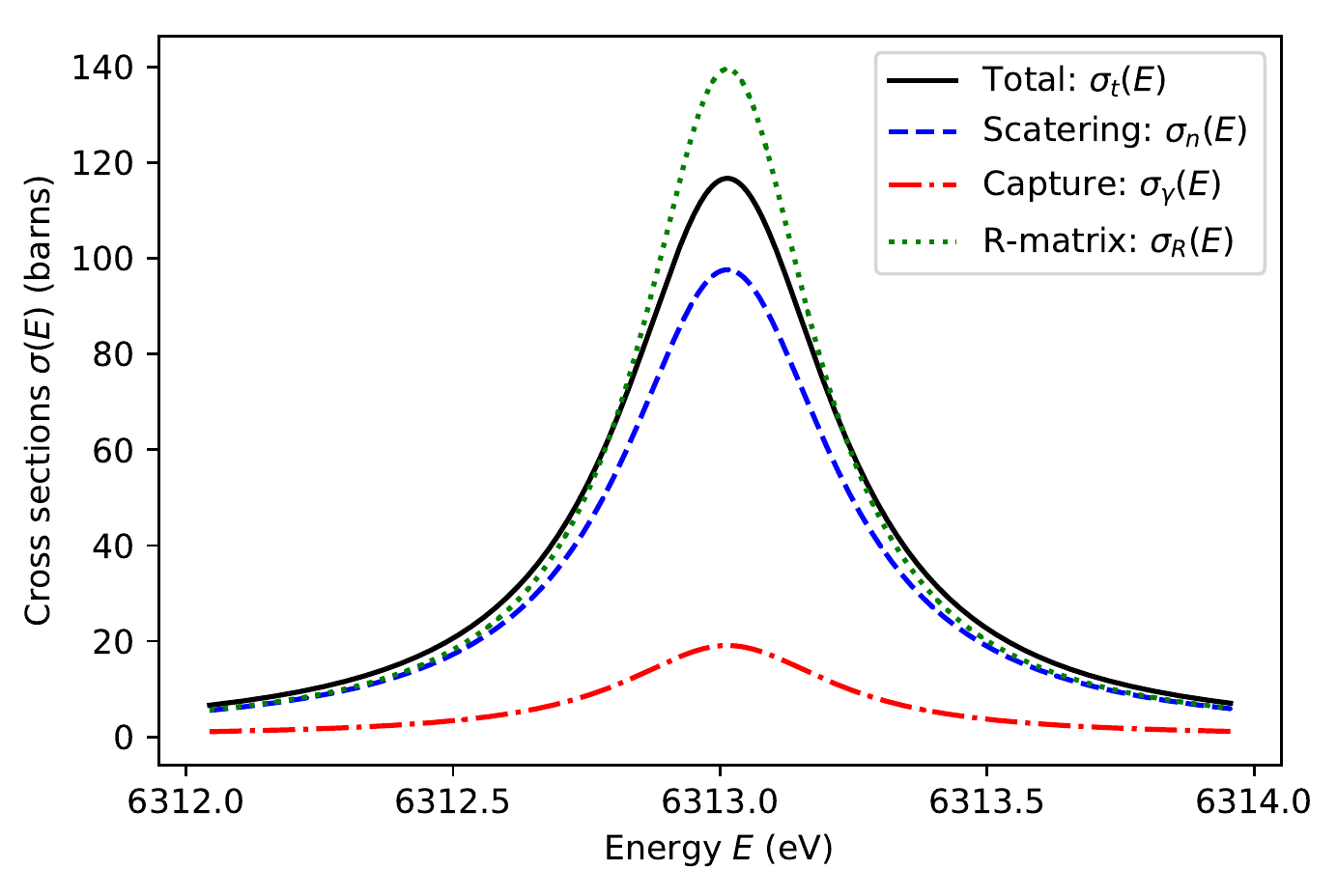}}
  \caption{\small{Xenon $^{\mathrm{134}}\mathrm{Xe}$ Reich-Moore cross sections for spin-parity group $J^\pi = 1/2^{(-)}$ p-wave resonances: the cross sections are generated using the ENDF/B-VIII.0 resonance parameters (a MLBW evaluation) into the Reich-Moore formalism equations.
  Similarly, the R-matrix cross section is generated by setting all capture (including eliminated) widths to zero. For reference, the exact cross section values at the resonance peaks are reported in table \ref{tab:Cross_sections_peak_values}.
  }}
  \label{fig:xenon-134 J=1/2(-) cross section}
\end{figure}

\begin{table}[b]
\caption{\label{tab:Cross_sections_peak_values}
\textsc{Resonance peaks}: Xenon $^{\mathrm{134}}\mathrm{Xe}$ spin-parity group $J^\pi = 1/2^{(-)}$ two p-waves cross section values at the peak of the resonances (truncated to 4 digits accuracy). The cross sections are generated using the ENDF/B-VIII.0 resonance parameters (a MLBW evaluation) into the Reich-Moore formalism equations. Similarly, the R-matrix cross section is generated by setting all capture (including eliminated) widths to zero.}
\begin{ruledtabular}
\begin{tabular}{l|l|l}
Energy (eV)  &  2183.8030 & 6313.0138 \\ \hline
Total cross-section (barns) &  310.2761 & 116.7259 \\ \hline
Scattering cross-section (barns) &  238.6095 & 97.6224 \\ \hline
Capture cross-section (barns) &  71.6666 & 19.1035 \\ \hline
R-matrix cross-section (barns) & 403.4677 & 139.5677 
\end{tabular}
\end{ruledtabular}
\end{table}
We then compute the cross section using the alternative parameters reported in tables \ref{tab:Brune parameters R-matrix} and \ref{tab:Brune parameters Reich-Moore} and following the procedure established in section \ref{subsubsec::Generalization to Reich-Moore approximation and Teichmann-Wigner eliminated channels} to reconstruct the Kapur-Peierls operator (\ref{eq: Kapur-Peierls operator with generalized Brune parameters}), necessary for computing the scattering matrix (\ref{eq:U expression}), and ultimately the cross section (\ref{eq:partial sigma_cc'from scattering matrix}).
We can now observe the alternative parametrization yield the exact same cross section as the R-matrix (or the Reich-Moore) parametrizaton, for both the Lane \& Thomas (\ref{eq:: Def S = Re[L], P = Im[L]}) or the analytic continuation (\ref{eq:: Def S analytic}) conventions, and by choosing any subset of at least $N_\lambda$ alternative poles: including discarding the principal poles and instead using the shadow poles. This result validates theorem \ref{theo::Choice of Brune poles} and its generalization to Reich-Moore (\ref{eq:: pseudo invA Brune generalized to Reich-Moore}), and will come as quite striking to some evaluators: for p-waves (or higher angular momentum) one can choose to discard the principal alternative pole directly close to the resonance and instead use the shadow pole, which is far below the threshold and into the complex plane, or even use both principal and shadow alternative poles (using the generalized inverse definition (\ref{eq:: pseudo invA Brune generalized to Reich-Moore}) and the procedure detailed in section \ref{subsubsec::Generalization to Reich-Moore approximation and Teichmann-Wigner eliminated channels}), to produce the exact same cross section resonance behavior.

%%%%%%%%%%%%%%%%%%%%%%%%%%%%%%%%%%%%%%%%%%%%%%%%%%%%%%%%%%%%%%%%%%%%%%%%%%%%%%%%
%*******************************************************************************
%%%%%%%%%%%%%%%%%%%%%%%%%%%%%%%%%%%%%%%%%%%%%%%%%%%%%%%%%%%%%%%%%%%%%%%%%%%%%%%%
\section{\label{sec:Conclusion}Conclusion}
%%%%%%%%%%%%%%%%%%%%%%%%%%%%%%%%%%%%%%%%%%%%%%%%%%%%%%%%%%%%%%%%%%%%%%%%%%%%%%%%
%*******************************************************************************
%%%%%%%%%%%%%%%%%%%%%%%%%%%%%%%%%%%%%%%%%%%%%%%%%%%%%%%%%%%%%%%%%%%%%%%%%%%%%%%%

This article establishes the existence of shadow poles in the alternative parametrization of R-matrix theory.
This parametrization is being considered as an alternative to the traditional Wigner-Eisenbud resonance parameters to document nuclear cross section values in the nuclear data libraries.

The Wigner-Eisenbud parameters are the poles $\big\{E_\lambda\big\}$ and residue widths $\big\{\gamma_{\lambda,c}\big\}$ of the $\boldsymbol{R}$ matrix (\ref{eq:R expression}). They are $N_\lambda \in \mathbb{N}$ real poles, which are independent from one another (meaning any choice of real parameters are physically acceptable), and de-entangle the energy dependence of the $\boldsymbol{R}$ matrix from the branch-points the thresholds $\left\{E_{T_c}\right\}$ introduce in the multi-sheeted Riemann surface of mapping (\ref{eq:rho_c(E) mapping}).
Both $\big\{E_\lambda\big\}$ and $\big\{\gamma_{\lambda,c}\big\}$ are dependent on both the channel radii $\big\{a_c \big\}$ and the boundary conditions $\big\{ B_c\big\}$.
The set of Wigner-Eisenbud parameters $\Big\{ E_{T_c}, a_c, B_c, E_{\lambda}, \gamma_{\lambda,c} \Big\}$ is sufficient to fully describe the scattering matrix $\boldsymbol{U}(E)$ energy dependence (\ref{eq:U expression}).

The alternative parameters are the poles $\left\{\widetilde{E_i}\right\}$ of the $\boldsymbol{R}_S$ matrix (\ref{eq:R_S by Brune det search}) and the widths $\left\{\widetilde{\gamma_{i,c}}\right\}$, transformed by (\ref{eq:Brune parameters}) from the residue widths of the alternative level matrix $\boldsymbol{\widetilde{A}}$ in (\ref{eq::Brune physical level matrix}) and (\ref{eq:Brune eigenproblem}). They are $N_S^{\pm} \geq N_\lambda$ poles, and are intimately interdependent in that not any set of real parameters is physically acceptable (they must be solutions of  (\ref{eq:Brune eigenproblem})).
If the legacy Lane \& Thomas definition (\ref{eq:: Def S = Re[L], P = Im[L]}) is chosen for the shift function $\boldsymbol{S}$, the alternative branch poles live on the multi-sheeted Riemann surface of mapping (\ref{eq:rho_c(E) mapping}): they have branch shadow poles $\left\{\widetilde{E_i}\right\}$ on the unphysical sheets $\left\{ E, -\right\}$ below threshold $ E < E_{T_c}$, though there are only $N_\lambda$ real poles on the physical sheet (theorem \ref{theo::branch_brune_poles}).
If analytic continuation definition (\ref{eq:: Def S analytic}) is chosen, then the shift factor $\boldsymbol{S}$ is a function of $\rho_c^2$, which unfolds the sheets in mapping (\ref{eq:rho_c(E) mapping}): there are then $N_S^{\mathbb{C}} \geq N_\lambda$ analytic poles $\left\{\widetilde{E_i}\right\}$, in general complex (though for R-matrix at least $N_\lambda$ of them are real), all living on the same sheet with no branch points (theorem \ref{theo::analytic_Brune_poles}).
Both $\left\{\widetilde{E_i}\right\}$ and $\left\{\widetilde{\gamma_{i,c}}\right\}$ are invariant to change in boundary conditions $\big\{B_c\big\}$, though both depend on the channel radii $\big\{a_c\big\}$.
Any subset of $N_\lambda$ or more alternative parameters $\Big\{ E_{T_c}, a_c, \widetilde{E_{i}}, \widetilde{\gamma_{i,c}} \Big\}$ is sufficient to entirely determine the energy behavior of the scattering matrix $\boldsymbol{U}$ through (\ref{eq:R_L unchanged by Brune}) and (\ref{eq:U expression}) (theorem \ref{theo::Choice of Brune poles}).

The first shadow alternative poles are observed in xenon isotope $\mathrm{^{134}_{\; \, 54} Xe}$ spin-parity group $J^\pi = 1/2^{(-)}$, which has two p-wave resonance.
We show how the shadow alternative poles can be chosen instead of the traditional principal alternative poles to compute the cross sections.
We also demonstrate that any subset of $N_\lambda$ alternative poles will also reconstruct the cross section. Since there are $N_\lambda$ principal (resonant) alternative poles, this means that the shadow poles can be discarded from future nuclear data libraries without compromising their capability to fully reconstruct R-matrix cross sections (i.e. entirely describe their energy dependence).

In order to convert the xenon resonance parameters, we generalize the alternative parameters to deal with the Reich-Moore approximation and the additional shadow poles. 
The Reich-Moore approximation -- widely used in nuclear data libraries -- introduces complex Reich Moore alternative parameters (\ref{eq:Brune parameters Reich Moore}), and their values depend on which convention -- analytic continuation definition (\ref{eq:: Def S analytic}) versus branch-point definition (\ref{eq:: Def S = Re[L], P = Im[L]}) -- is chosen to continue the R-matrix operators to complex wavenumbers. 
Deciding on this convention is thus a necessary prerequisite to converting nuclear data libraries to alternative parameters. For mathematical and physical reasons, we argue in favor of analytic continuation in a follow-up article \cite{Ducru_Scattering_Matrix_of_Complex_Wavenumbers_2019}.

%%%%%%%%%%%%%%%%%%%%%%%%%%%%%%%%%%%%%%%%%%%%%%%%%%%%%%%%%%%%%%%%%%%%%%%%%%%%%%%%
%*******************************************************************************
%%%%%%%%%%%%%%%%%%%%%%%%%%%%%%%%%%%%%%%%%%%%%%%%%%%%%%%%%%%%%%%%%%%%%%%%%%%%%%%%
\begin{acknowledgments}
%%%%%%%%%%%%%%%%%%%%%%%%%%%%%%%%%%%%%%%%%%%%%%%%%%%%%%%%%%%%%%%%%%%%%%%%%%%%%%%%
%*******************************************************************************
%%%%%%%%%%%%%%%%%%%%%%%%%%%%%%%%%%%%%%%%%%%%%%%%%%%%%%%%%%%%%%%%%%%%%%%%%%%%%%%%
This work was partly funded by the Los Alamos National Laboratory (summer 2017 research position in T-2 division with G. Hale and M. Paris), as well as by the Consortium for Advanced Simulation of Light Water Reactors (CASL), an Energy Innovation Hub for Modeling and Simulation of Nuclear Reactors under U.S. Department of Energy Contract No. DE-AC05-00OR22725. In addition to a Research Assistantship from the Massachusetts Institute of Technology, U.S.A., the first author was also supported as AXA Fellow of the Schwarzman Scholars Program, Tsinghua University, Beijing, China.

We would like to thank and acknowledge Yoann Desmouceaux, author of the proof of diagonal divisibility and capped multiplicities lemma \ref{lem::diagonal divisibility and capped multiplicities}, who's help and inputs were key on technical algebraic points.
Our genuine gratitude to Dr. Andrew Holcomb, from Oak-Ridge National Laboratory, who helped us numerically test the veracity of Mittag-Leffler expansion (\ref{eq::Explicit Mittag-Leffler expansion of L_c}).
We would also like to thank Prof. Javier Sesma, from Universidad de Zaragoza, for his invaluable guidance on the properties of the Hankel functions, as well as Haile Owusu for his insight into Hamiltonian degeneracy.
Finally, we are grateful towards the organizers of the R-matrix workshop of summer 2016 in Santa Fe, New Mexico \cite{website:2016RMatrixWorkshop}, which was genuinely catalytic to these findings.

\end{acknowledgments}

\bibliography{Shadow_poles_in_the_alternative_parametrization_of_R_matrix_theory}% Produces the bibliography via BibTeX.

%merlin.mbs apsrev4-1.bst 2010-07-25 4.21a (PWD, AO, DPC) hacked
%Control: key (0)
%Control: author (8) initials jnrlst
%Control: editor formatted (1) identically to author
%Control: production of article title (-1) disabled
%Control: page (0) single
%Control: year (1) truncated
%Control: production of eprint (0) enabled
\providecommand{\noopsort}[1]{}\providecommand{\singleletter}[1]{#1}%
\begin{thebibliography}{62}%
\makeatletter
\providecommand \@ifxundefined [1]{%
 \@ifx{#1\undefined}
}%
\providecommand \@ifnum [1]{%
 \ifnum #1\expandafter \@firstoftwo
 \else \expandafter \@secondoftwo
 \fi
}%
\providecommand \@ifx [1]{%
 \ifx #1\expandafter \@firstoftwo
 \else \expandafter \@secondoftwo
 \fi
}%
\providecommand \natexlab [1]{#1}%
\providecommand \enquote  [1]{``#1''}%
\providecommand \bibnamefont  [1]{#1}%
\providecommand \bibfnamefont [1]{#1}%
\providecommand \citenamefont [1]{#1}%
\providecommand \href@noop [0]{\@secondoftwo}%
\providecommand \href [0]{\begingroup \@sanitize@url \@href}%
\providecommand \@href[1]{\@@startlink{#1}\@@href}%
\providecommand \@@href[1]{\endgroup#1\@@endlink}%
\providecommand \@sanitize@url [0]{\catcode `\\12\catcode `\$12\catcode
  `\&12\catcode `\#12\catcode `\^12\catcode `\_12\catcode `\%12\relax}%
\providecommand \@@startlink[1]{}%
\providecommand \@@endlink[0]{}%
\providecommand \url  [0]{\begingroup\@sanitize@url \@url }%
\providecommand \@url [1]{\endgroup\@href {#1}{\urlprefix }}%
\providecommand \urlprefix  [0]{URL }%
\providecommand \Eprint [0]{\href }%
\providecommand \doibase [0]{http://dx.doi.org/}%
\providecommand \selectlanguage [0]{\@gobble}%
\providecommand \bibinfo  [0]{\@secondoftwo}%
\providecommand \bibfield  [0]{\@secondoftwo}%
\providecommand \translation [1]{[#1]}%
\providecommand \BibitemOpen [0]{}%
\providecommand \bibitemStop [0]{}%
\providecommand \bibitemNoStop [0]{.\EOS\space}%
\providecommand \EOS [0]{\spacefactor3000\relax}%
\providecommand \BibitemShut  [1]{\csname bibitem#1\endcsname}%
\let\auto@bib@innerbib\@empty
%</preamble>
\bibitem [{\citenamefont {Brown}\ \emph {et~al.}(2018)\citenamefont {Brown},
  \citenamefont {Chadwick}, \citenamefont {Capote}, \citenamefont {Kahler},
  \citenamefont {Trkov}, \citenamefont {Herman}, \citenamefont {Sonzogni},
  \citenamefont {Danon}, \citenamefont {Carlson}, \citenamefont {Dunn},
  \citenamefont {Smith}, \citenamefont {Hale}, \citenamefont {Arbanas},
  \citenamefont {Arcilla}, \citenamefont {Bates}, \citenamefont {Beck},
  \citenamefont {Becker}, \citenamefont {Brown}, \citenamefont {Casperson},
  \citenamefont {Conlin}, \citenamefont {Cullen}, \citenamefont {Descalle},
  \citenamefont {Firestone}, \citenamefont {Gaines}, \citenamefont {Guber},
  \citenamefont {Hawari}, \citenamefont {Holmes}, \citenamefont {Johnson},
  \citenamefont {Kawano}, \citenamefont {Kiedrowski}, \citenamefont {Koning},
  \citenamefont {Kopecky}, \citenamefont {Leal}, \citenamefont {Lestone},
  \citenamefont {Lubitz}, \citenamefont {MÃ¡rquez~DamiÃ¡n}, \citenamefont
  {Mattoon}, \citenamefont {McCutchan}, \citenamefont {Mughabghab},
  \citenamefont {Navratil}, \citenamefont {Neudecker}, \citenamefont {Nobre},
  \citenamefont {Noguere}, \citenamefont {Paris}, \citenamefont {Pigni},
  \citenamefont {Plompen}, \citenamefont {Pritychenko}, \citenamefont
  {Pronyaev}, \citenamefont {Roubtsov}, \citenamefont {Rochman}, \citenamefont
  {Romano}, \citenamefont {Schillebeeckx}, \citenamefont {Simakov},
  \citenamefont {Sin}, \citenamefont {Sirakov}, \citenamefont {Sleaford},
  \citenamefont {Sobes}, \citenamefont {Soukhovitskii}, \citenamefont {Stetcu},
  \citenamefont {Talou}, \citenamefont {Thompson}, \citenamefont {van~der
  Marck}, \citenamefont {Welser-Sherrill}, \citenamefont {Wiarda},
  \citenamefont {White}, \citenamefont {Wormald}, \citenamefont {Wright},
  \citenamefont {Zerkle}, \citenamefont {Åœerovnik},\ and\ \citenamefont
  {Zhu}}]{ENDFBVIII8th2018brown}%
  \BibitemOpen
  \bibfield  {author} {\bibinfo {author} {\bibfnamefont {D.~A.}\ \bibnamefont
  {Brown}}, \bibinfo {author} {\bibfnamefont {M.~B.}\ \bibnamefont {Chadwick}},
  \bibinfo {author} {\bibfnamefont {R.}~\bibnamefont {Capote}}, \bibinfo
  {author} {\bibfnamefont {A.~C.}\ \bibnamefont {Kahler}}, \bibinfo {author}
  {\bibfnamefont {A.}~\bibnamefont {Trkov}}, \bibinfo {author} {\bibfnamefont
  {M.~W.}\ \bibnamefont {Herman}}, \bibinfo {author} {\bibfnamefont {A.~A.}\
  \bibnamefont {Sonzogni}}, \bibinfo {author} {\bibfnamefont {Y.}~\bibnamefont
  {Danon}}, \bibinfo {author} {\bibfnamefont {A.~D.}\ \bibnamefont {Carlson}},
  \bibinfo {author} {\bibfnamefont {M.}~\bibnamefont {Dunn}}, \bibinfo {author}
  {\bibfnamefont {D.~L.}\ \bibnamefont {Smith}}, \bibinfo {author}
  {\bibfnamefont {G.~M.}\ \bibnamefont {Hale}}, \bibinfo {author}
  {\bibfnamefont {G.}~\bibnamefont {Arbanas}}, \bibinfo {author} {\bibfnamefont
  {R.}~\bibnamefont {Arcilla}}, \bibinfo {author} {\bibfnamefont {C.~R.}\
  \bibnamefont {Bates}}, \bibinfo {author} {\bibfnamefont {B.}~\bibnamefont
  {Beck}}, \bibinfo {author} {\bibfnamefont {B.}~\bibnamefont {Becker}},
  \bibinfo {author} {\bibfnamefont {F.}~\bibnamefont {Brown}}, \bibinfo
  {author} {\bibfnamefont {R.~J.}\ \bibnamefont {Casperson}}, \bibinfo {author}
  {\bibfnamefont {J.}~\bibnamefont {Conlin}}, \bibinfo {author} {\bibfnamefont
  {D.~E.}\ \bibnamefont {Cullen}}, \bibinfo {author} {\bibfnamefont {M.~A.}\
  \bibnamefont {Descalle}}, \bibinfo {author} {\bibfnamefont {R.}~\bibnamefont
  {Firestone}}, \bibinfo {author} {\bibfnamefont {T.}~\bibnamefont {Gaines}},
  \bibinfo {author} {\bibfnamefont {K.~H.}\ \bibnamefont {Guber}}, \bibinfo
  {author} {\bibfnamefont {A.~I.}\ \bibnamefont {Hawari}}, \bibinfo {author}
  {\bibfnamefont {J.}~\bibnamefont {Holmes}}, \bibinfo {author} {\bibfnamefont
  {T.~D.}\ \bibnamefont {Johnson}}, \bibinfo {author} {\bibfnamefont
  {T.}~\bibnamefont {Kawano}}, \bibinfo {author} {\bibfnamefont {B.~C.}\
  \bibnamefont {Kiedrowski}}, \bibinfo {author} {\bibfnamefont {A.~J.}\
  \bibnamefont {Koning}}, \bibinfo {author} {\bibfnamefont {S.}~\bibnamefont
  {Kopecky}}, \bibinfo {author} {\bibfnamefont {L.}~\bibnamefont {Leal}},
  \bibinfo {author} {\bibfnamefont {J.~P.}\ \bibnamefont {Lestone}}, \bibinfo
  {author} {\bibfnamefont {C.}~\bibnamefont {Lubitz}}, \bibinfo {author}
  {\bibfnamefont {J.~I.}\ \bibnamefont {MÃ¡rquez~DamiÃ¡n}}, \bibinfo
  {author} {\bibfnamefont {C.~M.}\ \bibnamefont {Mattoon}}, \bibinfo {author}
  {\bibfnamefont {E.~A.}\ \bibnamefont {McCutchan}}, \bibinfo {author}
  {\bibfnamefont {S.}~\bibnamefont {Mughabghab}}, \bibinfo {author}
  {\bibfnamefont {P.}~\bibnamefont {Navratil}}, \bibinfo {author}
  {\bibfnamefont {D.}~\bibnamefont {Neudecker}}, \bibinfo {author}
  {\bibfnamefont {G.~P.~A.}\ \bibnamefont {Nobre}}, \bibinfo {author}
  {\bibfnamefont {G.}~\bibnamefont {Noguere}}, \bibinfo {author} {\bibfnamefont
  {M.}~\bibnamefont {Paris}}, \bibinfo {author} {\bibfnamefont {M.~T.}\
  \bibnamefont {Pigni}}, \bibinfo {author} {\bibfnamefont {A.~J.}\ \bibnamefont
  {Plompen}}, \bibinfo {author} {\bibfnamefont {B.}~\bibnamefont
  {Pritychenko}}, \bibinfo {author} {\bibfnamefont {V.~G.}\ \bibnamefont
  {Pronyaev}}, \bibinfo {author} {\bibfnamefont {D.}~\bibnamefont {Roubtsov}},
  \bibinfo {author} {\bibfnamefont {D.}~\bibnamefont {Rochman}}, \bibinfo
  {author} {\bibfnamefont {P.}~\bibnamefont {Romano}}, \bibinfo {author}
  {\bibfnamefont {P.}~\bibnamefont {Schillebeeckx}}, \bibinfo {author}
  {\bibfnamefont {S.}~\bibnamefont {Simakov}}, \bibinfo {author} {\bibfnamefont
  {M.}~\bibnamefont {Sin}}, \bibinfo {author} {\bibfnamefont {I.}~\bibnamefont
  {Sirakov}}, \bibinfo {author} {\bibfnamefont {B.}~\bibnamefont {Sleaford}},
  \bibinfo {author} {\bibfnamefont {V.}~\bibnamefont {Sobes}}, \bibinfo
  {author} {\bibfnamefont {E.~S.}\ \bibnamefont {Soukhovitskii}}, \bibinfo
  {author} {\bibfnamefont {I.}~\bibnamefont {Stetcu}}, \bibinfo {author}
  {\bibfnamefont {P.}~\bibnamefont {Talou}}, \bibinfo {author} {\bibfnamefont
  {I.}~\bibnamefont {Thompson}}, \bibinfo {author} {\bibfnamefont
  {S.}~\bibnamefont {van~der Marck}}, \bibinfo {author} {\bibfnamefont
  {L.}~\bibnamefont {Welser-Sherrill}}, \bibinfo {author} {\bibfnamefont
  {D.}~\bibnamefont {Wiarda}}, \bibinfo {author} {\bibfnamefont
  {M.}~\bibnamefont {White}}, \bibinfo {author} {\bibfnamefont {J.~L.}\
  \bibnamefont {Wormald}}, \bibinfo {author} {\bibfnamefont {R.~Q.}\
  \bibnamefont {Wright}}, \bibinfo {author} {\bibfnamefont {M.}~\bibnamefont
  {Zerkle}}, \bibinfo {author} {\bibfnamefont {G.}~\bibnamefont {Åœerovnik}},
  \ and\ \bibinfo {author} {\bibfnamefont {Y.}~\bibnamefont {Zhu}},\ }\href
  {\doibase 10.1016/j.nds.2018.02.001} {\bibfield  {journal} {\bibinfo
  {journal} {Nuclear Data Sheets}\ }\bibinfo {series} {Special {Issue} on
  {Nuclear} {Reaction} {Data}},\ \textbf {\bibinfo {volume} {148}},\ \bibinfo
  {pages} {1} (\bibinfo {year} {2018})}\BibitemShut {NoStop}%
\bibitem [{\citenamefont {Plompen}\ \emph {et~al.}(2020)\citenamefont
  {Plompen}, \citenamefont {Cabellos}, \citenamefont {De~Saint~Jean},
  \citenamefont {Fleming}, \citenamefont {Algora}, \citenamefont {Angelone},
  \citenamefont {Archier}, \citenamefont {Bauge}, \citenamefont {Bersillon},
  \citenamefont {Blokhin}, \citenamefont {Cantargi}, \citenamefont {Chebboubi},
  \citenamefont {Diez}, \citenamefont {Duarte}, \citenamefont {Dupont},
  \citenamefont {Dyrda}, \citenamefont {Erasmus}, \citenamefont {Fiorito},
  \citenamefont {Fischer}, \citenamefont {Flammini}, \citenamefont {Foligno},
  \citenamefont {Gilbert}, \citenamefont {Granada}, \citenamefont {Haeck},
  \citenamefont {Hambsch}, \citenamefont {Helgesson}, \citenamefont {Hilaire},
  \citenamefont {Hill}, \citenamefont {Hursin}, \citenamefont {Ichou},
  \citenamefont {Jacqmin}, \citenamefont {Jansky}, \citenamefont {Jouanne},
  \citenamefont {Kellett}, \citenamefont {Kim}, \citenamefont {Kim},
  \citenamefont {Kodeli}, \citenamefont {Koning}, \citenamefont {Konobeyev},
  \citenamefont {Kopecky}, \citenamefont {Kos}, \citenamefont {Kr\'asa},
  \citenamefont {Leal}, \citenamefont {Leclaire}, \citenamefont {Leconte},
  \citenamefont {Lee}, \citenamefont {Leeb}, \citenamefont {Litaize},
  \citenamefont {Majerle}, \citenamefont {M\'arquez~Dami\'an}, \citenamefont
  {Michel-Sendis}, \citenamefont {Mills}, \citenamefont {Morillon},
  \citenamefont {Nogu\`ere}, \citenamefont {Pecchia}, \citenamefont {Pelloni},
  \citenamefont {Pereslavtsev}, \citenamefont {Perry}, \citenamefont {Rochman},
  \citenamefont {R\"ohrmoser}, \citenamefont {Romain}, \citenamefont
  {Romojaro}, \citenamefont {Roubtsov}, \citenamefont {Sauvan}, \citenamefont
  {Schillebeeckx}, \citenamefont {Schmidt}, \citenamefont {Serot},
  \citenamefont {Simakov}, \citenamefont {Sirakov}, \citenamefont
  {Sj\"ostrand}, \citenamefont {Stankovskiy}, \citenamefont {Sublet},
  \citenamefont {Tamagno}, \citenamefont {Trkov}, \citenamefont {van~der
  Marck}, \citenamefont {\'Alvarez-Velarde}, \citenamefont {Villari},
  \citenamefont {Ware}, \citenamefont {Yokoyama},\ and\ \citenamefont
  {\'Zerovnik}}]{JEFF_2020_plompenJointEvaluatedFission2020}%
  \BibitemOpen
  \bibfield  {author} {\bibinfo {author} {\bibfnamefont {A.~J.~M.}\
  \bibnamefont {Plompen}}, \bibinfo {author} {\bibfnamefont {O.}~\bibnamefont
  {Cabellos}}, \bibinfo {author} {\bibfnamefont {C.}~\bibnamefont
  {De~Saint~Jean}}, \bibinfo {author} {\bibfnamefont {M.}~\bibnamefont
  {Fleming}}, \bibinfo {author} {\bibfnamefont {A.}~\bibnamefont {Algora}},
  \bibinfo {author} {\bibfnamefont {M.}~\bibnamefont {Angelone}}, \bibinfo
  {author} {\bibfnamefont {P.}~\bibnamefont {Archier}}, \bibinfo {author}
  {\bibfnamefont {E.}~\bibnamefont {Bauge}}, \bibinfo {author} {\bibfnamefont
  {O.}~\bibnamefont {Bersillon}}, \bibinfo {author} {\bibfnamefont
  {A.}~\bibnamefont {Blokhin}}, \bibinfo {author} {\bibfnamefont
  {F.}~\bibnamefont {Cantargi}}, \bibinfo {author} {\bibfnamefont
  {A.}~\bibnamefont {Chebboubi}}, \bibinfo {author} {\bibfnamefont
  {C.}~\bibnamefont {Diez}}, \bibinfo {author} {\bibfnamefont {H.}~\bibnamefont
  {Duarte}}, \bibinfo {author} {\bibfnamefont {E.}~\bibnamefont {Dupont}},
  \bibinfo {author} {\bibfnamefont {J.}~\bibnamefont {Dyrda}}, \bibinfo
  {author} {\bibfnamefont {B.}~\bibnamefont {Erasmus}}, \bibinfo {author}
  {\bibfnamefont {L.}~\bibnamefont {Fiorito}}, \bibinfo {author} {\bibfnamefont
  {U.}~\bibnamefont {Fischer}}, \bibinfo {author} {\bibfnamefont
  {D.}~\bibnamefont {Flammini}}, \bibinfo {author} {\bibfnamefont
  {D.}~\bibnamefont {Foligno}}, \bibinfo {author} {\bibfnamefont {M.~R.}\
  \bibnamefont {Gilbert}}, \bibinfo {author} {\bibfnamefont {J.~R.}\
  \bibnamefont {Granada}}, \bibinfo {author} {\bibfnamefont {W.}~\bibnamefont
  {Haeck}}, \bibinfo {author} {\bibfnamefont {F.-J.}\ \bibnamefont {Hambsch}},
  \bibinfo {author} {\bibfnamefont {P.}~\bibnamefont {Helgesson}}, \bibinfo
  {author} {\bibfnamefont {S.}~\bibnamefont {Hilaire}}, \bibinfo {author}
  {\bibfnamefont {I.}~\bibnamefont {Hill}}, \bibinfo {author} {\bibfnamefont
  {M.}~\bibnamefont {Hursin}}, \bibinfo {author} {\bibfnamefont
  {R.}~\bibnamefont {Ichou}}, \bibinfo {author} {\bibfnamefont
  {R.}~\bibnamefont {Jacqmin}}, \bibinfo {author} {\bibfnamefont
  {B.}~\bibnamefont {Jansky}}, \bibinfo {author} {\bibfnamefont
  {C.}~\bibnamefont {Jouanne}}, \bibinfo {author} {\bibfnamefont {M.~A.}\
  \bibnamefont {Kellett}}, \bibinfo {author} {\bibfnamefont {D.~H.}\
  \bibnamefont {Kim}}, \bibinfo {author} {\bibfnamefont {H.~I.}\ \bibnamefont
  {Kim}}, \bibinfo {author} {\bibfnamefont {I.}~\bibnamefont {Kodeli}},
  \bibinfo {author} {\bibfnamefont {A.~J.}\ \bibnamefont {Koning}}, \bibinfo
  {author} {\bibfnamefont {A.~Y.}\ \bibnamefont {Konobeyev}}, \bibinfo {author}
  {\bibfnamefont {S.}~\bibnamefont {Kopecky}}, \bibinfo {author} {\bibfnamefont
  {B.}~\bibnamefont {Kos}}, \bibinfo {author} {\bibfnamefont {A.}~\bibnamefont
  {Kr\'asa}}, \bibinfo {author} {\bibfnamefont {L.~C.}\ \bibnamefont {Leal}},
  \bibinfo {author} {\bibfnamefont {N.}~\bibnamefont {Leclaire}}, \bibinfo
  {author} {\bibfnamefont {P.}~\bibnamefont {Leconte}}, \bibinfo {author}
  {\bibfnamefont {Y.~O.}\ \bibnamefont {Lee}}, \bibinfo {author} {\bibfnamefont
  {H.}~\bibnamefont {Leeb}}, \bibinfo {author} {\bibfnamefont {O.}~\bibnamefont
  {Litaize}}, \bibinfo {author} {\bibfnamefont {M.}~\bibnamefont {Majerle}},
  \bibinfo {author} {\bibfnamefont {J.~I.}\ \bibnamefont {M\'arquez~Dami\'an}},
  \bibinfo {author} {\bibfnamefont {F.}~\bibnamefont {Michel-Sendis}}, \bibinfo
  {author} {\bibfnamefont {R.~W.}\ \bibnamefont {Mills}}, \bibinfo {author}
  {\bibfnamefont {B.}~\bibnamefont {Morillon}}, \bibinfo {author}
  {\bibfnamefont {G.}~\bibnamefont {Nogu\`ere}}, \bibinfo {author}
  {\bibfnamefont {M.}~\bibnamefont {Pecchia}}, \bibinfo {author} {\bibfnamefont
  {S.}~\bibnamefont {Pelloni}}, \bibinfo {author} {\bibfnamefont
  {P.}~\bibnamefont {Pereslavtsev}}, \bibinfo {author} {\bibfnamefont {R.~J.}\
  \bibnamefont {Perry}}, \bibinfo {author} {\bibfnamefont {D.}~\bibnamefont
  {Rochman}}, \bibinfo {author} {\bibfnamefont {A.}~\bibnamefont
  {R\"ohrmoser}}, \bibinfo {author} {\bibfnamefont {P.}~\bibnamefont {Romain}},
  \bibinfo {author} {\bibfnamefont {P.}~\bibnamefont {Romojaro}}, \bibinfo
  {author} {\bibfnamefont {D.}~\bibnamefont {Roubtsov}}, \bibinfo {author}
  {\bibfnamefont {P.}~\bibnamefont {Sauvan}}, \bibinfo {author} {\bibfnamefont
  {P.}~\bibnamefont {Schillebeeckx}}, \bibinfo {author} {\bibfnamefont {K.~H.}\
  \bibnamefont {Schmidt}}, \bibinfo {author} {\bibfnamefont {O.}~\bibnamefont
  {Serot}}, \bibinfo {author} {\bibfnamefont {S.}~\bibnamefont {Simakov}},
  \bibinfo {author} {\bibfnamefont {I.}~\bibnamefont {Sirakov}}, \bibinfo
  {author} {\bibfnamefont {H.}~\bibnamefont {Sj\"ostrand}}, \bibinfo {author}
  {\bibfnamefont {A.}~\bibnamefont {Stankovskiy}}, \bibinfo {author}
  {\bibfnamefont {J.~C.}\ \bibnamefont {Sublet}}, \bibinfo {author}
  {\bibfnamefont {P.}~\bibnamefont {Tamagno}}, \bibinfo {author} {\bibfnamefont
  {A.}~\bibnamefont {Trkov}}, \bibinfo {author} {\bibfnamefont
  {S.}~\bibnamefont {van~der Marck}}, \bibinfo {author} {\bibfnamefont
  {F.}~\bibnamefont {\'Alvarez-Velarde}}, \bibinfo {author} {\bibfnamefont
  {R.}~\bibnamefont {Villari}}, \bibinfo {author} {\bibfnamefont {T.~C.}\
  \bibnamefont {Ware}}, \bibinfo {author} {\bibfnamefont {K.}~\bibnamefont
  {Yokoyama}}, \ and\ \bibinfo {author} {\bibfnamefont {G.}~\bibnamefont
  {\'Zerovnik}},\ }\href {\doibase 10.1140/epja/s10050-020-00141-9} {\bibfield
  {journal} {\bibinfo  {journal} {The European Physical Journal A}\ }\textbf
  {\bibinfo {volume} {56}},\ \bibinfo {pages} {181} (\bibinfo {year}
  {2020})}\BibitemShut {NoStop}%
\bibitem [{\citenamefont {Shibata}\ \emph {et~al.}(2011)\citenamefont
  {Shibata}, \citenamefont {Iwamoto}, \citenamefont {Nakagawa}, \citenamefont
  {Iwamoto}, \citenamefont {Ichihara}, \citenamefont {Kunieda}, \citenamefont
  {Chiba}, \citenamefont {Furutaka}, \citenamefont {Otuka}, \citenamefont
  {Ohsawa}, \citenamefont {Murata}, \citenamefont {Matsunobu}, \citenamefont
  {Zukeran}, \citenamefont {Kamada},\ and\ \citenamefont
  {Katakura}}]{JENDL_shibataJENDL4NewLibrary2011}%
  \BibitemOpen
  \bibfield  {author} {\bibinfo {author} {\bibfnamefont {K.}~\bibnamefont
  {Shibata}}, \bibinfo {author} {\bibfnamefont {O.}~\bibnamefont {Iwamoto}},
  \bibinfo {author} {\bibfnamefont {T.}~\bibnamefont {Nakagawa}}, \bibinfo
  {author} {\bibfnamefont {N.}~\bibnamefont {Iwamoto}}, \bibinfo {author}
  {\bibfnamefont {A.}~\bibnamefont {Ichihara}}, \bibinfo {author}
  {\bibfnamefont {S.}~\bibnamefont {Kunieda}}, \bibinfo {author} {\bibfnamefont
  {S.}~\bibnamefont {Chiba}}, \bibinfo {author} {\bibfnamefont
  {K.}~\bibnamefont {Furutaka}}, \bibinfo {author} {\bibfnamefont
  {N.}~\bibnamefont {Otuka}}, \bibinfo {author} {\bibfnamefont
  {T.}~\bibnamefont {Ohsawa}}, \bibinfo {author} {\bibfnamefont
  {T.}~\bibnamefont {Murata}}, \bibinfo {author} {\bibfnamefont
  {H.}~\bibnamefont {Matsunobu}}, \bibinfo {author} {\bibfnamefont
  {A.}~\bibnamefont {Zukeran}}, \bibinfo {author} {\bibfnamefont
  {S.}~\bibnamefont {Kamada}}, \ and\ \bibinfo {author} {\bibfnamefont {J.-i.}\
  \bibnamefont {Katakura}},\ }\href {\doibase 10.1080/18811248.2011.9711675}
  {\bibfield  {journal} {\bibinfo  {journal} {Journal of Nuclear Science and
  Technology}\ }\textbf {\bibinfo {volume} {48}},\ \bibinfo {pages} {1}
  (\bibinfo {year} {2011})},\ \bibinfo {note} {publisher: Taylor \& Francis
  \_eprint: https://doi.org/10.1080/18811248.2011.9711675}\BibitemShut
  {NoStop}%
\bibitem [{\citenamefont {Blokhin}\ \emph {et~al.}(2016)\citenamefont
  {Blokhin}, \citenamefont {Gai}, \citenamefont {Ignatyuk}, \citenamefont
  {Koba}, \citenamefont {Manokhin},\ and\ \citenamefont
  {Pronyaev}}]{BROND_2016}%
  \BibitemOpen
  \bibfield  {author} {\bibinfo {author} {\bibfnamefont {A.}~\bibnamefont
  {Blokhin}}, \bibinfo {author} {\bibfnamefont {E.}~\bibnamefont {Gai}},
  \bibinfo {author} {\bibfnamefont {A.}~\bibnamefont {Ignatyuk}}, \bibinfo
  {author} {\bibfnamefont {I.}~\bibnamefont {Koba}}, \bibinfo {author}
  {\bibfnamefont {V.}~\bibnamefont {Manokhin}}, \ and\ \bibinfo {author}
  {\bibfnamefont {V.}~\bibnamefont {Pronyaev}},\ }\href
  {https://www.vant.ippe.ru/en/year2016/2/neutron-constants/1150-5.html} {}
  (\bibinfo {year} {2016}),\ \bibinfo {note}
  {https://www.vant.ippe.ru/en/year2016/2/neutron-constants/1150-5.html, \\
  https://www.vant.ippe.ru/images/pdf/2016/2-5.pdf}\BibitemShut {NoStop}%
\bibitem [{\citenamefont {Ge}\ \emph {et~al.}(2017)\citenamefont {Ge},
  \citenamefont {Wu}, \citenamefont {Chen},\ and\ \citenamefont
  {Xu}}]{CENDLProjectChinese2017}%
  \BibitemOpen
  \bibfield  {author} {\bibinfo {author} {\bibfnamefont {Z.}~\bibnamefont
  {Ge}}, \bibinfo {author} {\bibfnamefont {H.}~\bibnamefont {Wu}}, \bibinfo
  {author} {\bibfnamefont {G.}~\bibnamefont {Chen}}, \ and\ \bibinfo {author}
  {\bibfnamefont {R.}~\bibnamefont {Xu}},\ }\href {\doibase
  10.1051/epjconf/201714602002} {\bibfield  {journal} {\bibinfo  {journal} {EPJ
  Web of Conferences}\ }\textbf {\bibinfo {volume} {146}},\ \bibinfo {pages}
  {02002} (\bibinfo {year} {2017})},\ \bibinfo {note} {doi:
  10.1051/epjconf/201714602002}\BibitemShut {NoStop}%
\bibitem [{\citenamefont {Koning}\ and\ \citenamefont
  {Rochman}(2012)}]{TENDLkoningModernNuclearData2012}%
  \BibitemOpen
  \bibfield  {author} {\bibinfo {author} {\bibfnamefont {A.~J.}\ \bibnamefont
  {Koning}}\ and\ \bibinfo {author} {\bibfnamefont {D.}~\bibnamefont
  {Rochman}},\ }\href {\doibase 10.1016/j.nds.2012.11.002} {\bibfield
  {journal} {\bibinfo  {journal} {Nuclear Data Sheets}\ }\bibinfo {series}
  {Special {Issue} on {Nuclear} {Reaction} {Data}},\ \textbf {\bibinfo {volume}
  {113}},\ \bibinfo {pages} {2841} (\bibinfo {year} {2012})}\BibitemShut
  {NoStop}%
\bibitem [{\citenamefont {Koning}\ \emph {et~al.}(2019)\citenamefont {Koning},
  \citenamefont {Rochman}, \citenamefont {Sublet}, \citenamefont {Dzysiuk},
  \citenamefont {Fleming},\ and\ \citenamefont {van~der
  Marck}}]{koningTENDLCompleteNuclear2019}%
  \BibitemOpen
  \bibfield  {author} {\bibinfo {author} {\bibfnamefont {A.~J.}\ \bibnamefont
  {Koning}}, \bibinfo {author} {\bibfnamefont {D.}~\bibnamefont {Rochman}},
  \bibinfo {author} {\bibfnamefont {J.~C.}\ \bibnamefont {Sublet}}, \bibinfo
  {author} {\bibfnamefont {N.}~\bibnamefont {Dzysiuk}}, \bibinfo {author}
  {\bibfnamefont {M.}~\bibnamefont {Fleming}}, \ and\ \bibinfo {author}
  {\bibfnamefont {S.}~\bibnamefont {van~der Marck}},\ }\href {\doibase
  10.1016/j.nds.2019.01.002} {\bibfield  {journal} {\bibinfo  {journal}
  {Nuclear Data Sheets}\ }\bibinfo {series} {Special {Issue} on {Nuclear}
  {Reaction} {Data}},\ \textbf {\bibinfo {volume} {155}},\ \bibinfo {pages} {1}
  (\bibinfo {year} {2019})}\BibitemShut {NoStop}%
\bibitem [{\citenamefont {Hale}(2008)}]{EDA_2008}%
  \BibitemOpen
  \bibfield  {author} {\bibinfo {author} {\bibfnamefont {G.~M.}\ \bibnamefont
  {Hale}},\ }\href@noop {} {\bibfield  {journal} {\bibinfo  {journal} {Nuclear
  Data Sheets}\ }\textbf {\bibinfo {volume} {109}},\ \bibinfo {pages} {2812}
  (\bibinfo {year} {2008})},\ \bibinfo {note}
  {{https://doi.org/10.1016/j.nds.2008.11.015}}\BibitemShut {NoStop}%
\bibitem [{\citenamefont {Hale}\ and\ \citenamefont {Paris}(2015)}]{EDA_2015}%
  \BibitemOpen
  \bibfield  {author} {\bibinfo {author} {\bibfnamefont {G.~M.}\ \bibnamefont
  {Hale}}\ and\ \bibinfo {author} {\bibfnamefont {M.~W.}\ \bibnamefont
  {Paris}},\ }\href@noop {} {\bibfield  {journal} {\bibinfo  {journal} {Nuclear
  Data Sheets}\ }\textbf {\bibinfo {volume} {123}},\ \bibinfo {pages} {165}
  (\bibinfo {year} {2015})},\ \bibinfo {note}
  {{https://doi.org/10.1016/j.nds.2014.12.029}}\BibitemShut {NoStop}%
\bibitem [{\citenamefont {Larson}(2008)}]{SAMMY_2008}%
  \BibitemOpen
  \bibfield  {author} {\bibinfo {author} {\bibfnamefont {N.~M.}\ \bibnamefont
  {Larson}},\ }\href@noop {} {\emph {\bibinfo {title} {{U}pdated User's Guide
  for {SAMMY}}}},\ \bibinfo {organization} {ORNL} (\bibinfo {year} {2008}),\
  \bibinfo {note}
  {{https://info.ornl.gov/sites/publications/files/Pub13056.pdf}}\BibitemShut
  {NoStop}%
\bibitem [{\citenamefont {Azuma}\ \emph {et~al.}(2010)\citenamefont {Azuma},
  \citenamefont {Uberseder}, \citenamefont {Simpson}, \citenamefont {Brune},
  \citenamefont {Costantini}, \citenamefont {de~Boer}, \citenamefont {Görres},
  \citenamefont {Heil}, \citenamefont {LeBlanc}, \citenamefont {Ugalde},\ and\
  \citenamefont {Wiescher}}]{azumaAZUREMatrixCode2010}%
  \BibitemOpen
  \bibfield  {author} {\bibinfo {author} {\bibfnamefont {R.~E.}\ \bibnamefont
  {Azuma}}, \bibinfo {author} {\bibfnamefont {E.}~\bibnamefont {Uberseder}},
  \bibinfo {author} {\bibfnamefont {E.~C.}\ \bibnamefont {Simpson}}, \bibinfo
  {author} {\bibfnamefont {C.~R.}\ \bibnamefont {Brune}}, \bibinfo {author}
  {\bibfnamefont {H.}~\bibnamefont {Costantini}}, \bibinfo {author}
  {\bibfnamefont {R.~J.}\ \bibnamefont {de~Boer}}, \bibinfo {author}
  {\bibfnamefont {J.}~\bibnamefont {Görres}}, \bibinfo {author} {\bibfnamefont
  {M.}~\bibnamefont {Heil}}, \bibinfo {author} {\bibfnamefont {P.~J.}\
  \bibnamefont {LeBlanc}}, \bibinfo {author} {\bibfnamefont {C.}~\bibnamefont
  {Ugalde}}, \ and\ \bibinfo {author} {\bibfnamefont {M.}~\bibnamefont
  {Wiescher}},\ }\href {\doibase 10.1103/PhysRevC.81.045805} {\bibfield
  {journal} {\bibinfo  {journal} {Physical Review C}\ }\textbf {\bibinfo
  {volume} {81}},\ \bibinfo {pages} {045805} (\bibinfo {year}
  {2010})}\BibitemShut {NoStop}%
\bibitem [{\citenamefont {Kapur}\ and\ \citenamefont
  {Peierls}(1938)}]{Kapur_and_Peierls_1938}%
  \BibitemOpen
  \bibfield  {author} {\bibinfo {author} {\bibfnamefont {P.~L.}\ \bibnamefont
  {Kapur}}\ and\ \bibinfo {author} {\bibfnamefont {R.}~\bibnamefont
  {Peierls}},\ }\href@noop {} {\bibfield  {journal} {\bibinfo  {journal}
  {Proceedings of the Royal Society}\ }\textbf {\bibinfo {volume} {166}},\
  \bibinfo {pages} {277} (\bibinfo {year} {1938})},\ \bibinfo {note} {{DOI:
  10.1098/rspa.1938.0093}}\BibitemShut {NoStop}%
\bibitem [{\citenamefont {Bloch}(1957)}]{Bloch_1957}%
  \BibitemOpen
  \bibfield  {author} {\bibinfo {author} {\bibfnamefont {C.}~\bibnamefont
  {Bloch}},\ }\href@noop {} {\bibfield  {journal} {\bibinfo  {journal} {Nuclear
  Physics}\ }\textbf {\bibinfo {volume} {4}},\ \bibinfo {pages} {503} (\bibinfo
  {year} {1957})}\BibitemShut {NoStop}%
\bibitem [{\citenamefont {Lane}\ and\ \citenamefont
  {Thomas}(1958)}]{Lane_and_Thomas_1958}%
  \BibitemOpen
  \bibfield  {author} {\bibinfo {author} {\bibfnamefont {A.~M.}\ \bibnamefont
  {Lane}}\ and\ \bibinfo {author} {\bibfnamefont {R.~G.}\ \bibnamefont
  {Thomas}},\ }\href@noop {} {\bibfield  {journal} {\bibinfo  {journal}
  {Reviews of Modern Physics}\ }\textbf {\bibinfo {volume} {30}},\ \bibinfo
  {pages} {257} (\bibinfo {year} {1958})}\BibitemShut {NoStop}%
\bibitem [{\citenamefont {Wigner}\ and\ \citenamefont
  {Eisenbud}(1947)}]{Wigner_and_Eisenbud_1947}%
  \BibitemOpen
  \bibfield  {author} {\bibinfo {author} {\bibfnamefont {E.~P.}\ \bibnamefont
  {Wigner}}\ and\ \bibinfo {author} {\bibfnamefont {L.}~\bibnamefont
  {Eisenbud}},\ }\href@noop {} {\bibfield  {journal} {\bibinfo  {journal}
  {Physical Review}\ }\textbf {\bibinfo {volume} {72}},\ \bibinfo {pages} {29}
  (\bibinfo {year} {1947})},\ \bibinfo {note} {{DOI:
  https://doi.org/10.1103/PhysRev.72.29}}\BibitemShut {NoStop}%
\bibitem [{\citenamefont {Brune}(2002)}]{Brune_2002}%
  \BibitemOpen
  \bibfield  {author} {\bibinfo {author} {\bibfnamefont {C.~R.}\ \bibnamefont
  {Brune}},\ }\href {\doibase 10.1103/PhysRevC.66.044611} {\bibfield  {journal}
  {\bibinfo  {journal} {Phys. Rev. C}\ }\textbf {\bibinfo {volume} {66}},\
  \bibinfo {pages} {044611} (\bibinfo {year} {2002})}\BibitemShut {NoStop}%
\bibitem [{\citenamefont {Brune}\ \emph {et~al.}(2018)\citenamefont {Brune},
  \citenamefont {Hale},\ and\ \citenamefont
  {Paris}}]{Brune_Mark_monotonic_properties_of_shift_2018}%
  \BibitemOpen
  \bibfield  {author} {\bibinfo {author} {\bibfnamefont {C.~R.}\ \bibnamefont
  {Brune}}, \bibinfo {author} {\bibfnamefont {G.~M.}\ \bibnamefont {Hale}}, \
  and\ \bibinfo {author} {\bibfnamefont {M.~W.}\ \bibnamefont {Paris}},\ }\href
  {\doibase 10.1103/PhysRevC.97.024603} {\bibfield  {journal} {\bibinfo
  {journal} {Phys. Rev. C}\ }\textbf {\bibinfo {volume} {97}},\ \bibinfo
  {pages} {024603} (\bibinfo {year} {2018})}\BibitemShut {NoStop}%
\bibitem [{\citenamefont {Ducru}\ \emph {et~al.}(2020)\citenamefont {Ducru},
  \citenamefont {Hale}, \citenamefont {Paris}, \citenamefont {Sobes},\ and\
  \citenamefont
  {Forget}}]{Ducru_Scattering_Matrix_of_Complex_Wavenumbers_2019}%
  \BibitemOpen
  \bibfield  {author} {\bibinfo {author} {\bibfnamefont {P.}~\bibnamefont
  {Ducru}}, \bibinfo {author} {\bibfnamefont {G.}~\bibnamefont {Hale}},
  \bibinfo {author} {\bibfnamefont {M.}~\bibnamefont {Paris}}, \bibinfo
  {author} {\bibfnamefont {V.}~\bibnamefont {Sobes}}, \ and\ \bibinfo {author}
  {\bibfnamefont {B.}~\bibnamefont {Forget}},\ }\href@noop {} {\bibfield
  {journal} {\bibinfo  {journal} {Physical Review C}\ } (\bibinfo {year}
  {2020})},\ \bibinfo {note} {{submitted}}\BibitemShut {NoStop}%
\bibitem [{\citenamefont {Humblet}\ and\ \citenamefont
  {Rosenfeld}(1961)}]{Theory_of_Nuclear_Reactions_I_resonances_Humblet_and_Rosenfeld_1961}%
  \BibitemOpen
  \bibfield  {author} {\bibinfo {author} {\bibfnamefont {J.}~\bibnamefont
  {Humblet}}\ and\ \bibinfo {author} {\bibfnamefont {L.}~\bibnamefont
  {Rosenfeld}},\ }\href@noop {} {\bibfield  {journal} {\bibinfo  {journal}
  {Nuclear Physics}\ }\textbf {\bibinfo {volume} {26}},\ \bibinfo {pages} {529}
  (\bibinfo {year} {1961})}\BibitemShut {NoStop}%
\bibitem [{\citenamefont {Eden}\ and\ \citenamefont
  {Taylor}(1964)}]{Eden_and_Taylor}%
  \BibitemOpen
  \bibfield  {author} {\bibinfo {author} {\bibfnamefont {R.~J.}\ \bibnamefont
  {Eden}}\ and\ \bibinfo {author} {\bibfnamefont {J.~R.}\ \bibnamefont
  {Taylor}},\ }\href {\doibase 10.1103/PhysRev.133.B1575} {\bibfield  {journal}
  {\bibinfo  {journal} {Physical Review}\ }\textbf {\bibinfo {volume} {133}},\
  \bibinfo {pages} {B1575} (\bibinfo {year} {1964})}\BibitemShut {NoStop}%
\bibitem [{\citenamefont {Abramowitz}\ and\ \citenamefont
  {Stegun}(1964)}]{Abramowitz_and_Stegun}%
  \BibitemOpen
  \bibfield  {author} {\bibinfo {author} {\bibfnamefont {M.}~\bibnamefont
  {Abramowitz}}\ and\ \bibinfo {author} {\bibfnamefont {I.~A.}\ \bibnamefont
  {Stegun}},\ }in\ \href@noop {} {\emph {\bibinfo {booktitle} {{H}andbook of
  {M}athematical {F}unctions with {F}ormuls, {G}raphs, and {M}athematical
  {T}ables}}},\ \bibinfo {series and number} {United States Department of
  Commerce}\ (\bibinfo  {publisher} {National Bureau of Standards, Applied
  Mathematics Series},\ \bibinfo {year} {1964})\BibitemShut {NoStop}%
\bibitem [{\citenamefont {Olver}\ \emph {et~al.}()\citenamefont {Olver},
  \citenamefont {Olde~Daalhuis}, \citenamefont {Lozier}, \citenamefont
  {Schneider}, \citenamefont {Boisvert}, \citenamefont {Clark}, \citenamefont
  {Miller}, \citenamefont {Saunders},\ and\ \citenamefont {eds.}}]{NIST_DLMF}%
  \BibitemOpen
  \bibfield  {author} {\bibinfo {author} {\bibfnamefont {F.~W.~J.}\
  \bibnamefont {Olver}}, \bibinfo {author} {\bibfnamefont {A.~B.}\ \bibnamefont
  {Olde~Daalhuis}}, \bibinfo {author} {\bibfnamefont {D.~W.}\ \bibnamefont
  {Lozier}}, \bibinfo {author} {\bibfnamefont {B.~I.}\ \bibnamefont
  {Schneider}}, \bibinfo {author} {\bibfnamefont {R.~F.}\ \bibnamefont
  {Boisvert}}, \bibinfo {author} {\bibfnamefont {C.~W.}\ \bibnamefont {Clark}},
  \bibinfo {author} {\bibfnamefont {B.~R.}\ \bibnamefont {Miller}}, \bibinfo
  {author} {\bibfnamefont {B.~V.}\ \bibnamefont {Saunders}}, \ and\ \bibinfo
  {author} {\bibnamefont {eds.}},\ }\enquote {\bibinfo {title} {{NIST}
  {D}igital {L}ibrary of {M}athematical {F}unctions},}\ \bibinfo {note}
  {{https://dlmf.nist.gov}}\BibitemShut {NoStop}%
\bibitem [{\citenamefont {Powell}(1947)}]{Powell_1949}%
  \BibitemOpen
  \bibfield  {author} {\bibinfo {author} {\bibfnamefont {J.~L.}\ \bibnamefont
  {Powell}},\ }\href {\doibase 10.1103/PhysRev.72.626} {\bibfield  {journal}
  {\bibinfo  {journal} {Phys. Rev.}\ }\textbf {\bibinfo {volume} {72}},\
  \bibinfo {pages} {626} (\bibinfo {year} {1947})}\BibitemShut {NoStop}%
\bibitem [{\citenamefont {Thompson}(1986)}]{Thompson_1986}%
  \BibitemOpen
  \bibfield  {author} {\bibinfo {author} {\bibfnamefont {I.~J.}\ \bibnamefont
  {Thompson}},\ }\href@noop {} {\bibfield  {journal} {\bibinfo  {journal}
  {Journal of Computational Physics}\ }\textbf {\bibinfo {volume} {64}},\
  \bibinfo {pages} {490} (\bibinfo {year} {1986})}\BibitemShut {NoStop}%
\bibitem [{\citenamefont {Michel}(2006)}]{Michel_2006}%
  \BibitemOpen
  \bibfield  {author} {\bibinfo {author} {\bibfnamefont {N.}~\bibnamefont
  {Michel}},\ }\href@noop {} {\bibfield  {journal} {\bibinfo  {journal}
  {Computer Physics Communications}\ }\textbf {\bibinfo {volume} {176}},\
  \bibinfo {pages} {232} (\bibinfo {year} {2006})},\ \bibinfo {note}
  {{doi:10.1016/j.cpc.2006.10.004}}\BibitemShut {NoStop}%
\bibitem [{\citenamefont {Keller}\ \emph {et~al.}(1963)\citenamefont {Keller},
  \citenamefont {Rubinow},\ and\ \citenamefont
  {Goldstein}}]{Zeros_of_Hankel_Functions_and_Poles_1963}%
  \BibitemOpen
  \bibfield  {author} {\bibinfo {author} {\bibfnamefont {J.~B.}\ \bibnamefont
  {Keller}}, \bibinfo {author} {\bibfnamefont {S.~I.}\ \bibnamefont {Rubinow}},
  \ and\ \bibinfo {author} {\bibfnamefont {M.}~\bibnamefont {Goldstein}},\
  }\href@noop {} {\bibfield  {journal} {\bibinfo  {journal} {Journal of
  Mathematical Physics}\ }\textbf {\bibinfo {volume} {4}},\ \bibinfo {pages}
  {829} (\bibinfo {year} {1963})},\ \bibinfo {note}
  {https://doi.org/10.1063/1.1724325}\BibitemShut {NoStop}%
\bibitem [{\citenamefont
  {D{\"o}ring}(1966)}]{Complex_zeros_of_cylinder_functions_1966}%
  \BibitemOpen
  \bibfield  {author} {\bibinfo {author} {\bibfnamefont {B.}~\bibnamefont
  {D{\"o}ring}},\ }\href@noop {} {\bibfield  {journal} {\bibinfo  {journal}
  {Math. Comp.}\ }\textbf {\bibinfo {volume} {20}},\ \bibinfo {pages} {215}
  (\bibinfo {year} {1966})},\ \bibinfo {note}
  {https://doi.org/10.1090/S0025-5718-1966-0192632-1}\BibitemShut {NoStop}%
\bibitem [{\citenamefont {Cruz}\ and\ \citenamefont
  {Sesma}(1982)}]{Zeros_of_Hankel_Functions_1982}%
  \BibitemOpen
  \bibfield  {author} {\bibinfo {author} {\bibfnamefont {A.}~\bibnamefont
  {Cruz}}\ and\ \bibinfo {author} {\bibfnamefont {J.}~\bibnamefont {Sesma}},\
  }\href@noop {} {\bibfield  {journal} {\bibinfo  {journal} {Mathematics of
  Computation}\ }\textbf {\bibinfo {volume} {39}},\ \bibinfo {pages} {639}
  (\bibinfo {year} {1982})}\BibitemShut {NoStop}%
\bibitem [{\citenamefont {Cruz}\ and\ \citenamefont
  {Sesma}(1983)}]{Reduced_Logarithmic_Dervative_of_Hankel_functions_1983}%
  \BibitemOpen
  \bibfield  {author} {\bibinfo {author} {\bibfnamefont {A.}~\bibnamefont
  {Cruz}}\ and\ \bibinfo {author} {\bibfnamefont {J.}~\bibnamefont {Sesma}},\
  }\href@noop {} {\bibfield  {journal} {\bibinfo  {journal} {Mathematics of
  Computation}\ }\textbf {\bibinfo {volume} {41}},\ \bibinfo {pages} {597}
  (\bibinfo {year} {1983})},\ \bibinfo {note} {{DOI:}
  10.2307/2007696}\BibitemShut {NoStop}%
\bibitem [{\citenamefont {Kerimov}\ and\ \citenamefont
  {Skorokhodov}(1985)}]{Calculating_zeros_of_Hankel_functions_USSR_1985}%
  \BibitemOpen
  \bibfield  {author} {\bibinfo {author} {\bibfnamefont {M.~K.}\ \bibnamefont
  {Kerimov}}\ and\ \bibinfo {author} {\bibfnamefont {S.~L.}\ \bibnamefont
  {Skorokhodov}},\ }\href@noop {} {\bibfield  {journal} {\bibinfo  {journal}
  {USSR Computational Mathematics and Mathematical Physics}\ }\textbf {\bibinfo
  {volume} {35}},\ \bibinfo {pages} {26} (\bibinfo {year} {1985})},\ \bibinfo
  {note} {https://doi.org/10.1016/0041-5553(85)90005-9}\BibitemShut {NoStop}%
\bibitem [{\citenamefont
  {Rogers}(2005)}]{Partial_fraction_expansion_Bessel_2005}%
  \BibitemOpen
  \bibfield  {author} {\bibinfo {author} {\bibfnamefont {M.~D.}\ \bibnamefont
  {Rogers}},\ }\href@noop {} {\bibfield  {journal} {\bibinfo  {journal}
  {Journal of Mathematical Physics}\ }\textbf {\bibinfo {volume} {46}},\
  \bibinfo {pages} {043509} (\bibinfo {year} {2005})},\ \bibinfo {note}
  {https://doi.org/10.1063/1.1866222}\BibitemShut {NoStop}%
\bibitem [{\citenamefont {Esparza}\ \emph {et~al.}(1999)\citenamefont
  {Esparza}, \citenamefont {L\'opez},\ and\ \citenamefont
  {Sesma}}]{Zeros_of_Whittaker_function_1999}%
  \BibitemOpen
  \bibfield  {author} {\bibinfo {author} {\bibfnamefont {J.}~\bibnamefont
  {Esparza}}, \bibinfo {author} {\bibfnamefont {J.~L.}\ \bibnamefont
  {L\'opez}}, \ and\ \bibinfo {author} {\bibfnamefont {J.}~\bibnamefont
  {Sesma}},\ }\href@noop {} {\bibfield  {journal} {\bibinfo  {journal} {IMA
  Journal of Applied Mathematics}\ }\textbf {\bibinfo {volume} {63}},\ \bibinfo
  {pages} {71} (\bibinfo {year} {1999})},\ \bibinfo {note}
  {https://doi.org/10.1093/imamat/63.1.71}\BibitemShut {NoStop}%
\bibitem [{\citenamefont {Gabutti}\ and\ \citenamefont
  {Gatteschi}(2001)}]{New_Asymptotics_Whittaker_zeros_2001}%
  \BibitemOpen
  \bibfield  {author} {\bibinfo {author} {\bibfnamefont {B.}~\bibnamefont
  {Gabutti}}\ and\ \bibinfo {author} {\bibfnamefont {L.}~\bibnamefont
  {Gatteschi}},\ }\href@noop {} {\bibfield  {journal} {\bibinfo  {journal}
  {Numerical Algorithms}\ }\textbf {\bibinfo {volume} {28}},\ \bibinfo {pages}
  {159} (\bibinfo {year} {2001})},\ \bibinfo {note}
  {https://doi.org/10.1023/A:1014094732209}\BibitemShut {NoStop}%
\bibitem [{\citenamefont {Barker}(1972)}]{Boundary_condition_Barker_1972}%
  \BibitemOpen
  \bibfield  {author} {\bibinfo {author} {\bibfnamefont {F.~C.}\ \bibnamefont
  {Barker}},\ }\href@noop {} {\bibfield  {journal} {\bibinfo  {journal}
  {Australian Journal of Physics}\ }\textbf {\bibinfo {volume} {25}},\ \bibinfo
  {pages} {341} (\bibinfo {year} {1972})},\ \bibinfo {note}
  {{https://doi.org/10.1071/PH720341}}\BibitemShut {NoStop}%
\bibitem [{\citenamefont {Descouvemont}\ and\ \citenamefont
  {Baye}(2010)}]{Descouvemont_2010}%
  \BibitemOpen
  \bibfield  {author} {\bibinfo {author} {\bibfnamefont {P.}~\bibnamefont
  {Descouvemont}}\ and\ \bibinfo {author} {\bibfnamefont {D.}~\bibnamefont
  {Baye}},\ }\href@noop {} {\bibfield  {journal} {\bibinfo  {journal} {Reports
  on progress in physics}\ }\textbf {\bibinfo {volume} {73}},\ \bibinfo {pages}
  {036301} (\bibinfo {year} {2010})},\ \bibinfo {note}
  {{http://dx.doi.org/10.1088/0034-4885/73/3/036301}}\BibitemShut {NoStop}%
\bibitem [{\citenamefont {Angulo}\ and\ \citenamefont
  {Descouvemont}(2000)}]{anguloMatrixAnalysisInterference2000}%
  \BibitemOpen
  \bibfield  {author} {\bibinfo {author} {\bibfnamefont {C.}~\bibnamefont
  {Angulo}}\ and\ \bibinfo {author} {\bibfnamefont {P.}~\bibnamefont
  {Descouvemont}},\ }\href {\doibase 10.1103/PhysRevC.61.064611} {\bibfield
  {journal} {\bibinfo  {journal} {Physical Review C}\ }\textbf {\bibinfo
  {volume} {61}},\ \bibinfo {pages} {064611} (\bibinfo {year} {2000})},\
  \bibinfo {note} {publisher: American Physical Society}\BibitemShut {NoStop}%
\bibitem [{\citenamefont {Siegert}(1939)}]{Siegert}%
  \BibitemOpen
  \bibfield  {author} {\bibinfo {author} {\bibfnamefont {A.~J.~F.}\
  \bibnamefont {Siegert}},\ }\href {\doibase 10.1103/PhysRev.56.750} {\bibfield
   {journal} {\bibinfo  {journal} {Physical Review}\ }\textbf {\bibinfo
  {volume} {56}},\ \bibinfo {pages} {750} (\bibinfo {year} {1939})}\BibitemShut
  {NoStop}%
\bibitem [{\citenamefont {Humblet}(1952)}]{Humblet_thesis}%
  \BibitemOpen
  \bibfield  {author} {\bibinfo {author} {\bibfnamefont {J.}~\bibnamefont
  {Humblet}},\ }\emph {\bibinfo {title} {Definition of virtual levels of atomic
  nuclei and establishment of the scattering equation}},\ \href@noop {} {Ph.D.
  thesis} (\bibinfo {year} {1952})\BibitemShut {NoStop}%
\bibitem [{\citenamefont
  {Rosenfeld}(1961)}]{Theory_of_Nuclear_Reactions_II_optical_model_Rosenfeld_1961}%
  \BibitemOpen
  \bibfield  {author} {\bibinfo {author} {\bibfnamefont {L.}~\bibnamefont
  {Rosenfeld}},\ }\href@noop {} {\bibfield  {journal} {\bibinfo  {journal}
  {Nuclear Physics}\ }\textbf {\bibinfo {volume} {26}},\ \bibinfo {pages} {594}
  (\bibinfo {year} {1961})}\BibitemShut {NoStop}%
\bibitem [{\citenamefont
  {Humblet}(1962)}]{Theory_of_Nuclear_Reactions_III_Channel_radii_Humblet_1961_channel_Radii}%
  \BibitemOpen
  \bibfield  {author} {\bibinfo {author} {\bibfnamefont {J.}~\bibnamefont
  {Humblet}},\ }\href@noop {} {\bibfield  {journal} {\bibinfo  {journal}
  {Nuclear Physics}\ }\textbf {\bibinfo {volume} {31}},\ \bibinfo {pages} {544}
  (\bibinfo {year} {1962})}\BibitemShut {NoStop}%
\bibitem [{\citenamefont
  {Humblet}(1964{\natexlab{a}})}]{Theory_of_Nuclear_Reactions_IV_Coulomb_Humblet_1964}%
  \BibitemOpen
  \bibfield  {author} {\bibinfo {author} {\bibfnamefont {J.}~\bibnamefont
  {Humblet}},\ }\href@noop {} {\bibfield  {journal} {\bibinfo  {journal}
  {Nuclear Physics}\ }\textbf {\bibinfo {volume} {50}},\ \bibinfo {pages} {1}
  (\bibinfo {year} {1964}{\natexlab{a}})}\BibitemShut {NoStop}%
\bibitem [{\citenamefont
  {Jeukenne}(1965)}]{Theory_of_Nuclear_Reactions_V_low_energy_penetrations_Jeukenne_1965}%
  \BibitemOpen
  \bibfield  {author} {\bibinfo {author} {\bibfnamefont {J.~P.}\ \bibnamefont
  {Jeukenne}},\ }\href@noop {} {\bibfield  {journal} {\bibinfo  {journal}
  {Nuclear Physics}\ }\textbf {\bibinfo {volume} {58}},\ \bibinfo {pages} {1}
  (\bibinfo {year} {1965})}\BibitemShut {NoStop}%
\bibitem [{\citenamefont
  {Humblet}(1964{\natexlab{b}})}]{Theory_of_Nuclear_Reactions_VI_unitarity_Humblet_1964}%
  \BibitemOpen
  \bibfield  {author} {\bibinfo {author} {\bibfnamefont {J.}~\bibnamefont
  {Humblet}},\ }\href@noop {} {\bibfield  {journal} {\bibinfo  {journal}
  {Nuclear Physics}\ }\textbf {\bibinfo {volume} {57}},\ \bibinfo {pages} {386}
  (\bibinfo {year} {1964}{\natexlab{b}})}\BibitemShut {NoStop}%
\bibitem [{\citenamefont
  {Mahaux}(1965{\natexlab{a}})}]{Theory_of_Nuclear_Reactions_VII_Photons_Mahaux_1965}%
  \BibitemOpen
  \bibfield  {author} {\bibinfo {author} {\bibfnamefont {C.}~\bibnamefont
  {Mahaux}},\ }\href@noop {} {\bibfield  {journal} {\bibinfo  {journal}
  {Nuclear Physics}\ }\textbf {\bibinfo {volume} {68}},\ \bibinfo {pages} {481}
  (\bibinfo {year} {1965}{\natexlab{a}})}\BibitemShut {NoStop}%
\bibitem [{\citenamefont
  {Rosenfeld}(1965)}]{Theory_of_Nuclear_Reactions_VIII_evolutions_Rosenfeld_1965}%
  \BibitemOpen
  \bibfield  {author} {\bibinfo {author} {\bibfnamefont {L.}~\bibnamefont
  {Rosenfeld}},\ }\href@noop {} {\bibfield  {journal} {\bibinfo  {journal}
  {Nuclear Physics}\ }\textbf {\bibinfo {volume} {70}},\ \bibinfo {pages} {1}
  (\bibinfo {year} {1965})}\BibitemShut {NoStop}%
\bibitem [{\citenamefont
  {Mahaux}(1965{\natexlab{b}})}]{Theory_of_Nuclear_Reactions_IX_few_levels_approx_Mahaux_1965}%
  \BibitemOpen
  \bibfield  {author} {\bibinfo {author} {\bibfnamefont {C.}~\bibnamefont
  {Mahaux}},\ }\href@noop {} {\bibfield  {journal} {\bibinfo  {journal}
  {Nuclear Physics}\ }\textbf {\bibinfo {volume} {71}},\ \bibinfo {pages} {241}
  (\bibinfo {year} {1965}{\natexlab{b}})}\BibitemShut {NoStop}%
\bibitem [{\citenamefont {Hale}\ \emph {et~al.}(1987)\citenamefont {Hale},
  \citenamefont {Brown},\ and\ \citenamefont {Jarmie}}]{Hale_1987}%
  \BibitemOpen
  \bibfield  {author} {\bibinfo {author} {\bibfnamefont {G.~M.}\ \bibnamefont
  {Hale}}, \bibinfo {author} {\bibfnamefont {R.~E.}\ \bibnamefont {Brown}}, \
  and\ \bibinfo {author} {\bibfnamefont {N.}~\bibnamefont {Jarmie}},\
  }\href@noop {} {\bibfield  {journal} {\bibinfo  {journal} {Physical Review
  Letters}\ }\textbf {\bibinfo {volume} {59}},\ \bibinfo {pages} {763}
  (\bibinfo {year} {1987})}\BibitemShut {NoStop}%
\bibitem [{\citenamefont {Morgan}\ and\ \citenamefont
  {Pennington}(1987)}]{anti-Hale_1987}%
  \BibitemOpen
  \bibfield  {author} {\bibinfo {author} {\bibfnamefont {D.}~\bibnamefont
  {Morgan}}\ and\ \bibinfo {author} {\bibfnamefont {M.~R.}\ \bibnamefont
  {Pennington}},\ }\href@noop {} {\bibfield  {journal} {\bibinfo  {journal}
  {Physical Review Letters}\ }\textbf {\bibinfo {volume} {59}},\ \bibinfo
  {pages} {2818} (\bibinfo {year} {1987})}\BibitemShut {NoStop}%
\bibitem [{\citenamefont {Craven}(1969)}]{Craven_complex_symmetric_1969}%
  \BibitemOpen
  \bibfield  {author} {\bibinfo {author} {\bibfnamefont {B.}~\bibnamefont
  {Craven}},\ }\href@noop {} {\bibfield  {journal} {\bibinfo  {journal}
  {Journal of the Australian Mathematical Society}\ }\textbf {\bibinfo {volume}
  {10}},\ \bibinfo {pages} {341} (\bibinfo {year} {1969})},\ \bibinfo {note}
  {https://doi.org/10.1017/S1446788700007588}\BibitemShut {NoStop}%
\bibitem [{\citenamefont {Cullum}\ and\ \citenamefont
  {Willoughby}(1985)}]{Nondefective_complex_symmetric_matrices_1985}%
  \BibitemOpen
  \bibfield  {author} {\bibinfo {author} {\bibfnamefont {J.}~\bibnamefont
  {Cullum}}\ and\ \bibinfo {author} {\bibfnamefont {R.}~\bibnamefont
  {Willoughby}},\ }\enquote {\bibinfo {title} {{N}ondefective {C}omplex
  {S}ymmetric {M}atrices},}\ \ (\bibinfo  {publisher} {Birkhauser Boston
  (1985)},\ \bibinfo {address} {978-1-4684-9178-4},\ \bibinfo {year} {1985})\
  \bibinfo {note} {https://doi.org/10.1007/978-1-4684-9178-4\_7}\BibitemShut
  {NoStop}%
\bibitem [{\citenamefont {Bunse-Gerstner}\ and\ \citenamefont
  {Gragg}(1988)}]{complex_symmetric_matrix_SVD_1988}%
  \BibitemOpen
  \bibfield  {author} {\bibinfo {author} {\bibfnamefont {A.}~\bibnamefont
  {Bunse-Gerstner}}\ and\ \bibinfo {author} {\bibfnamefont {W.~B.}\
  \bibnamefont {Gragg}},\ }\href@noop {} {\bibfield  {journal} {\bibinfo
  {journal} {Journal of Computational and Applied Mathematics}\ }\textbf
  {\bibinfo {volume} {21}},\ \bibinfo {pages} {41} (\bibinfo {year}
  {1988})}\BibitemShut {NoStop}%
\bibitem [{\citenamefont {Scott}(1993)}]{Scott_complex_symmetric_1993}%
  \BibitemOpen
  \bibfield  {author} {\bibinfo {author} {\bibfnamefont {N.~H.}\ \bibnamefont
  {Scott}},\ }\href {\doibase 10.1098/rspa.1993.0083} {\bibfield  {journal}
  {\bibinfo  {journal} {Proceedings of the Royal Society of London. Series A:
  Mathematical and Physical Sciences}\ }\textbf {\bibinfo {volume} {441}},\
  \bibinfo {pages} {625} (\bibinfo {year} {1993})},\ \bibinfo {note}
  {https://royalsocietypublishing.org/\\doi/10.1098/rspa.1993.0083}\BibitemShut
  {NoStop}%
\bibitem [{\citenamefont {Bar-On}\ and\ \citenamefont
  {Ryaboy}(1997)}]{fast_diag_of_complex_symmetric_matrices_for_quantum_applications_1997}%
  \BibitemOpen
  \bibfield  {author} {\bibinfo {author} {\bibfnamefont {I.}~\bibnamefont
  {Bar-On}}\ and\ \bibinfo {author} {\bibfnamefont {V.}~\bibnamefont
  {Ryaboy}},\ }\href@noop {} {\bibfield  {journal} {\bibinfo  {journal} {SIAM
  Journal of Scientific Computing}\ }\textbf {\bibinfo {volume} {18}},\
  \bibinfo {pages} {1412} (\bibinfo {year} {1997})},\ \bibinfo {note}
  {https://doi.org/10.1137/S1064827594269056}\BibitemShut {NoStop}%
\bibitem [{\citenamefont {Garcia}\ and\ \citenamefont
  {Punitar}(2005)}]{Complex_symmetric_operators_2005}%
  \BibitemOpen
  \bibfield  {author} {\bibinfo {author} {\bibfnamefont {S.}~\bibnamefont
  {Garcia}}\ and\ \bibinfo {author} {\bibfnamefont {M.}~\bibnamefont
  {Punitar}},\ }\href@noop {} {\bibfield  {journal} {\bibinfo  {journal}
  {Transactions of the American Mathematical Society}\ }\textbf {\bibinfo
  {volume} {358}},\ \bibinfo {pages} {1285} (\bibinfo {year} {2005})},\
  \bibinfo {note} {s 0002-9947(05)03742-6}\BibitemShut {NoStop}%
\bibitem [{\citenamefont {Garcia}\ and\ \citenamefont
  {Punitar}(2007)}]{Complex_symmetric_operators_II_2007}%
  \BibitemOpen
  \bibfield  {author} {\bibinfo {author} {\bibfnamefont {S.}~\bibnamefont
  {Garcia}}\ and\ \bibinfo {author} {\bibfnamefont {M.}~\bibnamefont
  {Punitar}},\ }\href@noop {} {\bibfield  {journal} {\bibinfo  {journal}
  {Transactions of the American Mathematical Society}\ }\textbf {\bibinfo
  {volume} {356}},\ \bibinfo {pages} {3913} (\bibinfo {year} {2007})},\
  \bibinfo {note} {s 0002-9947(07)04213-4}\BibitemShut {NoStop}%
\bibitem [{\citenamefont {Voss}(2014)}]{Handbook_of_linear_algebra}%
  \BibitemOpen
  \bibfield  {author} {\bibinfo {author} {\bibfnamefont {H.}~\bibnamefont
  {Voss}},\ }in\ \href {\doibase https://doi.org/10.1201/b16113} {\emph
  {\bibinfo {booktitle} {Handbook of Linear Algebra, 2nd edition}}},\ \bibinfo
  {series and number} {Nonlinear Eigenvalue Problems},\ \bibinfo {editor}
  {edited by\ \bibinfo {editor} {\bibfnamefont {L.}~\bibnamefont {Hogben}}}\
  (\bibinfo  {publisher} {Chapman and Hall/CRC, New York},\ \bibinfo {year}
  {2014})\ Chap.\ \bibinfo {chapter} {115}, pp.\ \bibinfo {pages} {1--24},\
  \bibinfo {note}
  {https://www.mat.tuhh.de/forschung/rep/rep164.pdf}\BibitemShut {NoStop}%
\bibitem [{\citenamefont {Teichmann}\ and\ \citenamefont
  {Wigner}(1952)}]{Teichmann_and_Wigner_1952}%
  \BibitemOpen
  \bibfield  {author} {\bibinfo {author} {\bibfnamefont {T.}~\bibnamefont
  {Teichmann}}\ and\ \bibinfo {author} {\bibfnamefont {E.~P.}\ \bibnamefont
  {Wigner}},\ }\href@noop {} {\bibfield  {journal} {\bibinfo  {journal}
  {Physical Review}\ }\textbf {\bibinfo {volume} {87}},\ \bibinfo {pages} {123}
  (\bibinfo {year} {1952})},\ \bibinfo {note} {{DOI:
  https://doi.org/10.1103/PhysRev.87.123}}\BibitemShut {NoStop}%
\bibitem [{\citenamefont {Reich}\ and\ \citenamefont
  {Moore}(1958)}]{Reich_Moore_1958}%
  \BibitemOpen
  \bibfield  {author} {\bibinfo {author} {\bibfnamefont {C.~W.}\ \bibnamefont
  {Reich}}\ and\ \bibinfo {author} {\bibfnamefont {M.~S.}\ \bibnamefont
  {Moore}},\ }\href {\doibase 10.1103/PhysRev.111.929} {\bibfield  {journal}
  {\bibinfo  {journal} {Phys. Rev.}\ }\textbf {\bibinfo {volume} {111}},\
  \bibinfo {pages} {929} (\bibinfo {year} {1958})}\BibitemShut {NoStop}%
\bibitem [{\citenamefont {Arbanas}\ \emph {et~al.}(2017)\citenamefont
  {Arbanas}, \citenamefont {Sobes}, \citenamefont {Holcomb}, \citenamefont
  {Ducru}, \citenamefont {Pigni},\ and\ \citenamefont
  {Wiarda}}]{Generalized_Reich_Moore_2017}%
  \BibitemOpen
  \bibfield  {author} {\bibinfo {author} {\bibfnamefont {G.}~\bibnamefont
  {Arbanas}}, \bibinfo {author} {\bibfnamefont {V.}~\bibnamefont {Sobes}},
  \bibinfo {author} {\bibfnamefont {A.}~\bibnamefont {Holcomb}}, \bibinfo
  {author} {\bibfnamefont {P.}~\bibnamefont {Ducru}}, \bibinfo {author}
  {\bibfnamefont {M.}~\bibnamefont {Pigni}}, \ and\ \bibinfo {author}
  {\bibfnamefont {D.}~\bibnamefont {Wiarda}},\ }\href@noop {} {\bibfield
  {journal} {\bibinfo  {journal} {EPJ Web of Conferences}\ }\textbf {\bibinfo
  {volume} {146}},\ \bibinfo {pages} {1} (\bibinfo {year} {2017})},\ \bibinfo
  {note} {{https://doi.org/10.1051/epjconf/201714612006}}\BibitemShut {NoStop}%
\bibitem [{\citenamefont {Ducru}\ \emph {et~al.}(2016)\citenamefont {Ducru},
  \citenamefont {Sobes}, \citenamefont {Forget},\ and\ \citenamefont
  {Smith}}]{Ducru_PHYSOR_conversion_2016}%
  \BibitemOpen
  \bibfield  {author} {\bibinfo {author} {\bibfnamefont {P.}~\bibnamefont
  {Ducru}}, \bibinfo {author} {\bibfnamefont {V.}~\bibnamefont {Sobes}},
  \bibinfo {author} {\bibfnamefont {B.}~\bibnamefont {Forget}}, \ and\ \bibinfo
  {author} {\bibfnamefont {K.}~\bibnamefont {Smith}},\ }in\ \href@noop {}
  {\emph {\bibinfo {booktitle} {Proceedings of PHYSOR 2016}}}\ (\bibinfo
  {publisher} {American Nuclear Society},\ \bibinfo {address} {555 North
  Kensington Avenue La Grange Park, Illinois 60526 U.S.A.},\ \bibinfo {year}
  {2016})\ pp.\ \bibinfo {pages} {2138--2150}\BibitemShut {NoStop}%
\bibitem [{\citenamefont {Sobes}\ \emph {et~al.}(2020)\citenamefont {Sobes},
  \citenamefont {Ducru}, \citenamefont {Alhajri}, \citenamefont {Ganapol},\
  and\ \citenamefont {Forget}}]{Analytic_Benchmark_1_2019}%
  \BibitemOpen
  \bibfield  {author} {\bibinfo {author} {\bibfnamefont {V.}~\bibnamefont
  {Sobes}}, \bibinfo {author} {\bibfnamefont {P.}~\bibnamefont {Ducru}},
  \bibinfo {author} {\bibfnamefont {A.}~\bibnamefont {Alhajri}}, \bibinfo
  {author} {\bibfnamefont {B.~D.}\ \bibnamefont {Ganapol}}, \ and\ \bibinfo
  {author} {\bibfnamefont {B.}~\bibnamefont {Forget}},\ }\href@noop {}
  {\bibfield  {journal} {\bibinfo  {journal} {Nuclear Science and Engineering}\
  } (\bibinfo {year} {2020})},\ \bibinfo {note} {{submitted}}\BibitemShut
  {NoStop}%
\bibitem [{\citenamefont {Paris}\ and\ \citenamefont
  {de~Boer}()}]{website:2016RMatrixWorkshop}%
  \BibitemOpen
  \bibfield  {author} {\bibinfo {author} {\bibfnamefont {M.}~\bibnamefont
  {Paris}}\ and\ \bibinfo {author} {\bibfnamefont {J.~R.}\ \bibnamefont
  {de~Boer}},\ }\href {https://indico.fnal.gov/event/10228/} {\enquote
  {\bibinfo {title} {The 2016 {R}-{Matrix} {Workshop} on {Methods} and
  {Applications}, {S}anta {F}e, {N}ew {M}exico, {U.S.A.}}}\ }\bibinfo {note}
  {Https://indico.fnal.gov/event/10228/}\BibitemShut {NoStop}%
\end{thebibliography}%
\end{document}